\documentclass[12pt,a4paper]{article}

\usepackage[ruled]{algorithm2e} 

\SetAlFnt{\small}
\SetAlCapFnt{\small}
\SetAlCapNameFnt{\small}

\SetAlCapHSkip{0pt}
\IncMargin{-\parindent}

\usepackage{times}
\usepackage{graphicx}
\usepackage{amssymb}
\usepackage{dsfont}
\usepackage{complexity}
\usepackage{bbold}

\usepackage{amsmath}
\usepackage{amsthm}
\usepackage{amsfonts}
\usepackage{enumerate}
\usepackage{hyperref}
\usepackage[margin=1in]{geometry}

\usepackage{tikz}

\newtheorem{theorem}{Theorem}
\newtheorem*{theorem*}{Theorem}
\newtheorem{definition}{Definition}
\newtheorem{lemma}[theorem]{Lemma}

\newtheorem{corollary}[theorem]{Corollary}

\newtheorem{example}{Example}

\newtheorem{prop}[theorem]{Proposition}

\newcommand{\weaklypref}{\succeq}

\newcommand{\pref}{\succ}
\newcommand{\tie}{\simeq}
\newcommand{\agentset}{N}

\DeclareMathOperator{\worst}{worst}
\DeclareMathOperator{\secondworst}{second\_worst}

\DeclareMathOperator{\type}{type}
\DeclareMathOperator{\subtype}{subtype}
\DeclareMathOperator{\assigned}{assigned}

\DeclareMathOperator{\exworst}{worst^{ex}}
\DeclareMathOperator{\cls}{class}
\DeclareMathOperator{\ext}{ex}

\newcommand{\cplxty}[1]{\textup{\textsf{#1}}}

\title{Solving Hard Stable Matching Problems \\ Involving Groups of Similar Agents}
\author{
   Kitty Meeks\\
   School of Computing Science\\
   University of Glasgow\\
   Glasgow, UK\\
   \emph{kitty.meeks@glasgow.ac.uk}
   \and Baharak Rastegari\\
   Department of Electronics and Computer Science\\
   University of Southampton\\
   Southampton, UK\\
   \emph{b.rastegari@soton.ac.uk}
}

\begin{document}		
\date{}
\maketitle

\begin{abstract}
Many important stable matching problems are known to be \cplxty{NP}-hard, even when strong restrictions are placed on the input. In this paper we seek to identify structural properties of instances of stable matching problems which will allow us to design efficient algorithms using elementary techniques. We focus on the setting in which all agents involved in some matching problem can be partitioned into $k$ different \emph{types}, where the type of an agent determines his or her preferences, and agents have preferences over types (which may be refined by more detailed preferences within a single type). 
This situation would arise in practice if agents form preferences solely based on some small collection of agents' attributes. We also consider a generalisation in which each agent may consider some small collection of other agents to be exceptional, and rank these in a way that is not consistent with their types; this could happen in practice if agents have prior contact with a small number of candidates. We show that (for the case without exceptions), several well-studied \cplxty{NP}-hard stable matching problems including \textsc{Max SMTI} (that of finding the maximum cardinality stable matching in an instance of stable marriage with ties and incomplete lists) 
belong to the parameterised complexity class \cplxty{FPT} when parameterised by the number of different types of agents needed to describe the instance. 
For \textsc{Max SMTI} this tractability result can be extended to the setting in which each agent promotes at most one ``exceptional'' candidate to the top of his/her list (when preferences within types are not refined), but the problem remains \cplxty{NP}-hard if preference lists can contain two or more exceptions and the exceptional candidates can be placed anywhere in the preference lists, even if the number of types is bounded by a constant.
\end{abstract}

\section{Introduction}\label{sec:introduction}
Matching problems occur in various applications and scenarios such as the assignment of children to schools, college students to dorm rooms, junior doctors to hospitals, and so on. In all the aforementioned, and similar, problems, it is understood that the participants (which we will refer to as agents) have preferences over other agents, or subsets of agents. The majority of the literature assumes that these preferences are ordinal, and that is the assumption we make in this work as well. Moreover, it is widely accepted that a ``good'' and ``reasonable'' solution to a matching problem must be \emph{stable}, where stability is defined according to the context of the problem at hand. Intuitively speaking, a stable solution guarantees that no subset of agents find it in their best interest to leave the prescribed solution and seek an assignment amongst themselves.  Unfortunately, many interesting and important stable matching problems are known to be \cplxty{NP}-hard even for highly restricted cases.

Most hardness results in the study of stable matching problems are based on the premise that agents may have arbitrary preference lists.  In practice, however, agents' preferences are likely to be more structured and correlated. In this work, we consider a setting where agents can be grouped into $k$ different ``types'', where the type of an agent determines (most of) the agent's preferences, and also how s/he is compared against other agents. If we allow each agent to have a different type, this setup does not place any restrictions on the instance. However, we are interested in the setting where the number of types required to describe an instance is much smaller than the total number of agents: such a situation would arise in practice if agents derive their preferences by considering some small collection of attributes of other agents (where each of these attributes has a small number of possible values). As an example, consider the hospitals-residents job market in which junior doctors or residents are to be assigned to hospital posts. It is highly plausible that agents in this market base their preferences on a small collection of candidates' attributes. E.g. hospitals might rank applicants based on their exam grade, interview score, etc, and junior doctors might rank the hospitals based on the programs they offer, their reputation, their geographic location, etc. Similar observations have been made in the literature (see \cite{BGR08,CGM12}) regarding stable marriage market and stable roommates market respectively, where agents form preferences based on candidates' attributes such as attractiveness, intelligence, wealth, etc. In this setting, we obtain our set of types by first partitioning agents by their profile of attributes, then further partitioning each set by the preference list over other profiles of attributes.  Note that the number of possible preference lists depends only on the number of possible attribute profiles.

The notion of types is also useful if we are interested in a relaxation of stability, where agents are only willing to form a private arrangement with a partner who is distinctly superior to their current partner with respect to an important characteristic. It is reasonable to assume that in practice a certain amount of effort is required by both agents in a blocking pair to make a private arrangement outside the matching, and so agents are unlikely to make this effort for a very small improvement in their utility. Suppose that an agent is only willing to make the effort to form a private arrangement if it results in a significantly better partner, specifically one which has a significantly better value for the most important attribute. In this case we only need to consider attributes which are the most important for at least one agent, and moreover we might reasonably consider only a small number of categories of values for these attributes.

The simplest model (discussed in Section \ref{sec:typed} is to assume that the agents of the same type are completely indistinguishable. That is, they have the same preference lists, and every other agent that finds their type acceptable is indifferent between them. Equivalently, we can say that each type has a preference ordering over types of the candidates, which need not be complete or strict. We also consider two generalisations of this basic model.  In the first generalisation (discussed in Section \ref{sec:refined-typed}), agents no longer have to be indifferent between agents of the same type: they can refine their preference lists arbitrarily (so that agents of the same type still occur consecutively), so long as the preference lists for agents of the same type are identical.  In the second generalisation (discussed in Section \ref{sec:exceptions}, we instead enrich the basic model by allowing each agent to consider some small number of other agents ``exceptional'': such agents can appear anywhere in the preference list, regardless of their type.  This situation with exceptions might arise in practice if, for example, an agent knows some of the candidates directly or through a third-party connection and, based on this additional information, ranks them disregarding their type, e.g. at the top or bottom of his/her preference list.

We show that we can solve some of the most important hard stable matching problems efficiently from the point of view of parameterised complexity when the number of types is taken as the parameter. We obtain our results by reducing a given hard stable matching problem to the problem of solving
a number of instances of a tractable
problem, where this number is a function of the number of types.  
Some of our results rely on the fixed parameter tractability of Integer Linear Programming (ILP). We also demonstrate that, by imposing further restrictions, some of these hard stable matching problems become polynomial-time solvable.

\section{Preliminaries}\label{sec:preliminaries}

In this section we introduce the main concepts we use in the paper; we begin with some definitions, then provide a brief introduction to parameterised complexity, before describing existing results on Integer Programming.

\subsection{Definitions}\label{sec:definitions}

In this section we provide the key definitions for the stable matching settings we study; for further background and terminology we refer the reader to \cite{Man13}.

Perhaps the most widely studied matching problem is the \textbf{\emph{Stable Marriage problem (SM)}}. In an instance of SM we have two disjoint sets of agents, men and women, each having a strict preference ordering over the individuals of the opposite sex (candidates). \emph{Stable Marriage with Incomplete lists (SMI)}, \textbf{\emph{Stable Marriage with Ties (SMT)}} and \textbf{\emph{Stable Marriage with Ties and Incomplete lists (SMTI)}} are generalisations of SM where agents are permitted to declare some candidates unacceptable, are allowed to express indifference between two or more candidates, or both, respectively. The \textbf{\emph{Stable Roommates problem (SR)}} is a non-bipartite generalization of SM. Extensions allowing for incomplete lists and indifference in preference lists are defined the same way as for SM. \textbf{\emph{Hospitals$/$Residents problem (HR)}} is a famous extension of SMI that models many practical applications, including the assignment of junior doctors to hospitals, by allowing agents on one side of the market to be assigned to multiple agents on the other side of the market. \textbf{\emph{Hospitals$/$Residents problem with Ties (HRT)}} is an extension of HR that allows for indifference in preference lists (note that in the standard terminology, both HR and HRT allow for incomplete lists, see \cite{Man13}).

Let $\agentset$ denote a set of $n$ agents, which in a bipartite matching setting (i.e. SMTI or HRT) is composed of two disjoint sets. Each hospital $h$ in an instance of HRT is associated with a capacity $q(h)$ that denotes the number of posts it offers.  When in a bipartite matching setting, we use the term \emph{candidates} to refer to the agents on the opposite side of the market to that of an agent under consideration. In non-bipartite settings, candidates refer to all the other agents except the one under consideration.

Each agent finds a subset of candidates acceptable and ranks them in order of preference. Preference orderings need not to be strict, so it is possible for an agent to be indifferent between two or more candidates. We write $b \pref_{a} c$, or equivalently $c \prec_{a} b$, 
to denote that agent $a$ prefers candidate $b$ to candidate $c$, and $b \tie_{a} c$ to denote that $a$ is indifferent between $b$ and $c$. We write $b \weaklypref_{a} c$ to denote that $a$ either prefers $b$ to $c$ or is indifferent between them, and say that $a$ \emph{weakly prefers} $b$ to $c$.

In an instance of SMTI, a \emph{matching} $M$ is a pairing of men and women such that no one is paired with an unacceptable partner, each man is paired with at most one woman, and each woman is paired with at most one man. In an SRTI instance, a matching is a pairing of agents such that each agent is matched with at most one other agent whom s/he additionally finds acceptable. We write $(a,b) \in M$ to say that $a$ and $b$ are matched in $M$. In an instance of HRT, a matching is a pairing of hospitals and residents such that no agent is paired with an unacceptable candidate, each resident is matched with at most one hospital, and each hospital $h$ is matched with at most $q(h)$ residents. We use $M(a)$ to denote the agent (or the set of agents in the case of hospitals) matched to $a$ in $M$. We write $M(a)=\varnothing$ if agent $a$ is unmatched in $M$. We assume that every agent prefers being matched to an acceptable candidate to remaining unmatched. 

Given an instance of SMTI or SRTI, a matching $M$ is \emph{(weakly) stable} if there is no pair $(a,b) \notin M$ where $a$ prefers $b$ to his current partner in $M$, i.e., $b \succ_a M(a)$, and vice versa. Given an instance of HRT, a matching $M$ is stable if there is no acceptable (resident,hospital) pair $(r,h)$ such that (i) $r$ prefers $h$ to $M(r)$, and (ii) either $|M(h)| < q(h)$ or $h$ prefers $r$ to its worst assigned resident in $M$.

In their seminal work, Gale and Shapley \cite{GS62} showed that every instance of SMI admits a stable matching that can be found in polynomial time by their proposed algorithm (GS). A simple extension of GS can be used to identify stable matchings in instances of SMTI and mechanisms very similar to GS have been used to compute a stable assignment of residents to hospitals in instances of HRT. An instance of SR need not admit a stable matching \cite{GS62}. Irving~\cite{Irv86} provided a polynomial time algorithm that finds a stable matching in an instance of SR, or reports that none exists. Gusfield and Irving~\cite{fGI89} showed that it is straightforward to generalise this algorithm to instances of SRI.

\subsection{Hard Stable Matching Problems}
As explained above, the problem of identifying a stable matching, or showing that none exits, can be solved efficiently for instances of SMTI, HRT and SRI. In contrast to the case for stable marriage, allowing indifference in the stable roommates problem makes the quest for a stable matching a difficult task. Ronn~\cite{Ron90} showed that \textsc{Weak SRT}, the problem of deciding whether a stable matching exists, given an instance of SRT, is \cplxty{NP}-complete.

The importance of stable solutions has been argued and stressed by economists (see e.g.\cite{Rot84,Rot91,RX94}) and finding a stable matching is at the core of many practical applications such as the assignment of residents to hospitals in the National Resident Matching Program (NRMP) in the United States. In many practical applications, however, it is also important to match as many agents as possible, and thus finding a maximum cardinality stable matching (i.e., a stable matching with the largest size amongst all stable matchings) is a crucial issue.

It is known that (in contrast with SMI) an instance of SMTI might admit stable matchings of different sizes, and GS does not necessarily find the largest. \textbf{\textsc{Max SMTI}}, the problem of determining the maximum cardinality stable matching in an instance of SMTI, is known to be \cplxty{NP}-hard \cite{BMM10,IMS08,MIIMM02,OMa07}, even when the input is heavily restricted. 

The concern of computing a maximum size stable matching extends to instances of HRT.  As SMTI can be seen as a special case of HRT in which every hospital has capacity exactly one, the \cplxty{NP}-hardness of \textsc{Max SMTI} implies that \textbf{\textsc{Max HRT}}, the problem of determining the maximum cardinality stable matching in an instance of HRT, is also \cplxty{NP}-hard.  Conversely, we can express any instance of \textsc{Max HRT} as an instance of \textsc{Max SMTI} using a standard cloning argument \cite{Man13,RS90}: for each hospital $h$ in the HRT instance, we create $q(h)$ identical agents in the SMTI instance, each with capacity one.

\cplxty{NP}-hardness of \textsc{Weak SRT} implies that \textbf{\textsc{Max SRT}}, the problem of identifying a maximum cardinality stable matching in an instance of SRT or reporting that none exists, is also \cplxty{NP}-hard. Note that this problem is hard even when all agents find all other agents acceptable.

Depending on the application, one might be willing to tolerate a small degree of instability if that leads to larger matchings. Two different measurements for the degree of instability have been introduced in the literature: the number of blocking pairs and the number of blocking agents. \textbf{{\textsc{Max Size Min BP SMI}} (respectively \textbf{\textsc{Max Size Min BA SMI}})} is the problem of finding a matching, out of all maximum cardinality matchings, which has the minimum number of blocking pairs (respectively minimum number of blocking agents, i.e. agents who belong to at least one blocking pair) in an instance of SMI. Both of these problems are \cplxty{NP}-hard and very difficult to approximate \cite{BMM10}. Note that these two problems are hard even when there are no ties in preference lists.

Since an instance of SR may not admit a stable matching, it is of interest to find a matching with minimum number of blocking pairs. Abraham et al.~\cite{ABM06} showed that \textbf{\textsc{Min BP SR}}, the problem of identifying a matching which has the minimum number of blocking pairs in an instance of SR, is \cplxty{NP}-hard and very hard to approximate. 
Note that this problem is hard even if all agents rank all the other agents in strict order of preference.

\subsection{Parameterised Complexity}\label{sec:par-complexity}

In this paper we are concerned with the \emph{parameterised complexity} of computational problems that are intractable in the classical sense.  Parameterised complexity provides a multivariate framework for the analysis of hard problems: if a problem is known to be $\NP$-hard, so that we expect the running-time of any algorithm to depend exponentially on some aspect of the input, we can seek to restrict this combinatorial explosion to one or more \emph{parameters} of the problem rather than the total input size.  This has the potential to provide an efficient solution to the problem if the parameter(s) in question are much smaller than the total input size.  A parameterised problem with total input size $n$ and parameter $k$ is considered to be tractable if it can be solved by a so-called \emph{FPT algorithm}, an algorithm whose running time is bounded by $f(k)\cdot n^{\mathcal{O}(1)}$,  where $f$ can be any computable function.  Such problems are said to be \emph{fixed parameter tractable}, and belong to the complexity class $\FPT$.  It should be emphasised that, for a problem to be in $\FPT$, the exponent of the polynomial must be independent of the parameter value; problems which satisfy the weaker condition that the running time is polynomial for any constant value of the parameter(s) (so that the degree of the polynomial may depend on the parameters) are said to belong to the class $\XP$.

For further background on the theory of parameterised complexity, we refer the reader to \cite{downeyfellows13,flumgrohe}.

\subsection{The complexity of Integer Programming}

Many of the algorithms we present in this paper make use of an algorithm for \textsc{Integer Linear Programming} in some way.  This problem is formally stated as follows: given an $m \times k$ matrix $A$ and two $m$-dimensional vectors $\mathbf{b}$ and $\mathbf{c}$ (all with coefficients in $\mathbb{Z}$), find a $m$-dimensional vector $\mathbf{x} \in \mathbb{Z}$ which minimizes the scalar product $\mathbf{c}^{T} \cdot \mathbf{x}$, subject to the $m$ linear constraints given by $A \mathbf{x} \leq \mathbf{b}$, or else report that no vector satisfying the constraints exists.  Note that we can easily translate problems in which we wish to maximise rather than minimise the objective function into this form, and also we can express constraints based on linear equalities as a combination of linear inequalities; for simplicity of presentation we will use both of these generalisations when expressing problems as instances of \textsc{Integer Linear Programming}.

While \textsc{Integer Linear Programming} is $\NP$-hard in general, one of the most celebrated results in parameterised complexity is that this problem belongs to $\FPT$ when parameterised by the number of variables.

\begin{theorem}[\cite{paramalgs}, based on \cite{frank87,kannan87,lenstra83}]\label{thm:ILP}
An \textsc{Integer Linear Programming} instance of size $L$ with $k$
variables can be solved using
$$\mathcal{O}(k^{2.5k+o(k)} \cdot \left(L + \log M_x) \log(M_xM_c)\right)$$
arithmetic operations and space polynomial in $L + \log M_x$, where $M_x$ is an upper bound on the absolute value a variable can take in a solution, and $M_c$ is the largest absolute value of a coefficient in the vector $c$.
\end{theorem}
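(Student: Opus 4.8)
The plan is to recover this bound by combining three classical ingredients: Lenstra's algorithm for integer programming in a fixed number of dimensions, the lattice-basis-reduction refinements of Kannan (which control the dependence on $k$), and the coefficient-reduction technique of Frank and Tardos (which controls the dependence on the bit-length of the data). The statement itself is essentially an accounting of the costs of these pieces, so the body of the argument is bookkeeping around one deep geometric input.

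First I would reduce the optimisation problem to a bounded number of feasibility problems. Since any feasible integer vector has entries of absolute value at most $M_x$, the value $\mathbf{c}^T \mathbf{x}$ ranges over an interval of length $O(k M_x M_c)$, so a binary search over this interval finds the optimum using $O(\log(M_x M_c))$ calls to an oracle for the feasibility question ``does $\{\mathbf{x} \in \mathbb{Z}^k : A\mathbf{x} \leq \mathbf{b}\}$ contain a point?'', each call adding a single constraint $\mathbf{c}^T\mathbf{x} \leq t$. This is exactly the $\log(M_x M_c)$ factor appearing in the claimed bound, so it suffices to solve the feasibility instance within the remaining budget.

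Next I would invoke Lenstra's recursion on the dimension to answer a feasibility query. Given the rational polytope $P = \{\mathbf{x} : A\mathbf{x} \leq \mathbf{b}\}$, one computes (e.g.\ via the ellipsoid method) either that $P$ is empty or an affine transformation making $P$ reasonably well-rounded, then computes a reduced basis of $\mathbb{Z}^k$ relative to the transformed body; here Kannan's improved basis reduction is what yields the $k^{2.5k+o(k)}$ factor rather than a worse exponent. The effective flatness theorem then guarantees that either $P$ contains a lattice point (which we output), or there is an integer direction $\mathbf{d}$ along which $\langle \mathbf{d}, \cdot\rangle$ takes only $f(k)$ distinct integer values on $P$. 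In the second case we branch over these $f(k)$ values: intersecting $P$ with each hyperplane $\langle \mathbf{d}, \mathbf{x}\rangle = v$ and applying a unimodular change of coordinates gives an integer-feasibility instance in dimension $k-1$, and we recurse. The recursion has depth $k$ and branching bounded by the flatness constant, using only space polynomial in the instance size at each level. Finally, to make the running time and space depend on $L + \log M_x$ rather than on the possibly huge bit-length of the original coefficients, I would preprocess with the Frank–Tardos rounding, which in strongly polynomial time replaces $A$ and $\mathbf{c}$ by an equivalent instance with entries of bit-length polynomial in $k$ while preserving the optimal solutions; feeding this reduced instance into the recursion and multiplying by the $O(\log(M_x M_c))$ feasibility calls produces the stated running time.

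The main obstacle — the one genuinely substantive step — is the effective flatness theorem driving the dimension reduction: the fact that a convex body with no lattice point has lattice width bounded by a function of the dimension alone, together with a \emph{quantitatively good} bound (which is precisely what the $k^{2.5k}$ exponent encodes), relies on transference inequalities for lattices and on the approximation quality of reduced bases. This is the content imported from \cite{lenstra83,kannan87,frank87}, and I would cite it rather than reprove it; everything else in the argument is routine reduction and cost-counting.
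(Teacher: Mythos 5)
This statement is imported as a black box: the paper gives no proof of Theorem~\ref{thm:ILP}, citing it directly from \cite{paramalgs} (which in turn builds on \cite{frank87,kannan87,lenstra83}), so there is no in-paper argument to compare against. Your sketch is an accurate reconstruction of the standard proof from those sources --- binary search on the objective accounting for the $\log(M_xM_c)$ factor, Lenstra-style dimension reduction via the flatness theorem with Kannan's basis reduction supplying the $k^{2.5k+o(k)}$ dependence, and Frank--Tardos preprocessing controlling the bit-length --- and correctly isolates the flatness theorem as the one non-routine ingredient to be cited rather than reproved.
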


In Section \ref{sec:typed-minbp} we also need to solve instances of \textsc{Integer Quadratic Programming}, a variant of \textsc{Integer Linear Programming} in which the objective function is quadratic.  Formally, given a $k \times k$ integer matrix $Q$, an $m \times k$ integer matrix $A$ and an $m$-dimensional integer vector $\mathbf{b}$, our goal is to find a vector $\mathbf{x} \in \mathbb{Z}^k$ which minimises $\mathbf{x}^T Q \mathbf{x}$, subject to the $m$ linear constraints $A \mathbf{x} \leq b$, or else report that no vector satisfying the constraints exists.  As before, we note that we can easily generalise this definition to deal with maximisation problems and constraints in the form of linear equalities.  Lokshtanov recently gave an FPT algorithm for this problem.

\begin{theorem}[\cite{lokshtanov15}]\label{thm:IQP}
\textsc{Integer Quadratic Programming} is in $\FPT$ parameterised by $k + \alpha$, where $\alpha$ is the maximum absolute value of any entry in the matrices $A$ and $Q$.
\end{theorem}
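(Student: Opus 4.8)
The plan is to use Theorem~\ref{thm:ILP} as a subroutine and reduce a given \textsc{Integer Quadratic Programming} instance to a number of \textsc{Integer Linear Programming} instances that depends only on $k$ and $\alpha$, each in $\mathcal{O}(k)$ variables. The single source of difficulty is that $\mathbf{b}$ is unbounded: if the entries of $\mathbf{b}$ were also bounded in terms of $\alpha$, the feasible polyhedron $P = \{x \in \mathbb{R}^k : A x \le \mathbf{b}\}$ would contain only boundedly many integer points and we could simply enumerate them. So the real task is to cope with a polyhedron that may be large or unbounded, while exploiting the fact that both the ``shape'' of $P$ (its facet normals are rows of $A$, hence have bounded coefficients) and the objective ($Q$ has bounded entries) are of bounded complexity.

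First I would normalise the quadratic form: by Sylvester's law of inertia there is an invertible rational matrix $R$ with $Q = R^{T} D R$, where $D$ is diagonal with entries in $\{+1,-1,0\}$. Since $Q$ is a $k \times k$ matrix with entries of absolute value at most $\alpha$, $R$ arises from a bounded amount of Gaussian elimination, so after clearing denominators the linear forms $\ell_1,\dots,\ell_p$ and $m_1,\dots,m_n$ (with $p + n \le k$) read off from the ``positive'' and ``negative'' rows of $R$ have integer coefficients of absolute value bounded by some $g(k,\alpha)$, and $x^{T} Q x = \sum_{i=1}^{p}\ell_i(x)^2 - \sum_{j=1}^{n} m_j(x)^2$; thus we must keep $\sum_i \ell_i(x)^2$ small and $\sum_j m_j(x)^2$ large at once. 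Next I would deal with unboundedness through the recession cone $C = \{d : A d \le 0\}$: along a recession direction the objective is a quadratic polynomial in the step length, so if some feasible $x$ together with some $d \in C$ drives this quadratic to $-\infty$ the instance is unbounded, which can be detected by linear-programming tests over $C$ (itself generated by bounded-coefficient vectors); otherwise I would argue that an optimal integer point can be taken inside a bounded sub-polyhedron whose coordinates are bounded by a polynomial in the input length $L$, so that $\log M_x$ in Theorem~\ref{thm:ILP} remains polynomial.

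It then remains to optimise a genuinely indefinite quadratic over this effectively bounded polyhedron, and this is the step I expect to be the main obstacle. The idea is to branch over a bounded number of ``configurations'' describing where an optimal integer point $x^{*}$ lies relative to the bounded-coefficient forms $m_1,\dots,m_n$: for each $j$, guess which facet inequalities of $P$ are (nearly) tight at $x^{*}$, so that on the resulting face $F$ the concave part $-\sum_j m_j(x)^2$ is constant up to a controlled error and only the convex part $\sum_i \ell_i(x)^2$ needs to be minimised over $F \cap \mathbb{Z}^k$; minimising a convex integer quadratic in $\mathcal{O}(k)$ variables reduces in turn to \textsc{Integer Linear Programming} by guessing each $\ell_i(x^{*})$ within the bounded range that matters and adding the corresponding equalities, after which Theorem~\ref{thm:ILP} finishes. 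The crux is showing that this branching is finite and sound: one needs a proximity/sensitivity argument establishing that, because $A$ and each $m_j$ have bounded coefficients, the value of $m_j$ at $x^{*}$ is within a bounded amount of its value at a vertex of a face of $P$, so that only boundedly many faces $F$ and boundedly many value-guesses ever arise. Granting this, the total number of ILP calls is a function of $k$ and $\alpha$ alone, each solvable in FPT time by Theorem~\ref{thm:ILP}, establishing membership in $\FPT$.
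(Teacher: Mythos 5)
This statement is not proved in the paper at all: it is imported verbatim from Lokshtanov's work \cite{lokshtanov15}, so there is no in-paper argument to compare yours against. Judged on its own terms, your sketch identifies the right difficulties (unbounded $\mathbf{b}$, indefiniteness of $Q$) but the step you yourself flag as ``the main obstacle'' is a genuine gap, not a routine verification. The claim that by guessing which facet inequalities are (nearly) tight at $x^*$ you obtain a face $F$ on which the concave part $-\sum_j m_j(x)^2$ is ``constant up to a controlled error'' is unjustified and false in general: the forms $m_j$ need not lie in the span of the tight constraint normals, so they can vary without bound over $F$, and for an indefinite objective the optimal integer point need not lie anywhere near a vertex of $P$ (consider minimising $-x_1^2$ over $[0,N]\times[0,N]$ intersected with a lattice coset forced by auxiliary constraints: the relevant form varies by $\Theta(N)$ over every face containing the optimum). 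The proximity statement you would need --- that $m_j(x^*)$ is within a function of $k$ and $\alpha$ of its value at some vertex --- is precisely the hard core of the problem, and Lokshtanov's actual proof does not obtain it in this form; his argument is a substantially more involved induction that, roughly, either finds a bounded-coefficient integer direction along which one can improve or reduce, or locates the instance inside a lower-dimensional affine sublattice, and it cannot be compressed into a facet-guessing reduction to Theorem~\ref{thm:ILP}.

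Two further points would also need repair even if the main step were granted. First, your unboundedness test is not a linear-programming question: whether the objective tends to $-\infty$ along the recession cone $C$ depends on the sign of $d^TQd$ over $d\in C$ (a nonconvex quadratic feasibility problem, co-\cplxty{NP}-hard in general) together with cross terms $x^TQd$ that depend on the base point, so ``linear-programming tests over $C$'' do not suffice. Second, the branching over ``which facets are tight'' ranges over subsets of the $m$ input constraints; to keep this FPT you must observe that $A$ has at most $(2\alpha+1)^k$ distinct rows and argue that only the tightest constraint per normal direction matters --- true, but not stated. As written, the proposal is a plausible plan whose load-bearing lemma is missing, so it does not establish the theorem.
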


\subsection{Related Work}\label{sec:related-work}

\paragraph{\cplxty{NP}-hard matching problems.} 
The \cplxty{NP}-hardness of \textsc{Max SMTI} has been shown for a variety of restricted 
for example: (1) even if each man's list is strictly ordered, and each woman's list is either strictly ordered or is a tie of length 2 \cite{MIIMM02}, (2) even if each mans preference list is derived from a strictly-ordered master list of women, and each woman's preference list is derived from a master list of men that contains only one tie \cite{IMS08}, and (3) even if the SMTI instance has symmetric preferences; that is, for any acceptable (man, woman) pair $(m_i,w_j)$, $rank(m_i,w_j) = rank(w_j ,m_i)$ \cite{OMa07}, where $rank(a,b)$ is defined to be one plus the number of candidates that $a$ prefers to $b$. As SMTI is a special case of HRT, the \cplxty{NP}-hardness of finding a maximum stable matching in the latter follows directly from the \cplxty{NP}-hardness of this problem in the former. 

The \cplxty{NP}-completeness of \textsc{Weak SRT}, and hence \textsc{Max SRT}, holds even if each preference list is either strictly ordered or contains a tie of length 2 at the head. 

\textsc{Max Size Min BP SMI} and \textsc{Max Size Min BA SMTI} are \cplxty{NP}-hard and very hard to approximate \cite{BMM08} even if each agent's preference list is of length at most 3 \cite{BMM10,HIM09}, but polynomial-time solvable if agents on one side of the market have preference lists of length at most 2 \cite{BMM10}.

\paragraph{Parameterized complexity of matching problems.}
There are a limited number of works addressing fixed-parameter tractability in stable matching problems. 
Marx and Schlotter 
\cite{MS10} gave the first parameterised complexity results on \textsc{Max SMTI}. They show that the problem is in \cplxty{FPT} when parameterised by the total length of the ties, but is \cplxty{W[1]}-hard when parameterised by the number of ties in the instance, even if all the men have strictly ordered preference lists. Very recently, three different works have studied hard stable matching problems from the perspective of parameterised complexity. In \cite{MS17}, the authors obtained results on the parameterised complexity of finding a stable matching which matches a given set of distinguished agents and has as few blocking pairs as possible. In \cite{GSZ17} it is shown that several hard stable matching problems, including \textsc{Max SMTI}, are \cplxty{W[1]}-hard when parameterised by the treewidth of the graph obtained by adding an edge between each pair of agents that find each other mutually acceptable.  In \cite{GRSZ17}, the authors study above guarantee parameterisations of the problem of finding a stable matching that balances the dissatisfaction of men and women, with parameters that capture the degree of dissatisfaction.

\paragraph{Attributes and types.} 
Settings in which agents are partitioned into different types, or derive their preferences based on a set of attributes assigned to each candidate, have been considered for the problems of sampling and counting stable matchings in instances of SM or SR (see, e.g., \cite{BGR08,CGM12,cgm12a}). In \cite{EFSY13}, the authors study the problem of characterising matchings that are rationalisable as stable matchings when agents' preferences are unobserved. They focus on a restricted setting that translates into assigning each agent a type based on several attributes, and assuming that agents of the same type are identical and have identical preferences. They remark that empirical studies on marriage typically make such an assumption \cite{CS06}. Bounded agent types have been considered in \cite{AK-11,SAK-10} to derive polynomial-time results for the coalition structure generation problem, an important issue in cooperative games when the goal is to partition the participants into exhaustive and disjoint coalitions in order to maximise the social welfare.

\section{Our basic model: agents of the same type are indistinguishable}\label{sec:typed}
In this section we begin with a formal definition of the simplest model we consider, in which agents' preferences can be derived directly from the preferences of types over types of candidates. This implies that agents of the same type are completely indistinguishable. That is, they have the same preference lists, and every other agent that finds their type acceptable is indifferent between them. 

We then identify a necessary and sufficient condition, in terms of the types of the least and the second least desirable partners assigned to any agent of each type, for a matching to be stable in this model.  We use this to show that, if there are $k$ types, we can solve \textsc{Max SRTI} by solving $k^{\mathcal{O}(k)}$ instances of \textsc{Integer Linear Programming}. 
This implies that \textsc{Max SRTI} and hence \textsc{Max SMTI}, parameterised by $k$, belong to \cplxty{FPT}.
We then show that we can solve \textsc{Max SMTI} more efficiently by solving 
$k^{\mathcal{O}(k)}\cdot\log n$ instances of \textsc{Max Flow} on directed networks with $\mathcal{O}(k)$ vertices and maximum edge capacity $\mathcal{O}(n)$. This alternative approach reduces the time complexity of \textsc{Max SMTI} from $k^{\mathcal{O}(k^2)}\log^3 n+\mathcal{O}(n)$ to $k^{\mathcal{O}(k)} \cdot \log^2 n+\mathcal{O}(n)$. The $\mathcal{O}(n)$ part of the time complexities accounts for the time required to construct the maximum cardinality matching.

Lastly, we extend the methods for solving \textsc{Max SRTI} to provide FPT algorithms for \textsc{Max Size Min BP SMTI} and \textsc{Max Size Min BA SMTI}.

\subsection{Definition of typed instances}\label{sec:def-basic-model}
Assume that there are $k$ types available for agents. Let $[k]$ denote the set $\{1,2,\ldots,k\}$.
Let $\agentset_i$ denote the set of agents that are of type $i$. Thus we have that the set of agents $\agentset=\bigcup_{i\in[k]} \agentset_i$.
Each type $i$ has a preference ordering over types of the candidates, which need not be complete or strict. 
We assume, without loss of generality, that $|\agentset_i| >0 $ for all $i\in[k]$, and that each type finds at least one other type acceptable.
We write $j \succ_i \ell$, or equivalently $\ell \prec_i j$, if agents of type $i$ strictly prefer agents of type $j$ to agents of type $\ell$. 
We write $j \tie_i \ell$ to denote that agents of type $i$ are indifferent between agents of types $j$ and $\ell$, and $j \succeq_i \ell$ if agents of type $i$ prefer agents of type $j$ to those of type $\ell$ or are indifferent between the two. We assume that given every two agents $x$ and $y$ of the same type:
\begin{enumerate}
\item $x$ and $y$ have identical preference lists when restricted to $\agentset \setminus \{x,y\}$
, and
\item all other agents are indifferent between $x$ and $y$.
\end{enumerate}

These requirements imply that any agent either finds all agents of a given type acceptable (and is indifferent between them) or finds none of them acceptable.  We say that an instance of a stable matching problem satisfying these requirements is \emph{typed}, and refer to the standard problems with input of this form as \textsc{Typed Max SMTI} etc.  Note that \textsc{Typed Max SMTI}, and hence \textsc{Typed Max SRTI}, remain \cplxty{NP}-hard when $k$ is considered to be part of the input: we can always create a typed instance by assigning each agent its own type. 

A typed instance $I$ of SRTI is given as input by specifying the number of types $k$ and, for each type $i$, the set $\agentset_i$ of agents of type $i$ as well as the preference ordering $\succ_i$ over types of the candidates.
Observe that, if we are only given the preference list for each agent as input, it is straightforward to compute, in polynomial time, the coarsest partition of the agents into types that satisfies the definition of a typed instance.  Having found such a partition, the preference lists over types can also be constructed efficiently.

\begin{example}\label{ex:model1}
Assume we have 4 types for the agents in a stable marriage setting, and that all men are of type $1$ and types $2$, $3$ and $4$ correspond to women. Let the preference ordering of type $1$ over types of women be as follows, where the preference list is ordered from left to right in decreasing order of preference, and the types in round brackets are tied: $(2 ~ 3) ~ 4$. Assume that there are 7 women and $w_1$ and $w_2$ are of type $2$, $w_3$ and $w_4$ are of type $3$, and $w_5$, $w_6$ and $w_7$ are of type $4$. Therefore, the preference lists of all men under the typed model are as follows: $(w_1 ~ w_2 ~ w_3 ~ w_4) ~ (w_5 ~ w_6 ~ w_7)$. 
\end{example}

\subsection{An FPT algorithm for \textsc{Typed Max SRTI}}\label{sec:typed-maxsrti}

Let $I$ be a typed instance of SRTI, and let $M$ be a matching in $I$. 
We may assume without loss of generality that every agent is matched, by creating sufficiently many dummy agents of type $k+1$ which are inserted at the end of each agent's (possibly incomplete) preference list.
We define $\worst_M(i)$ and $\secondworst_M(i)$ to be the types of the least desirable agent and the second least desirable agent with which any agent of type $i$ is matched in $M$, respectively,
breaking ties arbitrarily (e.g. lexicographically). 
Note that $\worst_M(i)$ would be the dummy type if an agent of type $i$ is unmatched (i.e. matched to a dummy agent) in $M$. 
If there is only one agent of type $i$, then $\secondworst_M(i)$ is undefined, in which case we let $\secondworst_M(i)=\varnothing$.
Note that it is possible to have $\secondworst_M(i)=\worst_M(i)$.
Let $\type(a)$ denote the type of a given agent $a$.

The key observation is that, in order to determine whether or not $M$ is stable, it suffices to examine the values of $\worst_M(i)$ and $\secondworst_M(i)$ for each $i \in [k]$.

\begin{lemma}\label{stability-test-srti}
Let $I$ be a typed instance of SRTI. Then a matching $M$ in $I$ is stable if and only if (1) there is no pair $(i,j) \in [k]^{(2)}$, $i\neq j$, such that $j \pref_i \worst_M(i)$ and $i \pref_j \worst(j)_M$, and (2) there is no pair $(i,i)$, $i \in [k]$, such that there are at least two agents of type $i$ and $i \pref_i \secondworst_M(i)$.
\end{lemma}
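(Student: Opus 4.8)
The plan is to prove both directions of the equivalence, with the ``hard'' direction being that conditions (1) and (2) together imply stability; the reverse direction is essentially immediate from the definitions.

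\medskip

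\emph{Necessity (stability implies (1) and (2)).} Suppose $M$ is stable. For the contrapositive of (1), suppose there is a pair $(i,j)$ with $i \neq j$, $j \pref_i \worst_M(i)$ and $i \pref_j \worst_M(j)$. Let $a$ be an agent of type $i$ whose partner in $M$ has type $\worst_M(i)$, and let $b$ be an agent of type $j$ whose partner in $M$ has type $\worst_M(j)$. Since preferences depend only on types, $a$ strictly prefers $b$ (of type $j$) to $M(a)$ (of type $\worst_M(i)$), and symmetrically $b$ strictly prefers $a$ to $M(b)$. If $a \neq b$ (which holds since $i \neq j$), the pair $(a,b)$ blocks $M$, a contradiction. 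For the contrapositive of (2), suppose there are at least two agents of type $i$ and $i \pref_i \secondworst_M(i)$. Let $a$ be an agent of type $i$ whose partner has type $\secondworst_M(i)$; by definition of second-worst there is another agent $a'$ of type $i$ whose partner also has type at least as bad as (weakly worse than) $\secondworst_M(i)$ from $a$'s viewpoint, but crucially $a' \neq a$, and $a'$ is of type $i$, which $a$ strictly prefers to $\secondworst_M(i) \weaklypref_i \type(M(a'))$; symmetrically $a'$ strictly prefers $a$ to its current partner. So $(a,a')$ blocks $M$ --- a contradiction. (The case analysis here just needs care about whether $M(a)$ or $M(a')$ equals $a'$ or $a$ respectively, but since agents are not matched to themselves and the two partners have the relevant types, this does not arise.)

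\medskip

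\emph{Sufficiency ((1) and (2) imply stability).} This is the step I expect to require the most care. Suppose, for contradiction, that $M$ satisfies (1) and (2) but is not stable, so some pair $(a,b) \notin M$ blocks: $b \pref_a M(a)$ and $a \pref_b M(b)$. Let $i = \type(a)$ and $j = \type(b)$. Since $a$ strictly prefers $b$ to $M(a)$, and $M(a)$ has type $\type(M(a)) \weaklypref_i \worst_M(i)$ is false in general --- rather $\worst_M(i) \weaklypref$ ... I need to be careful: $\worst_M(i)$ is the \emph{least} desirable type matched to an agent of type $i$, so $\type(M(a)) \weaklypref_i \worst_M(i)$, hence $b \pref_a M(a) \weaklypref_a \worst_M(i)$ gives $j = \type(b) \pref_i \worst_M(i)$; symmetrically $i \pref_j \worst_M(j)$. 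If $i \neq j$, this contradicts (1) directly. If $i = j$, then $a$ and $b$ are distinct agents of the same type $i$, so there are at least two agents of type $i$; moreover $\type(b) = i$ and $b \pref_a M(a)$ means $i \pref_i \type(M(a))$, so $M(a)$ has type strictly worse (from $i$'s view) than $i$ itself. I then need $i \pref_i \secondworst_M(i)$ to contradict (2). This requires arguing that at least two agents of type $i$ are matched to partners of type strictly worse than $i$: indeed both $a$ and $b$ are such agents (each strictly prefers the other, of type $i$, to its own partner), and $a \neq b$, so the second-worst partner type among agents of type $i$ is also strictly worse than $i$, i.e. $i \pref_i \secondworst_M(i)$, contradicting (2). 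The one subtlety to nail down is that $a$ and $b$ genuinely witness \emph{two} distinct agents of type $i$ with partners worse than $i$, so that $\secondworst_M(i)$ (the worse of the two worst partner types, over at least two agents) is indeed dominated by $i$; this follows since $\{a,b\}$ are two distinct type-$i$ agents each with partner-type $\prec_i i$, and $\secondworst_M(i)$ is by definition the type of a partner of \emph{some} type-$i$ agent other than the one achieving $\worst_M(i)$, hence $\weaklypref_i$-below $\max$ of the partner types of $a$ and $b$, which is $\prec_i i$.

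\medskip

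In writing this up I would state the two directions as separate paragraphs, handle the $i = j$ and $i \neq j$ cases explicitly in the sufficiency direction, and be pedantic about the definition of $\secondworst_M(i)$ when there are exactly two agents of type $i$ versus more. The genuinely delicate point --- and the only place a naive argument could go wrong --- is the same-type blocking case, where one must confirm that a blocking pair within a type forces \emph{two} agents (not just one) of that type to have sufficiently poor partners, which is exactly what $\secondworst$ (rather than $\worst$) captures.
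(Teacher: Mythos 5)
Your proof is correct and follows essentially the same route as the paper's: both directions are argued by converting a violated type-level condition into a concrete blocking pair and vice versa, with the same case split between cross-type and same-type blocking pairs and the same use of $\secondworst_M(i)$ to certify that a within-type blocking pair forces two type-$i$ agents to have partners strictly worse than type $i$. The subtleties you flag (distinctness of the two witnesses, and why the second minimum of the partner-type multiset is dominated by type $i$) are exactly the points the paper's proof also relies on.
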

\begin{proof}
Suppose first that $M$ is not stable.  In this case, by definition, there exists some pair of agents $(a,b)$ such that $a$ and $b$ are not matched together but each prefers the other over their current partner. First suppose that $a$ and $b$ are of different types and without without loss of generality assume that $a$ is of type $i$ and $b$ is of type $j$, $i\neq j$.  Then we know that $j \pref_i \type(M(a)) \weaklypref_i \worst_M(i)$, and similarly $i \pref_j \type(M(b)) \weaklypref_j \worst_M(j)$. Now suppose that $a$ and $b$ are both of the same type $i$. Then we know that type $i$ likes type $\type(M(a))$ or type $\type(M(b))$ at least as well as type $\secondworst_M(i)$. Without loss of generality assume that $\type(M(a)) \weaklypref_i \secondworst_M(i)$. Since $(a,b)$ is a blocking pair, $\type(b) \pref_i \type(M(a))$, and therefore $i \pref_i \secondworst_M(i)$.

Conversely, suppose that $M$ is stable.  Suppose for a contradiction that at least one of the two conditions in the statement of the theorem does not hold. First suppose that the first condition does not hold, i.e. there is a pair of types $(i,j)$, $i\neq j$, such that $j \pref_i \worst_M(i)$ and $i \pref_j \worst_M(j)$. Then there is some agent $a$ of type $i$ which is matched with an agent of type $\worst_M(i)$, so in particular $a$ is matched with a partner less desirable than any agent of type $j$.  Similarly, there is some agent $b$ of type $j$ which is matched with an agent of type $\worst_M(j)$ and hence is matched with a partner less desirable than any agent of type $i$.  Thus $a$ and $b$ both prefer each other to their current partner, and so form a blocking pair. This contradicts the assumption that $M$ is stable. Now suppose that the second condition does not hold, i.e. there exists a type $i$ corresponding to at least two agents where $i \pref_i \secondworst_M(i)$. Then there are two agents $a$ and $b$ of type $i$ who are matched to an agent of type $\worst_M(i)$ and an agent of type $\secondworst_M(i)$, respectively. Thus $a$ and $b$ both prefer each other to their current partner, and so form a blocking pair. This contradicts the assumption that $M$ is weakly stable.
\end{proof}

We say that a matching $M$ \emph{realises} given functions $\worst: [k] \rightarrow [k+1]$ and $\secondworst: [k] \rightarrow [k+2]$ if, for each $i \in [k]$, the least desirable and the second least desirable partners any agent of type $i$ has in $M$ are of types no worse than $\worst(i)$ and $\secondworst(i)$ respectively. 
We say that a pair of functions $(\worst, \secondworst)$ is feasible if for each type $i$, (1) $\worst(i)$ is either a type acceptable to type $i$ or the dummy type, (2) $\secondworst(i)$ is either a type acceptable to type $i$ or the dummy type, or $\varnothing$ (only if there is only one agent of type $i$), and (3) if $\worst(i)=\secondworst(i)$ then there exist at least two agents of type $\worst(i)$.
We say that a pair of functions $(\worst, \secondworst)$ is
\emph{$I$-stable} for an instance $I$ of SRTI if it is feasible, there is no pair 
$(i,j) \in [k]^{(2)}$
such that $j \pref_i \worst(i)$ and $i \pref_j \worst(j)$, and there is no pair $(i,i)$, $i \in [k]$, such that there are at least two agents of type $i$ and $i \pref_i \secondworst_M(i)$.

Given a pair of $I$-stable functions $(\worst, \secondworst)$, we write $\max(\worst,\secondworst)$ for the maximum cardinality of any matching in $I$ that realises $\worst$ and $\secondworst$. Using Lemma \ref{stability-test-srti}, it is straightforward to check that, given a typed instance $I$ of SRTI, the cardinality of a solution to \textsc{Max SRTI} can be found by taking the largest value of $\max(\worst,\secondworst)$ over all pairs of $I$-stable functions $(\worst, \secondworst)$.

\begin{corollary}\label{lma:max-I-stable}
Let $I$ be a typed instance of SRTI.  Then the cardinality of the largest stable matching in $I$ is equal to 
$$\max\{\max(\worst,\secondworst): (\worst,\secondworst) \textup{ is $I$-stable}\}.$$
\end{corollary}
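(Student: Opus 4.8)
The plan is to obtain the corollary as an immediate consequence of Lemma~\ref{stability-test-srti}, by establishing the two inequalities separately.

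\textbf{Right-hand side $\ge$ left-hand side.} I would fix a maximum-cardinality stable matching $M$ in $I$ (if one exists) and consider the pair of functions $(\worst_M,\secondworst_M)$. The first task is to verify that this pair is feasible: by construction each of $\worst_M(i)$ and $\secondworst_M(i)$ is a type that type $i$ finds acceptable or the dummy type (with $\secondworst_M(i)=\varnothing$ precisely when $|\agentset_i|=1$), and if $\worst_M(i)=\secondworst_M(i)$ then two distinct agents of type $i$ are matched to two distinct agents of this common type, so it contains at least two agents. Since $M$ is stable, Lemma~\ref{stability-test-srti} guarantees the two non-existence conditions required for $(\worst_M,\secondworst_M)$ to be $I$-stable. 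Finally $M$ realises $(\worst_M,\secondworst_M)$ by definition (each agent's worst and second-worst partner type is ``no worse than'' itself), so $|M|\le \max(\worst_M,\secondworst_M)$, which is at most the right-hand side.

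\textbf{Left-hand side $\ge$ right-hand side.} Here I would take an $I$-stable pair $(\worst,\secondworst)$ attaining the maximum on the right-hand side, together with a matching $M$ of cardinality $\max(\worst,\secondworst)$ realising it, and show that $M$ is stable; this gives $|M|$ at most the cardinality of the largest stable matching. The key point is monotonicity: because $M$ realises $(\worst,\secondworst)$, for every type $i$ we have $\worst_M(i)\weaklypref_i\worst(i)$ and $\secondworst_M(i)\weaklypref_i\secondworst(i)$. Consequently, if there were a pair $(i,j)$ with $i\ne j$, $j\pref_i\worst_M(i)$ and $i\pref_j\worst_M(j)$, then transitivity would yield $j\pref_i\worst(i)$ and $i\pref_j\worst(j)$, contradicting the $I$-stability of $(\worst,\secondworst)$; and if there were a type $i$ with at least two agents and $i\pref_i\secondworst_M(i)$, then $i\pref_i\secondworst(i)$, again contradicting $I$-stability. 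So neither condition of Lemma~\ref{stability-test-srti} fails, and $M$ is stable.

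I do not expect a substantial obstacle: the only steps needing care are the feasibility check for $(\worst_M,\secondworst_M)$, which is routine from the definitions, and the monotonicity step in the second inequality, which merely records that ``realising'' a pair of functions can only improve each agent's worst and second-worst partner type. One should additionally dispose of two degenerate situations — an $I$-stable pair admitting no realising matching (which contributes nothing to the maximum and can be ignored), and an instance $I$ with no stable matching at all (in which case, by the argument for the second inequality, no $I$-stable pair admits a realising matching, so both sides are vacuous).
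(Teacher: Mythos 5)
Your proof is correct and is exactly the argument the paper has in mind: the paper merely asserts that the corollary is "straightforward to check" from Lemma~\ref{stability-test-srti}, and your two-inequality argument (stable matchings realise their own $I$-stable pair $(\worst_M,\secondworst_M)$; conversely, any matching realising an $I$-stable pair is stable by monotonicity of the conditions under improving the worst/second-worst partner types) is the intended verification, including the correct handling of the degenerate cases.
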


We next show that, given a pair of $I$-stable functions $(\worst,\secondworst)$, there is an FPT algorithm (with parameter $k$) to compute $\max(\worst,\secondworst)$.

\begin{lemma}\label{lma:max-srti-calculate}
Let $I$ be a typed instance of $SRTI$, and fix a pair of $I$-stable function $(\worst, \secondworst)$.  We can compute $\max(\worst,\secondworst)$ in time $2^{\mathcal{O}(k^2)}\log^3 n$.
\end{lemma}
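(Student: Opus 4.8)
The plan is to reduce the computation of $\max(\worst,\secondworst)$ to a single instance of \textsc{Integer Linear Programming} with $\mathcal{O}(k^2)$ variables, coefficients of absolute value $\mathcal{O}(1)$, and variable values bounded by $n$, and then to invoke Theorem~\ref{thm:ILP}. I work throughout in the dummy-augmented instance, in which every agent --- of a real type in $[k]$, or of the dummy type $k+1$ --- is matched, taking $|\agentset_{k+1}| = n$ so that there is room for every real agent to be effectively unmatched. For each unordered pair of types $\{i,j\}$ with $1 \le i \le j \le k+1$ I would introduce a variable $x_{ij} \in \mathbb{Z}_{\ge 0}$, meant to record the number of matched pairs having one agent of type $i$ and one of type $j$ (for $i=j$, the number of matched pairs both of whose agents have type $i$); this gives $\binom{k+2}{2} = \mathcal{O}(k^2)$ variables. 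The objective is to maximise $\sum_{1 \le i \le j \le k} x_{ij}$, the number of matched pairs avoiding dummy agents, which is exactly the cardinality of the corresponding matching of $I$.

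Next I would write down the constraints: (i) \emph{acceptability} --- $x_{ij} = 0$ whenever types $i$ and $j$ do not find each other mutually acceptable (in the augmented instance the dummy type is mutually acceptable with every type); (ii) \emph{capacity} --- for each real type $i$ the number of type-$i$ agents used, $2x_{ii} + \sum_{j \ne i} x_{\{i,j\}}$ where $x_{\{i,j\}} := x_{\min(i,j),\max(i,j)}$, equals $|\agentset_i|$, while for the dummy type this quantity is at most $|\agentset_{k+1}|$; (iii) \emph{realising $\worst$} --- for each $i \in [k]$, the number of type-$i$ agents whose partner is of a type strictly worse for $i$ than $\worst(i)$ is $0$; (iv) \emph{realising $\secondworst$} --- for each $i \in [k]$ with $\secondworst(i) \ne \varnothing$, the number of type-$i$ agents whose partner is of a type strictly worse for $i$ than $\secondworst(i)$ is at most $1$. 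Each count appearing in (iii)--(iv) is a linear form in the $x_{ij}$, since the number of type-$i$ agents with a type-$\ell$ partner is $x_{\{i,\ell\}}$ for $\ell \ne i$ and $2x_{ii}$ for $\ell = i$; all coefficients therefore lie in $\{0,1,2\}$, the dummy type being treated as the least preferred type for each $i$. There are $\mathcal{O}(k^2)$ constraints, each variable is at most $n$, and the encoding length is $\mathcal{O}(k^2 \log n)$ (plus an $\mathcal{O}(k^4)$-bit constraint matrix).

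The heart of the argument is to show that the optimum of this program equals $\max(\worst,\secondworst)$, where an infeasible program corresponds to the assertion that no matching of $I$ realises $(\worst,\secondworst)$ (such a pair then contributes nothing in Corollary~\ref{lma:max-I-stable}, which is harmless, since the all-dummy matching realises some $I$-stable pair). In one direction, any realising matching $M$ yields a feasible integer point simply by counting the pairs of each type-combination. In the other, from a feasible point $x$ I would build a matching by repeatedly taking $x_{ij}$ not-yet-used agents of type $i$ together with $x_{ij}$ of type $j$ (for $i \ne j$) and pairing them, and pairing $2x_{ii}$ not-yet-used type-$i$ agents among themselves --- the capacity constraints ensuring agents never run short and acceptability ensuring all pairs are legal --- and then verify that this matching realises $(\worst,\secondworst)$: constraint (iii) forbids any type-$i$ agent a partner worse for $i$ than $\worst(i)$, so the least desirable partner of a type-$i$ agent is of type no worse than $\worst(i)$, while constraint (iv) caps at one the number of type-$i$ agents with a partner strictly worse for $i$ than $\secondworst(i)$, which (using totality of the preorder over acceptable types together with the dummy) is precisely the statement that the second least desirable partner of a type-$i$ agent is of type no worse than $\secondworst(i)$. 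The degenerate cases $\worst(i) = k+1$, $\secondworst(i) = k+1$ and $\secondworst(i) = \varnothing$ (a singleton type) all reduce to special cases of the above. Finally, applying Theorem~\ref{thm:ILP} with $\mathcal{O}(k^2)$ variables, $M_x = n$, objective coefficients of absolute value $\mathcal{O}(1)$ and encoding length $\mathcal{O}(k^2 \log n)$ solves the program, and hence computes $\max(\worst,\secondworst)$, within the claimed time bound.

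I expect the main obstacle to be getting constraints (iii) and especially (iv) exactly right and then verifying the equivalence with matchings. The delicate points are that the within-type variable $x_{ii}$ contributes \emph{two} agents per unit to the counts in (iii)--(iv), that types tied with $\worst(i)$ (and types lying strictly between $\worst(i)$ and $\secondworst(i)$) must be permitted under (iv) but only with total multiplicity one, and that the dummy type and the undefined value of $\secondworst$ behave correctly in the degenerate cases; once these linear forms are pinned down, the two-way correspondence with matchings and the appeal to Theorem~\ref{thm:ILP} are routine.
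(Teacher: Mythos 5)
Your proposal is correct and follows essentially the same route as the paper: encode the pair counts $n_{\{i,j\}}$ as ILP variables, express the capacity, $\worst$- and $\secondworst$-realisation conditions as linear constraints, and invoke Theorem~\ref{thm:ILP} with $\mathcal{O}(k^2)$ variables. If anything, you are more careful than the paper's own write-up about the within-type pairs (the coefficient $2$ on $x_{ii}$) and about spelling out the two-way correspondence between feasible integer points and realising matchings.
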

\begin{proof}
Suppose that in total there are $n$ agents. We compute $\max(\worst,\secondworst)$ by solving a suitable instance of an Integer Linear Programming. For each unordered pair of distinct values $\{i,j\} \in [k+1]^{(2)}$, the variable $n_{\{i,j\}}$ represents the number of pairs in the matching consisting of one agent of type $i$ and another of type $j$. Recall that $N_i$ is the set of agents of type $i$. We then have the following integer linear program:

\begin{align*}
\text{\textbf{maximize}} &~ ~ \sum_{1 \leq i, j \leq k} n_{\{i,j\}} & \\
\text{\textbf{subject to}} &~ ~ \sum_{j\in[k+1]} n_{\{i,j\}} = |N_i|, & ~ \forall i \in [k] \\
 ~ &~ ~ n_{\{i,j\}} \geq 0, & ~ \forall i,j \in [k+1] \\
  ~ &~ ~ \sum_{j \succeq_i \worst(i)} n_{\{i,j\}} = |N_i|, & ~ \forall i \in [k] \\
\text{\textbf{and}} &~ ~ \sum_{j \weaklypref_i \secondworst(i)} n_{\{i,j\}} \geq |N_i| - 1, & ~ \forall i \in [k].
\end{align*}

The first two constraints ensure that every agent is involved in exactly one pair, perhaps with a dummy agent.  The third constraint ensures that no agent of type $i$ is assigned a partner of type worse than $\worst(i)$, and the last constraint ensures that no more than one agent of type $i$ is assigned a partner of type worse than $\secondworst(i)$.
The objective function seeks to maximise the total number of pairs that do not involve dummy agents.

The above integer linear program has $3k+(k+1)^2$ constraints and $(k+1)^2$ variables. The upper bound on the absolute value a variable can take is $n$. Therefore, by Theorem \ref{thm:ILP}, this maximisation problem for any pair of candidate functions $(\worst,\secondworst)$ can be solved in time $k^{\mathcal{O}(k^2)}\log^3 n$.
\end{proof}

It then follows that \textsc{Typed Max SRTI} is in \cplxty{FPT} parameterised by the number $k$ of different types in the instance.

\begin{corollary}\label{cor:typed-maxsrti-fpt}
\textsc{Typed Max SRTI} can be solved in time $k^{\mathcal{O}(k^2)}\log^3 n + \mathcal{O}(n)$. If we are only interested in computing the size of the maximum cardinality matching, and not the matching itself, this can be done in time $k^{\mathcal{O}(k^2)}\log^3 n$.
\end{corollary}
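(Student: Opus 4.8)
The plan is to assemble Corollary~\ref{cor:typed-maxsrti-fpt} directly from the three ingredients already in place: the structural characterisation of stability (Lemma~\ref{stability-test-srti}), the reduction of the optimal cardinality to a maximum over $I$-stable function pairs (Corollary~\ref{lma:max-I-stable}), and the per-pair ILP computation (Lemma~\ref{lma:max-srti-calculate}). The key point is that the number of candidate pairs $(\worst,\secondworst)$ to consider is bounded by a function of $k$ alone, so iterating over all of them and taking the best keeps us in \cplxty{FPT}.

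First I would bound the number of feasible pairs $(\worst,\secondworst)$. A function $\worst:[k]\to[k+1]$ has at most $(k+1)^k$ possibilities, and $\secondworst:[k]\to[k+2]$ has at most $(k+2)^k$ possibilities, so there are at most $(k+1)^k(k+2)^k = k^{\mathcal{O}(k)}$ pairs to examine; checking feasibility and $I$-stability of each pair (conditions involving $\worst(i)$, $\secondworst(i)$, the acceptability relations $\succ_i$, and the counts $|N_i|$) takes time polynomial in $k$ and in the input size, which is negligible compared with the ILP step. For each of these $k^{\mathcal{O}(k)}$ pairs that turns out to be $I$-stable, I would invoke Lemma~\ref{lma:max-srti-calculate} to compute $\max(\worst,\secondworst)$ in time $k^{\mathcal{O}(k^2)}\log^3 n$. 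By Corollary~\ref{lma:max-I-stable}, the largest value obtained over all $I$-stable pairs equals the cardinality of the maximum stable matching, which gives total time $k^{\mathcal{O}(k)}\cdot k^{\mathcal{O}(k^2)}\log^3 n = k^{\mathcal{O}(k^2)}\log^3 n$ for computing the size. This establishes the second sentence of the corollary.

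For the first sentence, I would argue that once the optimal $I$-stable pair $(\worst^*,\secondworst^*)$ and the corresponding optimal solution $(n_{\{i,j\}})$ to its ILP are known, an actual matching of that cardinality realising $(\worst^*,\secondworst^*)$ can be constructed greedily: for each unordered pair $\{i,j\}$ with $n_{\{i,j\}}>0$, arbitrarily select $n_{\{i,j\}}$ agents of type $i$ and $n_{\{i,j\}}$ agents of type $j$ not yet matched and pair them up (when $i=j$, pair up $2n_{\{i,i\}}$ distinct agents of type $i$); the ILP constraints guarantee the agent counts work out exactly, and Lemma~\ref{stability-test-srti} together with feasibility of $(\worst^*,\secondworst^*)$ guarantees the resulting matching is stable. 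This post-processing visits each agent a constant number of times, contributing the additive $\mathcal{O}(n)$ term (also needed simply to read and set up the agent sets), giving overall time $k^{\mathcal{O}(k^2)}\log^3 n + \mathcal{O}(n)$.

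I do not expect any genuine obstacle here: the corollary is essentially a bookkeeping assembly of the preceding lemmas, and the only thing requiring a little care is verifying that the greedy reconstruction in the last paragraph indeed produces a matching that both has the claimed cardinality and realises the chosen functions, so that Lemma~\ref{stability-test-srti} applies to certify stability. The dominant cost is the $k^{\mathcal{O}(k^2)}\log^3 n$ term from the ILP calls in Lemma~\ref{lma:max-srti-calculate}; multiplying by the $k^{\mathcal{O}(k)}$ pairs does not change this asymptotic form since $k^{\mathcal{O}(k)}\cdot k^{\mathcal{O}(k^2)} = k^{\mathcal{O}(k^2)}$.
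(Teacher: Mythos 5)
Your proposal is correct and follows essentially the same route as the paper: enumerate the at most $(k+1)^k(k+2)^k$ candidate pairs $(\worst,\secondworst)$, test $I$-stability via Lemma~\ref{stability-test-srti}, solve the ILP of Lemma~\ref{lma:max-srti-calculate} for each $I$-stable pair, take the maximum by Corollary~\ref{lma:max-I-stable}, and reconstruct a matching from the optimal $n_{\{i,j\}}$ values in $\mathcal{O}(n)$ time. Your explicit greedy reconstruction step is just a spelled-out version of what the paper states in one sentence, and it is valid because in a typed instance agents of the same type are interchangeable.
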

\begin{proof} Let $I$ be a typed instance of SRTI. We consider each possible pair of functions $\worst: [k] \rightarrow [k+1]$ and $\secondworst: [k] \rightarrow [k+2]$ in turn; there are at most $(k+1)^k(k+2)^k$ such pairs of functions.
By Lemma \ref{stability-test-srti}, we can determine in time $\mathcal{O}(k^2)$ whether $(\worst,\secondworst)$ is $I$-stable. 
If none of the candidate pairs of functions is $I$-stable, we conclude that there is no stable matching and terminate.
For each $I$-stable pair of functions $(\worst, \secondworst)$, we compute $\max(\worst,\secondworst)$ in time $k^{\mathcal{O}(k^2)}\log^3 n$, by Lemma \ref{lma:max-srti-calculate}. We then take the maximum value of $\max(\worst, \secondworst)$ over all $I$-stable pairs of functions $(\worst,\secondworst)$ which, by Corollary \ref{lma:max-I-stable}, is equal to the cardinality of the largest stable matching in $I$. Given all values of $n_{\{i,j\}}$ and using suitable data structures, a maximum cardinality matching can easily be constructed in time $\mathcal{O}(n)$.
\end{proof}

\subsection{An FPT algorithm for \textsc{Typed Max SMTI}}\label{sec:typed-maxsmti}

SMTI is a bipartite restriction of SRTI and therefore, by Corollary \ref{cor:typed-maxsrti-fpt}, we see that \textsc{Typed Max SMTI} belongs to FPT when parameterised by the number of types in the instance.  In this section we show that the running time can be improved in the bipartite setting of SMTI and HRT.  

In this setting, we no longer need to consider the function $\secondworst$ in order to determine stability or otherwise of the matching: it is enough to consider only the type of the least desirable partner assigned to any agent of each type.  The following characterisation of stability in terms of the function $\worst$ can easily be deduced from the proof of Lemma \ref{stability-test-srti}.

\begin{lemma}
Let $I$ be a typed instance of SMTI. Then a matching $M$ in $I$ is stable if and only if there is no pair $(i,j) \in [k]^{(2)}$, $i\neq j$, such that $j \pref_i \worst_M(i)$ and $i \pref_j \worst(j)_M$.
\end{lemma}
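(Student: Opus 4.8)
The plan is to reuse the argument in the proof of Lemma~\ref{stability-test-srti} almost verbatim, after observing that the bipartite structure of SMTI makes condition~(2) of that lemma (the $\secondworst$ condition) redundant. The single observation that drives everything is that in an instance of SMTI a blocking pair necessarily consists of one man and one woman, and that in a \emph{typed} SMTI instance no type can contain both men and women: if a man $x$ and a woman $y$ shared a type, then requirement~(1) of the definition of a typed instance would force $x$ and $y$ to have identical preference lists when restricted to $\agentset\setminus\{x,y\}$, but $x$'s list ranges over women and $y$'s over men, so both lists would have to be empty, contradicting the standing assumption that every type finds some other type acceptable. Hence the two agents of any blocking pair always have distinct types, and the same-type case of Lemma~\ref{stability-test-srti} never arises.

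For the forward direction I would argue by contraposition. Suppose there is a pair $(i,j)\in[k]^{(2)}$ with $i\neq j$, $j\pref_i\worst_M(i)$ and $i\pref_j\worst_M(j)$. Exactly as in the proof of Lemma~\ref{stability-test-srti}, pick an agent $a$ of type $i$ matched to a partner of type $\worst_M(i)$ and an agent $b$ of type $j$ matched to a partner of type $\worst_M(j)$; then $b\pref_a M(a)$ and $a\pref_b M(b)$, since type $i$ strictly prefers every agent of type $j$ to $\worst_M(i)$, and symmetrically. It remains to check $(a,b)\notin M$ and that $a$ and $b$ find each other acceptable: the former holds because $\type(M(a))=\worst_M(i)\prec_i j=\type(b)$ (so $M(a)\neq b$), and the latter because $j\pref_i\worst_M(i)$ exhibits type $j$ as acceptable to type $i$ (and symmetrically). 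Thus $(a,b)$ is a blocking pair, contradicting stability of $M$.

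For the converse, again by contraposition, assume $M$ is not stable and let $(a,b)\notin M$ be a blocking pair with $\type(a)=i$ and $\type(b)=j$. By the opening observation $i\neq j$. From $b\pref_a M(a)$ we get $j\pref_i\type(M(a))\weaklypref_i\worst_M(i)$, hence $j\pref_i\worst_M(i)$; symmetrically $i\pref_j\worst_M(j)$. So the pair $(i,j)$ witnesses the failure of the stated condition, which completes the proof.

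I do not expect any genuine obstacle here: the only point that needs a moment's care is the claim that a blocking pair in a typed SMTI instance always consists of agents of two different types — this is precisely what licenses dropping condition~(2) of Lemma~\ref{stability-test-srti} — and once that is established the two directions are just the different-type cases already treated in that lemma's proof.
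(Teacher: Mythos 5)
Your proof is correct and takes essentially the same route the paper does: the paper simply remarks that this lemma ``can easily be deduced from the proof of Lemma~\ref{stability-test-srti}'', and your argument is precisely that deduction, reusing the different-type case of that proof in both directions. Your explicit observation that no type in a typed SMTI instance can contain both a man and a woman --- so that the same-type/$\secondworst$ case of Lemma~\ref{stability-test-srti} never arises --- is the one detail the paper leaves implicit, and your justification of it is sound given the standing assumption that every type finds at least one other type acceptable.
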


We also extend the terminology of the previous section to describe a function $\worst$ as $I$-stable if there is no pair $(i,j) \in [k]^{(2)}$, $i\neq j$, such that $j \pref_i \worst(i)$ and $i \pref_j \worst(j)$.  Similarly, wee say that a matching $M$ realises $\worst$ if the least desirable partner of any agent of type $i$ is no worse than $\worst(i)$, and write $\max(\worst)$ for the cardinality of the largest matching which realises $\worst$.  With this terminology, it is clear that (as in Corollary \ref{lma:max-I-stable}) the cardinality of the largest stable matching in a typed instance $I$ of SMTI is equal to $\max\{\max(\worst): \worst \text{is $I$-stable}\}$.

These observations alone would allow us to simplify the ILP formulation by omitting the final constraint, and would reduce the number of instances of ILP that must be solved to $k^k$.  However, we can make further improvements by using a network flow method to compute the value of $\max(\worst)$ for each relevant function $\worst$ in polynomial time.

\begin{lemma}\label{lma:dec-flows}
Let $I$ be a typed instance of $SMTI$, and fix an $I$-stable function $\worst$.  We can compute $\max(\worst)$ in time $\mathcal{O}(k^3 \log^2 n)$.
\end{lemma}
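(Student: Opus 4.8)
The plan is to reformulate the computation of $\max(\worst)$ as (essentially) a single maximum flow computation on a network whose size depends only on $k$. This is the transportation-type relaxation already underlying the integer program in the proof of Lemma~\ref{lma:max-srti-calculate}, but with the $\secondworst$-constraint deleted, which is legitimate by the stability characterisation for SMTI stated above; what is left is a pure transportation problem, i.e.\ max flow. First I would note that in an instance of SMTI each type consists entirely of men or entirely of women (a man and a woman cannot have identical preference lists over $\agentset\setminus\{x,y\}$ unless both lists are empty, which we have excluded), so the types split into a set $P$ of man-types and a set $Q$ of woman-types. I then build a network with a source $s$, a sink $t$, a node $u_i$ for each $i\in P$ with an arc $s\to u_i$ of capacity $|\agentset_i|$, a node $v_j$ for each $j\in Q$ with an arc $v_j\to t$ of capacity $|\agentset_j|$, and, for every mutually acceptable pair $(i,j)\in P\times Q$ with $j\weaklypref_i\worst(i)$ and $i\weaklypref_j\worst(j)$, an arc $u_i\to v_j$ of capacity $\min(|\agentset_i|,|\agentset_j|)$. (When $\worst(i)$ is the dummy type these conditions are vacuous, as they should be.) Integral $s$--$t$ flows then correspond to matchings in $I$ in which no agent of any type $i$ is matched to a partner of type strictly worse than $\worst(i)$, with the flow value equal to the number of matched pairs.

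The one point that needs care is that \emph{realising} $\worst$ is slightly stronger than merely respecting these arc constraints: if $\worst(i)$ is a genuine type then $\worst_M(i)$ must be no worse than $\worst(i)$, and since leaving an agent of type $i$ unmatched would make $\worst_M(i)$ the dummy type (the worst of all), realising $\worst$ forces \emph{every} agent of such a type to be matched. I would encode this by imposing a lower bound of $|\agentset_i|$ on the arc $s\to u_i$ whenever $\worst(i)$ is a real type, and symmetrically on $v_j\to t$ whenever $\worst(j)$ is a real type; then $\max(\worst)$ is the value of a maximum integral flow meeting these lower bounds, and if no such flow exists then no matching realises $\worst$ and this function can be discarded. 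Discarding never loses the optimum, since the $\worst$-function of any weakly stable matching -- one always exists by the extension of Gale--Shapley -- is both $I$-stable and realised by that matching, so the enclosing maximisation is over a nonempty set. To carry this out I would use the textbook reduction of ``maximum flow subject to lower bounds'' to ordinary maximum flow (add a super-source, a super-sink and a return arc $t\to s$, reduce each arc's capacity by its lower bound, route the lower bounds out of the super-source, test feasibility by checking that a maximum super-source--super-sink flow saturates all its out-arcs, then maximise the residual $s$--$t$ flow); alternatively, one can binary search on the target value of $\max(\worst)$ and run one feasibility-flow per candidate value, which is the form in which the $\mathcal{O}(\log n)$ flow computations per function $\worst$ (as advertised for the overall \textsc{Max SMTI} algorithm) naturally arise.

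Each of these is a maximum flow computation on a network with $\mathcal{O}(k)$ vertices, $\mathcal{O}(k^2)$ arcs and capacities bounded by $n$, costing $\mathcal{O}(k^3)$ arithmetic operations on $\mathcal{O}(\log n)$-bit integers via a standard strongly polynomial algorithm, and only $\mathcal{O}(\log n)$ of them are needed; tracking bit-sizes gives the claimed $\mathcal{O}(k^3\log^2 n)$ bound. Correctness of the returned value follows by combining the correspondence between integral flows and arc-constrained matchings with the observation that realising $\worst$ is exactly those arc constraints together with the lower bounds. I expect the main obstacle to be precisely this lower-bound/feasibility point: an unadorned maximum-cardinality flow need not saturate the arcs leaving the constrained source-side nodes even when a fully feasible flow exists, so a single plain max-flow call does not suffice, and one must correctly recognise the case in which no matching realises a given (otherwise $I$-stable) function $\worst$. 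The network construction itself and the running-time accounting are routine.
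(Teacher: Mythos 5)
Your proposal is correct and rests on the same core idea as the paper's proof: reduce the computation of $\max(\worst)$ to maximum flow on a bipartite type-level network with $\mathcal{O}(k)$ vertices, source/sink arcs of capacity $|\agentset_i|$, and internal arcs only for mutually acceptable type pairs satisfying $j\weaklypref_i\worst(i)$ and $i\weaklypref_j\worst(j)$. You also correctly isolate the one delicate point, namely that realising $\worst$ forces every agent of a type $i$ with $\worst(i)$ a non-dummy type to be matched, so a plain max-flow call on the naive network is not enough. Where you diverge is in how this constraint is enforced. The paper keeps all arcs as pure capacity constraints and instead adds two dummy vertices $d_w$ and $d_m$ that absorb unmatched agents, with arcs into/out of them only for types whose $\worst$ value is the dummy type; the capacities $n_2-c$, $n_1-c$ and $\min\{n_1,n_2\}-c$ on the dummy arcs cap the number of unmatched agents, and a binary search on $c$ combined with the test ``is the max flow exactly $n_1+n_2-c$?'' recovers $\max(\worst)$. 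You instead put lower bounds of $|\agentset_i|$ on the source/sink arcs of the constrained types and invoke the standard reduction from flows with lower bounds to ordinary max flow (or, equivalently, your binary-search variant). Both are sound; your lower-bound version is arguably cleaner, handles infeasibility of a given $\worst$ more explicitly, and in its non-binary-search form even saves a $\log n$ factor, while the paper's dummy-vertex construction avoids any appeal to the lower-bound machinery and uses only vanilla max flow. Your side remarks --- that each type is entirely male or entirely female, and that the outer maximisation ranges over a nonempty set --- are correct and slightly more careful than the paper, which asserts the former without justification.
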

\begin{proof}
The proof is structured as follows. 
Suppose that in total there are $n_1$ women and $n_2$ men, so we have that $n=n_1+n_2$.
Note that $\max(\worst)$ is at most $\min\{n_1, n_2\}$, which in turn is at most $\lfloor n/2 \rfloor$. Therefore, using a binary search strategy, we can determine the maximum size of a matching realising $\worst$ by solving $\mathcal{O}(\log n)$ instances of the decision problem ``Is $\max(\worst)$ at least $c$?", where $c \in \{1,\ldots,\min\{n_1, n_2\}\}$.
We will show that we can determine whether $\max(\worst) \geq c$ by solving \textsc{Max Flow} on a directed network $D$ with $\mathcal{O}(k)$ vertices, in which the maximum capacity of any edge is $\mathcal{O}(n)$ (see Figure~\ref{fig:flow}); we can construct $D$ from $I$ in time $\mathcal{O}(k^2 \log n)$. \textsc{Max Flow} can be solved on $D$ in time $\mathcal{O}(k^3 \log n)$, using an algorithm due to Orlin \cite{orlin13}, where the $\log n$ factor is required to carry out arithmetic operations on integers of size $\mathcal{O}(n)$. Therefore, we conclude that we can compute $\max(\worst)$ in time $\mathcal{O}(k^3 \log^2 n)$.

We now show how to construct the network $D$ (depicted in Figure~\ref{fig:flow}).
Assume, without loss of generality, that types $1,\ldots,k_1$ are types of women and types $k_1+1,\ldots,k$ are types of man. 
We construct a directed network $D=(V,E)$ with vertex set $V= \{s,t,v_1,\ldots,v_k,d_w,d_m\}$, where vertex $v_i$ corresponds to type $i$, and $d_w$ and $d_m$ both correspond to the dummy type. The edge set $E$ is consist of the following directed edges with associated capacities:
\begin{itemize}
\item $\overrightarrow{sv_i}$ for $1 \leq i \leq k_1$, with capacity $|N_i|$,
\item $\overrightarrow{sd_w}$, with capacity $n_2-c$,
\item $\overrightarrow{v_iv_j}$ for all pairs $(i,j)$ with $1 \leq i \leq k_1$ and $k_1 + 1 \leq j \leq k$ such that $j \weaklypref_i \worst(i)$ and $i \weaklypref_j \worst(j)$, with capacity $\min\{|N_i|,|N_j|\}$,
\item $\overrightarrow{v_id_m}$ for all $1 \leq i \leq k_1$ such that $\worst(i) = k+1$, with capacity $|N_i|$,
\item $\overrightarrow{d_wv_j}$ for all $k_1 + 1 \leq j \leq k$ such that $\worst(j) = k + 1$, with capacity $|N_j|$,
\item $\overrightarrow{d_wd_m}$, with capacity $\min \{n_1,n_2\} - c$,
\item $\overrightarrow{v_jt}$ for $k_1 + 1 \leq j \leq k$, with capacity $|N_j|$,
\item $\overrightarrow{d_mt}$, with capacity $n_1-c$.
\end{itemize}

\begin{figure}[h]
\centering
	\includegraphics[width=0.7\textwidth]{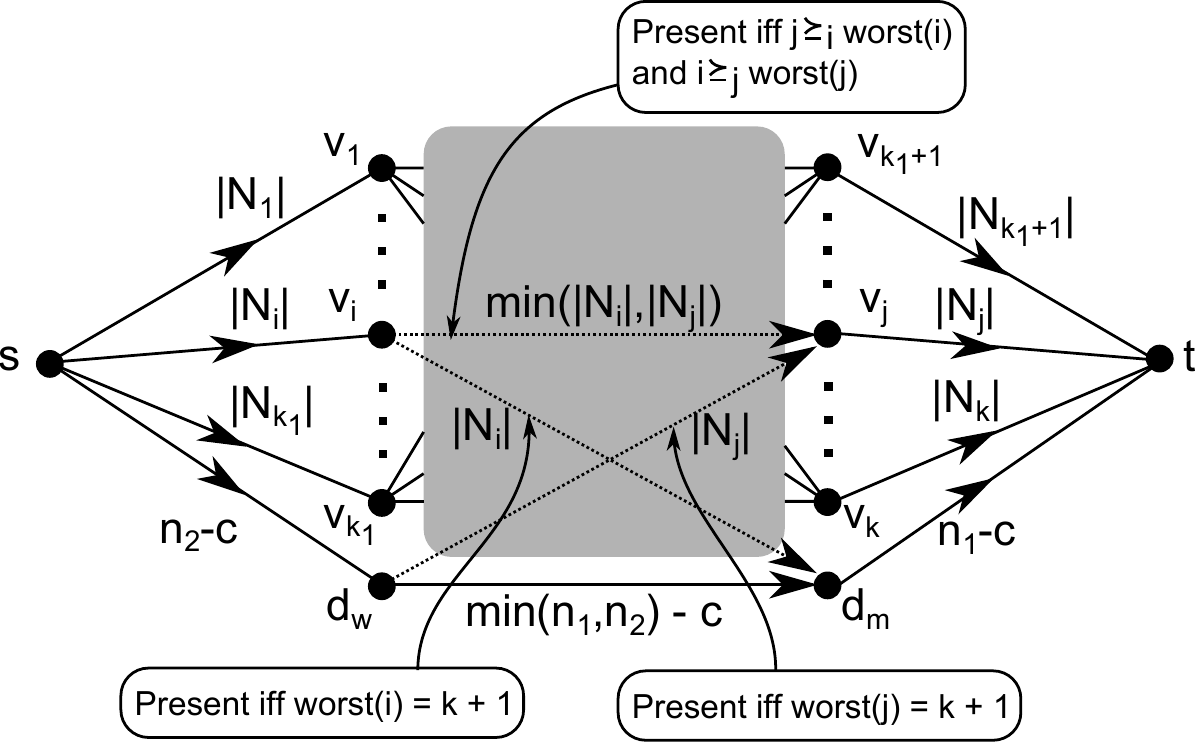}
	\caption{Network $D$, constructed from an instance of SMTI in the proof of Lemma~\ref{lma:dec-flows}. Types $1,\ldots,k_1$ are types of women and types $k_1+1,\ldots,k$ are types of man. Vertices $d_w$ and $d_m$ both correspond to the dummy type. 
	}\label{fig:flow}
\end{figure}

By construction, there are $\mathcal{O}(k)$ vertices in the network and the maximum capacity of any edge is $\mathcal{O}(n)$. Therefore, it is clear that we can construct $D$ from $I$ in time $\mathcal{O}(k^2 \log n)$ where the $\log n$ factor allows for the time required to read integers of size $\mathcal{O}(n)$ in the input. So it only remains to show that we can decide whether $\max(\worst) \geq c$ by solving \textsc{Max Flow} on $D$.

We claim that the maximum flow from $s$ to $t$ in $D$ is equal to $n_1 + n_2 - c$ if and only if there is a matching of cardinality at least $c$ which realises $\worst$ (i.e., if and only if $\max(\worst) \geq c$). Note that the flow clearly cannot exceed this value, summing the capacities of the outgoing edges of $s$. 

$(\Rightarrow)$: Suppose first that there exists a matching $M$ of cardinality at least $c$ that realises $\worst$. We define a flow $f$ in $D$ of value $n_1 + n_2 - c$ as follows:
\begin{itemize}
\item for all edges incident with $s$ or $t$, we assign a flow equal to the capacity;
\item for each edge $\overrightarrow{v_iv_j}$, we assign a flow equal to the number of pairs in $M$ consisting of one agent of type $i$ and one agent of type $j$;
\item for each edge $\overrightarrow{v_id_m}$, we assign a flow equal to the number of unassigned agents of type $i$ in $M$;
\item for each edge $\overrightarrow{d_wv_j}$, we assign a flow equal to the number of unassigned agents of type $j$ in $M$; and
\item to $\overrightarrow{d_wd_m}$, we assign a flow equal to $|M| - c$.
\end{itemize}
We need to show that $f$ is feasible and that value of $f$ is $n_1 + n_2 - c$.
It is easy to verify that no edge is assigned a flow greater than its capacity, so 
to prove that $f$ is feasible 
it suffices to demonstrate that the flow is conserved at each vertex other than $s$ and $t$.  For any vertex $v_i$ with $1 \leq i \leq k_1$, the flow into $v_i$ is equal to $|N_i|$, and the total flow out of $v_i$ is the sum of the numbers of agents of type $i$ matched with agents of type $j$ (for all $j\geq k_1+1$) plus the number of agents of type $i$ that are unmatched, which must be equal to $|N_i|$. A symmetric argument holds for each vertex $v_j$ with $k_1+1 \leq j \leq k$.  The flow into $d_w$ is $n_2 - c$, and the total flow out of $d_w$ is equal to the total number of men that are unmatched in $M$, plus $|M| - c$: since each pair in $M$ contains exactly one distinct man, the number of unmatched men in $M$ is equal to $n_2 - |M|$, so the total flow out of $d_w$ is $n_2 - |M| + (|M| - c) = n_2 - c$.  A symmetric argument for $d_m$ completes our claim that we have defined a feasible flow. It remains to show that $f$ does indeed have value $n_1 + n_2 - c$: note that every outgoing edge from $s$ is saturated, and the sum of the capacities of these edges is equal to $n_1 + n_2 - c$.

$(\Leftarrow)$: Conversely, suppose that we have a flow $f$ in $D$ with value $n_1 + n_2 - c$. 
We define a matching $M_f$ in $I$ by 
greedily including $f(\overrightarrow{v_iv_j})$ pairs consisting of one agent of type $i$ and one agent of type $j$, for every edge $\overrightarrow{v_iv_j}$ in $D$. We need to show that $M_f$: (1) is a matching, (2) has cardinality at least $c$, and (3) realises $\worst$. To check that $M_f$ is a matching, we need to verify that we have not assigned more than $|N_i|$ agents of any type $i$.  If $i$ is a type of women, then the total number of pairs involving an agent of type $i$ is the total flow out from $v_i$, which cannot be more than the capacity of the single incoming edge to $v_i$, which is equal to $|N_i|$. Similarly, if $i$ is a type of men, then the total number of pairs involving an agent of type $i$ is the total flow into $v_i$, which cannot be more than the capacity of the single outgoing edge from $v_i$, which is equal to $|N_i|$. 
The cardinality of $M_f$ is equal to the sum of flows on edges $\overrightarrow{v_iv_j}$. As the set of such edges, together with all outgoing edges of $d_w$ and all incoming edges of $d_m$, forms an $s$-$t$ cut in $D$ (in which all edges are directed the same way), we can conclude that the cardinality of $M_f$ is equal to the value of $f$ minus the sum of flows on outgoing edges of $d_w$ and incoming edges of $d_m$. Note that the sum of flows on outgoing edges of $d_w$ cannot be more than the capacity of the single incoming edge of $d_w$, which is $n_2 - c$, and similarly the sum of flows on incoming edges of $d_m$ cannot be more than $n_1 - c$.  Thus we see that $|M_f|$ is at least $n_1 + n_2 - c - (n_2 - c) - (n_1 - c) = c$.
It is clear by construction of $D$ and $M_f$ that $M_f$ realises $\worst$. This completes our proof.
\end{proof}

The next corollary then follows along the same lines as Corollary \ref{cor:typed-maxsrti-fpt}.

\begin{corollary}\label{cor:typed-maxsmti-fpt}
Computing the size of the maximum cardinality matching in a typed instance of SMTI can be done in time $k^{\mathcal{O}(k)} \cdot \log^2 n$.
\end{corollary}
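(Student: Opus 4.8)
The plan is to mirror the proof of Corollary~\ref{cor:typed-maxsrti-fpt}, replacing the ILP subroutine of Lemma~\ref{lma:max-srti-calculate} by the network-flow subroutine of Lemma~\ref{lma:dec-flows} and using the simpler, $\secondworst$-free characterisation of stability available in the bipartite setting. First I would enumerate every candidate function $\worst \colon [k] \to [k+1]$; there are at most $(k+1)^k = k^{\mathcal{O}(k)}$ of these, and for each one a direct inspection of all pairs $(i,j) \in [k]^{(2)}$ decides in time $\mathcal{O}(k^2)$ whether $\worst$ is $I$-stable, i.e. whether there is no pair $i \neq j$ with $j \pref_i \worst(i)$ and $i \pref_j \worst(j)$.

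For each $I$-stable function $\worst$ I would then invoke Lemma~\ref{lma:dec-flows} to compute $\max(\worst)$ in time $\mathcal{O}(k^3 \log^2 n)$, and take the maximum of $\max(\worst)$ over all $I$-stable $\worst$. By the observation recorded just after the SMTI stability lemma (the exact analogue of Corollary~\ref{lma:max-I-stable}), this maximum equals the cardinality of the largest stable matching in the typed SMTI instance, which is the quantity we want. Since every instance of SMTI admits a stable matching, at least one candidate function is $I$-stable, so the maximum is well defined; a reader who prefers to stay agnostic about this may simply note that finding no $I$-stable candidate would correctly report that no stable matching exists.

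The total running time is the number of candidate functions times the cost per function, namely $k^{\mathcal{O}(k)} \cdot \bigl(\mathcal{O}(k^2) + \mathcal{O}(k^3 \log^2 n)\bigr) = k^{\mathcal{O}(k)} \cdot \log^2 n$, the polynomial-in-$k$ factors being absorbed into $k^{\mathcal{O}(k)}$. I do not expect any genuine obstacle here; the only points needing a little care are that the $\mathcal{O}(\log n)$ binary search is already internal to Lemma~\ref{lma:dec-flows} (so it contributes to the $\log^2 n$ rather than adding a further factor), and that we may omit the $\mathcal{O}(n)$ term that appears in Corollary~\ref{cor:typed-maxsrti-fpt}, since here we are only required to output the size of a maximum cardinality stable matching and not the matching itself.
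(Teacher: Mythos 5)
Your proposal is correct and follows exactly the route the paper intends: the paper gives no explicit proof, stating only that the corollary ``follows along the same lines as'' Corollary~\ref{cor:typed-maxsrti-fpt}, i.e.\ enumerate the $(k+1)^k$ candidate functions $\worst$, test $I$-stability in $\mathcal{O}(k^2)$ each, and compute $\max(\worst)$ via the flow construction of Lemma~\ref{lma:dec-flows}. Your additional remarks --- that the $\mathcal{O}(\log n)$ binary search is already inside Lemma~\ref{lma:dec-flows} and that the $\mathcal{O}(n)$ reconstruction term is dropped because only the size is required --- are exactly the points that justify the stated running time.
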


\subsection{FPT algorithms for \textsc{Typed Max Size Min BP SMTI} and \textsc{Typed Max Size Min BA SMTI}}
\label{sec:typed-minbp}

As stated in Section \ref{sec:introduction}, in some settings the size of the matching takes priority over the stability criterion. That is, the mechanism designers are willing to tolerate a small degree of instability if that leads to a matching of larger size. We can extend the methods from previous sections to deal with this situation in both the bipartite and non-bipartite setting; for simplicity of presentation we begin with the case of SMTI and then discuss how to extend the method to SRTI.

We begin by considering the problem of minimising the total number of blocking pairs. For the rest of this section, we assume that types $1$ to $k'$ are types of women, and types $k'+1$ to $k$ are types of men.  We further assume that all preference lists are extended to include the dummy type as their least desirable acceptable type. If an agent $a$ is unmatched in $M$, we say that $M(a)$ is of type $k+1$, the dummy type. As a first step, we translate the definition of a blocking pair 
into the setting of typed instances.

\begin{prop}\label{prop:bp-mod1}
Let $x$ and $y$ be agents of type $i$ and type $j$ respectively, and suppose that $M(x)$ is of type $j'$ 
and $M(y)$ is of type $i'$. 
Then $(x,y)$ is a blocking pair in $M$ if and only if $j \succ_i j'$ and $i \succ_j i'$.
\end{prop}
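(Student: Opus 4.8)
\textbf{Proof plan for Proposition~\ref{prop:bp-mod1}.}

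The plan is to prove both directions by unpacking the definition of a (weakly) blocking pair in SMTI and using the defining properties of a typed instance—in particular that every agent is indifferent between any two agents of the same type, and that agents' preferences over individuals are consistent with the type-level preference ordering. Since we have added dummy agents so that every agent is matched, $M(x)$ and $M(y)$ are genuine agents (of types $j'$ and $i'$ respectively, where the dummy type $k+1$ is allowed), and the convention that everyone prefers any acceptable partner to being unmatched is already absorbed into the dummy type being least desirable.

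For the forward direction, suppose $(x,y)$ is a blocking pair in $M$. By definition this means $x$ and $y$ are not matched to each other, $x$ prefers $y$ to $M(x)$, and $y$ prefers $x$ to $M(y)$; that is, $y \succ_x M(x)$ and $x \succ_y M(y)$. Now I would invoke the typed structure: $x$ has type $i$, $y$ has type $j$, $M(x)$ has type $j'$, so $y \succ_x M(x)$ translates at the level of types to $j \succ_i j'$ (using that the preference of an agent over individuals respects the type ordering, and that $x$ would be indifferent between $y$ and any other agent of type $j$, etc.). Symmetrically $x \succ_y M(y)$ gives $i \succ_j i'$. One small point to note: the strict preference $j \succ_i j'$ automatically implies $j \neq j'$, which in turn (since $x$ has type $i \neq$ ... wait, no) at least guarantees $y$ is not of the same type as $M(x)$, but we do not actually need $(x,y)\notin M$ as a separate hypothesis for the type-level statement, because $j\succ_i j'$ already forces $M(x)\neq y$.

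For the reverse direction, suppose $j \succ_i j'$ and $i \succ_j i'$. I would pick the specific agents $x$ (type $i$), $y$ (type $j$), $M(x)$ (type $j'$), $M(y)$ (type $i'$). Since $x$'s preferences over individuals are consistent with the type ordering and $x$ is indifferent between all agents of a common type, $j \succ_i j'$ yields $y \succ_x M(x)$; likewise $i \succ_j i'$ yields $x \succ_y M(y)$. Finally, $j \succ_i j'$ forces $j \neq j'$ and hence $y$ is not the agent $M(x)$, so $(x,y) \notin M$; thus all three conditions in the definition of a blocking pair hold and $(x,y)$ blocks $M$. The main (mild) obstacle is simply being careful with the translation between individual-level and type-level preferences—specifically checking that the indifference-within-types assumption lets us pass freely between "$x$ prefers $y$" and "type $i$ prefers type $j$" regardless of which representatives of types $j$, $j'$, $i'$ were chosen—and handling the dummy type uniformly; neither is deep, since both are immediate from the definition of a typed instance in Section~\ref{sec:def-basic-model}.
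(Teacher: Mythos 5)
Your proof is correct and is exactly the direct unpacking of the definitions that the paper intends: the paper states Proposition~\ref{prop:bp-mod1} without proof, treating it as immediate from the indifference-within-types assumption, and your two-directional argument (including the observation that $j \succ_i j'$ already forces $(x,y)\notin M$) is the natural way to make that explicit. No gaps.
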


Given this observation, we can obtain an expression for the total number of blocking pairs in $M$ 
that are comprised of an agent of type $i$ and an agent of type $j$.

\begin{lemma}
The number of blocking pairs $(x,y)$ in $M$ such that $x$ is a woman of type $i$ and $y$ is a man of type $j$ is given by
$$\left(\sum_{i' \prec_j i} n_{\{i',j\}} \right) \left(\sum_{j' \prec_i j} n_{\{i,j'\}} \right) = \sum_{1 \leq i', j' \leq k+1} n_{\{i',j\}} n_{\{i,j'\}} \left( \mathbb{1}_{i' \prec_j i} \mathbb{1}_{j' \prec_i j}\right),$$
where $\mathbb{1}_{a \prec_b c}$ is an indicator function that returns one if $a \prec_b c$ and zero otherwise. 
\end{lemma}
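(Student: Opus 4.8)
The plan is to prove the identity by direct counting, using Proposition~\ref{prop:bp-mod1} to characterise which pairs of agents form blocking pairs, and then using the fact that a typed instance is symmetric under permuting agents of the same type. First I would fix the types $i$ (a type of women) and $j$ (a type of men), and count the ordered pairs $(x,y)$ with $x$ of type $i$, $y$ of type $j$, that block $M$. By Proposition~\ref{prop:bp-mod1}, such a pair blocks if and only if $M(x)$ is of some type $j' \prec_i j$ and $M(y)$ is of some type $i' \prec_j i$. Crucially, since all agents of type $i$ have identical preferences and all agents of the opposite side are indifferent between agents of the same type, whether $(x,y)$ blocks depends only on the types $j' = \type(M(x))$ and $i' = \type(M(y))$, not on the identities of $x$ and $y$ themselves. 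Hence a woman $x$ of type $i$ is ``available to block with'' every man of type $j$ precisely when $\type(M(x)) \prec_i j$, and symmetrically for $y$.

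The counting then factorises. The number of women of type $i$ whose partner has type strictly worse (to type $i$) than $j$ is $\sum_{j' \prec_i j} n_{\{i,j'\}}$, because $n_{\{i,j'\}}$ counts the pairs in $M$ with one agent of type $i$ and one of type $j'$, and summing over all $j' \prec_i j$ counts all such women exactly once. Likewise the number of men of type $j$ whose partner has type strictly worse (to type $j$) than $i$ is $\sum_{i' \prec_j i} n_{\{i',j\}}$. Every choice of one such woman $x$ and one such man $y$ yields a distinct blocking pair, and conversely every blocking pair of this kind arises exactly once this way; so the total count is the product $\left(\sum_{i' \prec_j i} n_{\{i',j\}}\right)\left(\sum_{j' \prec_i j} n_{\{i,j'\}}\right)$, which is the left-hand side of the claimed identity. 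Finally, rewriting each sum as a sum over all $i', j' \in [k+1]$ weighted by the indicator functions $\mathbb{1}_{i' \prec_j i}$ and $\mathbb{1}_{j' \prec_i j}$, and expanding the product of sums into a double sum, gives the right-hand side; this last step is purely formal algebra.

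The main subtlety — the only place where any care is needed — is to check that the factorisation is legitimate: that is, that whether $x$ can block with $y$ depends only on $\type(M(x))$ and not on which agent of type $j$ we pair $x$ with, and symmetrically. This is exactly what Proposition~\ref{prop:bp-mod1} delivers (it phrases the blocking condition purely in terms of the four types involved), so there is genuinely no obstacle here; one should also note the minor point that $n_{\{i,j'\}}$ with $i = j'$ is handled consistently (an agent of type $i$ matched to another agent of type $i$ would only occur in SRTI, not SMTI, so in the bipartite setting the diagonal terms vanish and cause no difficulty). Beyond this the proof is a one-line application of the distributive law, so I would keep the write-up to two or three sentences.
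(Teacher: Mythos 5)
Your proof is correct and follows essentially the same route as the paper's: both invoke Proposition~\ref{prop:bp-mod1} to reduce the blocking condition to a condition on the four types involved, factorise the count into the product of the two sums, and treat the equality with the indicator-function form as purely formal algebra. Your extra remark about the factorisation being legitimate (and about diagonal terms) is a harmless elaboration of what the paper leaves implicit.
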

\begin{proof}
It is easy to verify that the left-hand side and the right-hand side of the equation are equal. For the remainder of the proof, we focus on the left-hand side of the equation. 
By Proposition \ref{prop:bp-mod1}, we know that the set of blocking pairs consisting of one agent of type $i$ and another of type $j$ is precisely
$$\{(x,y): \type(M(x)) \prec_i t_j \text{ and } \type(M(y)) \prec_j t_i\}.$$

The cardinality of this set is thus equal to the number of agents of type $i$ that are matched to an agent of type inferior to type $j$ ($\sum_{j' \prec_i j} n_{\{i,j'\}}$) multiplied by the number of agents of type $j$ that are matched to an agent of type inferior to type $i$ ($\sum_{i' \prec_j i} n_{\{i',j\}}$).
The result follows immediately.
\end{proof}

Summing over all possibilities for $i$ and $j$ gives the following result.

\begin{lemma}
The total number of blocking pairs in $M$ is given by
$$\sum_{1 \leq i,j \leq k} \sum_{1 \leq i', j' \leq k+1} n_{\{i',j\}} n_{\{i,j'\}} \left( \mathbb{1}_{i' \prec_j i} \mathbb{1}_{j' \prec_i j} \right).$$
\end{lemma}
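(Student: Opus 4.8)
The plan is to obtain the total count by summing the per-type-pair count from the previous lemma over all choices of a woman-type $i$ and a man-type $j$. First I would observe that every blocking pair $(x,y)$ in $M$ consists of exactly one woman and one man (since $M$ is a matching in an SMTI instance, and a blocking pair must consist of mutually acceptable agents who would each prefer the other to their current partner — in the bipartite setting these are necessarily one agent from each side). Hence if $x$ has type $i\in\{1,\dots,k'\}$ and $y$ has type $j\in\{k'+1,\dots,k\}$, the pair is counted exactly once in the sum over all such ordered pairs $(i,j)$.

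Next I would invoke the preceding lemma, which tells us that for fixed woman-type $i$ and man-type $j$ the number of blocking pairs $(x,y)$ with $x$ of type $i$ and $y$ of type $j$ equals
$$\sum_{1 \leq i', j' \leq k+1} n_{\{i',j\}} n_{\{i,j'\}} \left( \mathbb{1}_{i' \prec_j i} \mathbb{1}_{j' \prec_i j}\right).$$
Summing this over all pairs $(i,j)$ with $i$ a woman-type and $j$ a man-type gives the total number of blocking pairs. To match the stated formula, which ranges over $1 \leq i,j \leq k$ without restricting $i$ to women and $j$ to men, I would note that the extra terms introduced are all zero: if $i$ and $j$ are both women-types (or both men-types), then no agent of type $i$ finds an agent of type $j$ acceptable, so the relations $i' \prec_j i$ and $j' \prec_i j$ cannot both be witnessed by matched partners contributing a nonzero product — more carefully, preferences $\prec_i$ and $\prec_j$ are only defined over acceptable opposite-side types, so the indicator $\mathbb{1}_{j' \prec_i j}$ is zero whenever $j$ is not an opposite-side type of $i$. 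Thus extending the range of summation from (woman-type, man-type) pairs to all pairs $1 \leq i,j \leq k$ adds only zero terms, and the two expressions agree.

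The only mild subtlety — and the place I would be most careful — is the bookkeeping of the summation range and the convention that each unordered blocking pair is counted once. Because one summand fixes $i$ as a woman-type and $j$ as a man-type, each blocking pair is associated with a unique ordered pair $(i,j)$ in that restricted range, so there is no double counting; when we relax to $1\le i,j\le k$ we must be sure we are not now counting the same pair twice under both $(i,j)$ and $(j,i)$. This is handled by the observation that the cross terms where $i,j$ are on the same side vanish, and the terms where $i$ is a man-type and $j$ is a woman-type correspond to (man, woman) orderings — if the formula is read as counting ordered incidences this would double count, so one must either read $n_{\{i,j\}}$ with its unordered-pair convention and interpret the sum as over unordered type-pairs, or simply restrict to $i\le k'<j$; in either reading the claimed identity follows from the previous lemma by a direct summation with no further computation required.
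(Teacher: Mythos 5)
Your proof is correct and follows the same route as the paper, which gives no argument beyond summing the per-type-pair count of the preceding lemma over all choices of $i$ and $j$. Your extra care about the summation range is warranted: read literally over all ordered pairs $1 \leq i,j \leq k$, the displayed sum counts each blocking pair twice (the term for $(j,i)$ with $j$ a man-type and $i$ a woman-type equals the term for $(i,j)$), so the intended reading is the restricted one you identify, with $i$ ranging over woman-types and $j$ over man-types.
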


We can now prove our main result concerning \textsc{Typed Max Size Min BP SMTI}.

\begin{theorem}\label{thm:min-bp-fpt}
\textsc{Typed Max Size Min BP SMTI} belongs to $\FPT$ when parameterised by the number $k$ of different types in the given instance.
\end{theorem}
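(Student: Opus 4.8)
The plan is to reduce \textsc{Typed Max Size Min BP SMTI} to solving a bounded number of instances of \textsc{Integer Quadratic Programming}, and then invoke Theorem \ref{thm:IQP}. First I would set up the same variables as in the proof of Lemma \ref{lma:max-srti-calculate}: for each unordered pair $\{i,j\} \in [k+1]^{(2)}$, let $n_{\{i,j\}}$ denote the number of matched pairs consisting of one agent of type $i$ and one of type $j$ (with type $k+1$ the dummy type). A feasible assignment of these variables corresponds to a matching precisely when, for each type $i \in [k]$, the constraint $\sum_{j \in [k+1]} n_{\{i,j\}} = |N_i|$ holds, together with nonnegativity. The size of the matching is $\sum_{1 \le i,j \le k} n_{\{i,j\}}$, and by the final Lemma above, the number of blocking pairs is the quadratic form $\sum_{1\le i,j\le k}\sum_{1\le i',j'\le k+1} n_{\{i',j\}} n_{\{i,j'\}} \mathbb{1}_{i'\prec_j i}\mathbb{1}_{j'\prec_i j}$. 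Note that each such $n_{\{i,j\}}$ is bounded above by $n$, so this is a genuine bounded-variable integer program.

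The subtlety is that we want to \emph{first} maximise cardinality and \emph{then}, among maximum-cardinality matchings, minimise blocking pairs --- a lexicographic objective, which does not fit directly into a single \textsc{IQP} instance with a quadratic objective and linear constraints. The fix is a two-phase approach. In the first phase, compute the maximum cardinality $c^*$ of any (not necessarily stable) matching in the typed instance; since every agent can be matched to a dummy, and SMTI matchings correspond exactly to the variable assignments above, $c^*$ is obtained by maximising the linear objective $\sum_{1\le i,j\le k} n_{\{i,j\}}$ subject to the linear constraints, which is an instance of \textsc{Integer Linear Programming} with $(k+1)^2$ variables, solvable in FPT time by Theorem \ref{thm:ILP}. (Alternatively one can simply compute a maximum matching directly in time $\mathcal{O}(n)$ as in Corollary \ref{cor:typed-maxsrti-fpt}.) In the second phase, add the single linear constraint $\sum_{1\le i,j\le k} n_{\{i,j\}} = c^*$ and minimise the quadratic blocking-pair form subject to all the constraints; this is an instance of \textsc{Integer Quadratic Programming} with $(k+1)^2$ variables.

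To apply Theorem \ref{thm:IQP} we need the entries of the constraint matrix $A$ and of the quadratic-form matrix $Q$ to be bounded. The constraint matrix has entries in $\{0,1\}$ (each constraint is a sum of a subset of the variables equated to $|N_i|$, or the cardinality equation, or nonnegativity), and the matrix $Q$ representing the blocking-pair objective has entries that are products of indicator functions, hence in $\{0,1\}$ after symmetrisation (an off-diagonal entry of $Q$ picks up a factor of $\tfrac12$ under the usual symmetric convention, so one should either clear denominators by working with $2Q$, which only scales the objective, or note that Theorem \ref{thm:IQP}'s hypothesis can be met by an equivalent integer scaling). Either way $\alpha = \mathcal{O}(1)$, so the parameter $k + \alpha$ of Theorem \ref{thm:IQP} is $\mathcal{O}(k)$, and the \textsc{IQP} instance is solved in FPT time. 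Reading off the optimal variable values, a witnessing matching of cardinality $c^*$ with the minimum number of blocking pairs is reconstructed in time $\mathcal{O}(n)$, just as before.

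The main obstacle I anticipate is the bookkeeping around the lexicographic (size-then-blocking-pairs) objective and, relatedly, making sure the quadratic objective has small enough coefficients to meet the hypothesis of Theorem \ref{thm:IQP}; the cardinality phase is routine, and once $c^*$ is fixed as an extra linear constraint the problem is a clean bounded-coefficient \textsc{IQP}. The non-bipartite extension to \textsc{Typed Max Size Min BA SMTI} (and the \textsc{SRTI} analogues mentioned in the section heading) would follow by the same template, reintroducing the $\secondworst$-type constraints from Lemma \ref{lma:max-srti-calculate} where needed and adjusting the objective to count blocking agents rather than blocking pairs.
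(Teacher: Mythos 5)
Your proposal is correct and follows essentially the same route as the paper: precompute the maximum matching cardinality, impose it as a linear constraint, and minimise the quadratic blocking-pair count via Lokshtanov's FPT algorithm for \textsc{Integer Quadratic Programming}, observing that the entries of $A$ and $Q$ are bounded by a constant so that the parameter $k+\alpha$ is $\mathcal{O}(k^2)$-in-variables with $\alpha=\mathcal{O}(1)$. The paper likewise writes the objective as $x^TQx$ with entries of $Q$ in $\{0,1,2\}$, so your coefficient-bounding concern is resolved exactly as in the original argument.
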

\begin{proof}
We begin by computing, in polynomial time, the cardinality $C_{\max}$ of  a maximum matching in our instance.  Our strategy then is to formulate \textsc{Max Size Min BP SMTI} as an instance of \textsc{Integer Quadratic Programming}.  Our goal is to minimise the following objective function
\begin{align*}
\sum_{1 \leq i,j \leq k} \sum_{1 \leq i', j' \leq k+1} & n_{\{i',j\}} n_{\{i,j'\}} \left( \mathbb{1}_{i' \prec_j i} \mathbb{1}_{j' \prec_i j} \right) \\
	=& \sum_{\substack{1 \leq i,i' \leq k' \\ k'< j,j' \leq k}} \left( \mathbb{1}_{i' \prec_j i} \mathbb{1}_{j' \prec_i j} + \mathbb{1}_{i \prec_{j'} i'} \mathbb{1}_{j \prec_{i'} j'} \right) \\
    & \qquad + \sum_{\substack{1 \leq i \leq k' \\ k' \leq j,j' \leq k}} n_{\{k+1,j\}}n_{\{i,j'\}} \mathbb{1}_{j' \prec_{i} j} \\
    & \qquad + \sum_{\substack{1 \leq i,i' \leq k' \\ k' \leq j \leq k}} n_{\{i',j\}}n_{\{i,k+1\}} \mathbb{1}_{i' \prec_j i} \\ 
    & \qquad + \sum_{\substack{1 \leq i \leq k' \\ k' < j \leq k}} n_{\{k+1,j\}} + n_{\{i,k+1\}},
\end{align*}
subject to the constraints
\begin{align*}
~ & \sum_{j \in [k+1]} n_{\{i,j\}} = |N_i| & ~ \forall i \in [k]\\
\text{\textbf{and}} & ~ ~ \sum_{1 \leq i, j \leq k} n_{\{i,j\}} = C_{\max}.
\end{align*}

To see that the right-hand side of the objective function equation is equal to the left-hand side of the equation, notice that every pair $(\{i,j\},\{i',j'\}) \in [k]^{(2)}$ appears twice in the summation on the left-hand side: once with $\{i,j\}$ coming from the first sum and $\{i',j'\}$ coming from the second sum, and once the other way around. The former counts the number of blocking pairs $(x,y)$ where $x$ is of type $i$, $y$ is of type $j$, and $M(x)$ and $M(y)$ are of types $j'$ and $i'$ respectively. The latter counts the number of blocking pairs $(x,y)$ where $x$ is of type $i'$, $y$ is of type $j'$, $M(x)$ is of type $j$, and $M(y)$ is of type $i$.  Pairs $(\{i,j\},\{i',j'\})$ where at least one of $i,i',j,j'$ is equal to the dummy type $k+1$ are dealt with separately (in this case there can be at most one blocking pair).

The linear constraints enforce that every agent is involved in exactly one pair (perhaps with a dummy agent), and that the number of pairs that do not involve dummy agents is equal to the maximum possible cardinality of a matching.  
We can write our objective function in the form $x^T Q x$ where $x$ is the vector $(n_{\{1,1\}},$ $ n_{\{1,2\}},$ $\ldots,$ $n_{\{k,k\}})^T$ and the entry of $Q$
corresponding to $n_{\{i,j'\}}$ and $n_{\{i',j\}}$ is equal to either $0$, $1$ or $2$ depending on how many of the following conditions hold: 
 \begin{enumerate}
 \item $j \succ_i j'$ and $i \succ_{j} i'$, and
 \item $j' \succ_{i'} j$ and $i' \succ_{j'} i$.
 \end{enumerate} 
Thus, by Theorem \ref{thm:IQP}, we have an FPT algorithm to solve our instance of \textsc{Max Size Min BP SMTI}.
\end{proof}

We now consider the problem of minimising the number of agents which are involved in at least one blocking pair.  We start by characterising the conditions under which an agent of a particular type can belong to one or more blocking pairs; this characterisation follows immediately from the definition of a blocking pair. 

\begin{lemma}\label{lma:num-ba}
Let $x$ be an agent of type $i$ and assume that $M(x)$ is of type $j$ (which would be a dummy type if $x$ is unmatched).  Then $x$ belongs to a blocking pair if and only if there is some agent $y$ of type $j'$ who is paired with an agent of type $i'$ (which would be a dummy type if $y$ is unmatched) such that $i \succ_{j'} i'$ and $j' \succ_i j$.
\end{lemma}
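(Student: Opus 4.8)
The plan is to read this off directly from Proposition~\ref{prop:bp-mod1}, which has already translated membership of a \emph{single} pair $(x,y)$ in the set of blocking pairs into a condition on the four types involved; all that remains is to quantify existentially over the partner $y$.

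First I would unwind the definition: by definition $x$ belongs to a blocking pair in $M$ precisely when there exists some agent $y$ such that $(x,y)$ is a blocking pair of $M$. Then I would invoke Proposition~\ref{prop:bp-mod1} in the instantiation where $x$ has type $i$ and $M(x)$ has type $j$, while $y$ has type $j'$ and $M(y)$ has type $i'$ (with $j'$ or $i'$ equal to the dummy type $k+1$ when $y$ is unmatched); matching up the indices, the Proposition states that $(x,y)$ is a blocking pair if and only if $j' \succ_i j$ and $i \succ_{j'} i'$. Taking the disjunction of this condition over all agents $y$ — equivalently, over all type-pairs $(j',i')$ realised by some agent $y$ together with its partner — yields exactly the asserted equivalence.

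The degenerate cases require only a remark and pose no real difficulty. If $y=x$ then $j'=i$ and $i'=j$, so $j' \succ_i j$ reads $i \succ_i j$, which is false because no type strictly prefers itself; and if $y$ is matched to $x$ then $i'=i$, so $i \succ_{j'} i'$ reads $i \succ_{j'} i$, again false. Hence any $y$ witnessing the right-hand side is automatically distinct from $x$ and not matched to $x$, and acceptability is handled by the standing convention that each preference list is extended by the dummy type as its least desirable acceptable entry, so that $j' \succ_i j$ forces type $j'$ to be genuinely acceptable to type $i$ (and symmetrically for $y$). The only ``obstacle'' is thus purely bookkeeping — being careful about which of $i,j,i',j'$ occupies which slot of Proposition~\ref{prop:bp-mod1} — since there is no mathematical content beyond that Proposition.
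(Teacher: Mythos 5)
Your proposal is correct and matches the paper's intent: the paper gives no explicit proof, stating only that the characterisation "follows immediately from the definition of a blocking pair," and your route through Proposition~\ref{prop:bp-mod1} (which is precisely the type-level restatement of that definition) followed by existential quantification over the partner $y$ is exactly that immediate argument. Your bookkeeping of the degenerate cases and of acceptability via the dummy-type convention is sound, if more than the paper bothers to record.
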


Using this characterisation, we can now prove our main result concerning \textsc{Typed Max Size Min BA SMTI}.

\begin{theorem}\label{thm:min-ba-fpt}
\textsc{Typed Max Size Min BA SMTI} belongs to $\FPT$ when parameterised by the number $k$ of different types in the given instance.
\end{theorem}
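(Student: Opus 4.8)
The plan is to mirror the strategy used for \textsc{Typed Max Size Min BP SMTI} in Theorem~\ref{thm:min-bp-fpt}, but with an extra layer that handles the ``at least one'' quantifier hidden in Lemma~\ref{lma:num-ba}: an agent is a blocking agent if it participates in \emph{any} blocking pair, and whether this happens depends not just on the multiset of type-to-type pair counts $n_{\{i,j\}}$ but on \emph{which} partner-types actually occur. The key idea is to guess, for each type $i$, the set $S_i \subseteq [k+1]$ of partner-types that are actually used by agents of type $i$ in the optimal matching; there are at most $2^{(k+1)^2} = 2^{\mathcal{O}(k^2)}$ such guesses (a function of $k$ only), so we may enumerate all of them and, for each, solve a single integer linear (or quadratic) program, taking the best feasible solution over all guesses.

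Fix a guess $(S_i)_{i \in [k]}$. First I would add constraints forcing the support of the pair-count vector to be exactly consistent with the guess: $n_{\{i,j\}} \geq 1$ for $j \in S_i$ and $n_{\{i,j\}} = 0$ for $j \notin S_i$ (together with the usual $\sum_{j} n_{\{i,j\}} = |N_i|$ for each $i$ and $\sum_{i,j \leq k} n_{\{i,j\}} = C_{\max}$, where $C_{\max}$ is the maximum matching size computed in polynomial time as before). Once the support is fixed, Lemma~\ref{lma:num-ba} becomes a purely \emph{combinatorial} condition: an agent of type $i$ matched to a partner of type $j$ is a blocking agent precisely when there exist types $j', i'$ with $j' \in S_{i'}$ (so that some $y$ of type $j'$ really is paired with type $i'$), $i \succ_{j'} i'$, and $j' \succ_i j$ --- and this can be checked directly from the guess $(S_\ell)_\ell$ and the (type-level) preference orders, with no reference to the actual counts. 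So for each guess we can precompute the set $B \subseteq \{(i,j): j \in S_i\}$ of ``(type, partner-type)'' combinations that are blocking. The number of blocking agents is then simply $\sum_{(i,j) \in B} n_{\{i,j\}}$, a \emph{linear} function of the variables, and we minimise it subject to the linear constraints above via the FPT algorithm for \textsc{Integer Linear Programming} (Theorem~\ref{thm:ILP}), since there are $(k+1)^2$ variables bounded by $n$ and $\mathcal{O}(k^2)$ constraints. Finally I would return the minimum over all $2^{\mathcal{O}(k^2)}$ guesses, discarding infeasible ones; correctness follows because the true optimal matching induces \emph{some} guess for which its pair-count vector is feasible and for which the objective correctly counts its blocking agents.

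I expect the main obstacle to be verifying the equivalence carefully: one must check that the linear objective $\sum_{(i,j)\in B} n_{\{i,j\}}$ genuinely equals the number of blocking agents for \emph{every} matching realising the guessed support, in particular that the characterisation in Lemma~\ref{lma:num-ba} is insensitive to anything beyond the support (this is where the ``$\geq 1$'' constraints do the work --- they guarantee that a blocking \emph{witness} type-pair $(j',i')$ promised by the guess is actually populated, and conversely that types not in the support cannot supply a witness). A minor subtlety, as in Theorem~\ref{thm:min-bp-fpt}, is the careful treatment of the dummy type $k+1$ among the indices $i,i',j,j'$; this is handled by allowing $k+1$ in the relevant ranges and noting that the preference orders are extended so that the dummy type is the unique least-desirable acceptable type. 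No new hard ingredient is needed beyond Theorem~\ref{thm:ILP} and the enumeration over supports.
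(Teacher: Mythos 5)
Your proposal is correct and follows essentially the same route as the paper: the per-type support sets $S_i$ you guess are exactly the paper's ``type-signature'' of the matching, the $\geq 1$ / $=0$ support constraints and the precomputed set $B$ of blocking (type, partner-type) combinations correspond to the paper's signature constraints and indicator variables $b_{i,j}$, and both arguments finish by minimising the same linear objective via the FPT algorithm for \textsc{Integer Linear Programming} over $2^{\mathcal{O}(k^2)}$ guesses. No substantive difference to report.
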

\begin{proof}
For any matching $M$, we can define a collection of at most $k(k+1)/2$ boolean variables $v_{i,j}$ (for $1 \leq i < j \leq k+1$, where type $k+1$ is a dummy type), so that $v_{i,j}$ is true if and only if the matching contains at least one pair involving an agent of type $i$ and an agent of type $j$ (and unmatched agents are considered to be matched with agents of the dummy type $k+1$).  For a given matching $M$, this collection of variables defines a vector $\mathbf{v}_M$ in $\{0,1\}^{k(k+1)/2}$, which we call the \emph{type-signature} of the matching $M$.

Note that there are at most $2^{k(k+1)/2} = \mathcal{O}(2^{k^2})$ possible type-signatures for a matching; we will consider each possible type-signature $\mathbb{v}$ in turn and determine the minimum number of agents which can be involved in blocking pairs in a maximum matching which has type-signature $\mathbb{v}$ (if a maximum matching with this type-signature exists).  Minimising over this set of optimal solutions will give the desired answer.

We now describe how to compute the minimum number of agents involved in blocking pairs in a maximum matching with type-signature $\mathbf{v}$ or else to report that no such maximum matching exists.  Our strategy is to encode the problem as an instance of \textsc{Integer Linear Programming}.

First we define the constraints.  As usual, we need to ensure that every  agent is involved in exactly one pair (potentially involving a dummy agent), and as in the proof of Theorem \ref{thm:min-bp-fpt} we need to enforce that the number of pairs that do not involve dummy agents is equal to the maximum cardinality of any matching in our instance.  Moreover, we need to make sure that our matching does indeed have type-signature equal to $\mathbf{v}$.  This gives rise to the following linear constraints.

\begin{align*}
~ & \sum_{1 \leq i < j \leq k} n_{\{i,j\}} = C_{\max}\\
~ & \sum_{j \in [k+1]} n_{\{i,j\}} = |N_i| & \forall i \in [k]\\
~ & n_{\{i,j\}} > 0 & \forall 1 \leq i < j \leq k+1 \text{ with } v_{i,j} = 1 \\
\text{\textbf{and}} ~ ~ & n_{\{i,j\}} = 0 & \forall 1 \leq i < j \leq k+1 \text{ with } v_{i,j} = 0.
\end{align*}
Finally, we define our objective function, which captures the number of agents which are involved in at least one blocking pair.  By Lemma \ref{lma:num-ba}, we know that an agent of type $i$ matched with an agent of type $j$ belongs to a blocking pair if and only if there exist $i \pref_{j'} i'$ and $j' \pref_i j$ 
such that $v_{i',j'} = 1$.  Thus, for a given type-signature $\mathbf{v}$, we can compute for each $1 \leq i,j \leq k+1$ the indicator variable $b_{i,j}$ which takes the value $1$ if an agent of type $i$ matched with an agent of type $j$ in a matching with type-signature $\mathbf{v}$ will belong to a blocking pair, and takes the value $0$ otherwise. It is now clear that the total number of agents that are involved in at least one blocking pair in the matching is
$$\sum_{1 \leq i < j \leq k+1} n_{\{i,j\}} (b_{i,j} + b_{j,i}).$$
This is our linear objective function.
\end{proof}

To generalise to the non-bipartite case takes just slightly more care: if two agents of type $i$ which are both matched to agents of type $j$ form a blocking pair, then the total number of blocking pairs that results is $n_{\{i,j\}}(n_{\{i,j\}}-1)/2$ rather than $n_{\{i,j\}}^2$.  Otherwise, exactly the same method works.  Thus we obtain the following corollary.

\begin{corollary}
\textsc{Typed Max Size Min BP SRTI} and \textsc{Typed Max Size Min BA SRTI} belong to $\FPT$ when parameterised by the total number of types.
\end{corollary}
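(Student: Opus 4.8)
The plan is to mirror the proofs of Theorem~\ref{thm:min-bp-fpt} and Theorem~\ref{thm:min-ba-fpt}, making only the adjustments forced by the move from the bipartite setting (SMTI) to the non-bipartite setting (SRTI). As in those proofs, I would first compute in polynomial time the cardinality $C_{\max}$ of a maximum matching in the given typed instance of SRTI, and introduce, for each unordered pair $\{i,j\}\in[k+1]^{(2)}$, the variable $n_{\{i,j\}}$ counting the number of matched pairs consisting of one agent of type $i$ and one of type $j$ (with type $k+1$ the dummy type absorbing unmatched agents). The constraints $\sum_{j\in[k+1]}n_{\{i,j\}}=|N_i|$ for each $i$, together with $\sum_{1\le i\le j\le k}n_{\{i,j\}}=C_{\max}$, play exactly the same role as before, but I must be careful about pairs of the form $\{i,i\}$: a variable $n_{\{i,i\}}$ now represents pairs of two agents both of type $i$, and it consumes two agents of type $i$ rather than one, so the degree constraint should be written $\sum_{j\neq i}n_{\{i,j\}}+2n_{\{i,i\}}=|N_i|$. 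For \textsc{Typed Max Size Min BA SRTI} the boolean type-signature vector $\mathbf{v}$ is defined over the same index set $1\le i\le j\le k+1$ and the enumeration over at most $2^{\mathcal O(k^2)}$ signatures is unchanged.

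For \textsc{Typed Max Size Min BP SRTI} I would reuse the counting lemma expressing the number of blocking pairs with one agent of type $i$ and one of type $j$ as a product of two linear forms in the $n_{\{\cdot,\cdot\}}$, and hence obtain the total number of blocking pairs as a quadratic form $x^TQx$ with entries of $Q$ in $\{0,1,2\}$; Theorem~\ref{thm:IQP} then applies. The one genuine new point, flagged in the sentence preceding this corollary, is the diagonal case: when a blocking pair consists of two agents of the \emph{same} type $i$ that are each matched to agents of type $j$, the quadratic expression $n_{\{i,j\}}^2$ overcounts, since an ordered-pair count $n_{\{i,j\}}(n_{\{i,j\}}-1)$ of genuinely distinct pairs should be used, i.e.\ $n_{\{i,j\}}(n_{\{i,j\}}-1)/2$ unordered blocking pairs. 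I would therefore replace the relevant diagonal contribution $n_{\{i,j\}}^2\mathbb 1_{i\prec_i i}\mathbb 1_{j\prec_i j}$ (which is the term that arises when $i=i'$ and $j=j'$ and type $i$ strictly prefers type $i$ to type $j$, i.e.\ when two type-$i$ agents matched to type-$j$ partners block each other) by $n_{\{i,j\}}(n_{\{i,j\}}-1)$ before halving; this is still a quadratic polynomial in the variables with bounded integer coefficients (the linear term $-n_{\{i,j\}}$ can be folded in using the degree constraint, or simply kept as a separate linear piece, since \textsc{Integer Quadratic Programming} as defined here tolerates it after a standard homogenisation trick, or one can add a dummy variable fixed to $1$), so Theorem~\ref{thm:IQP} still gives an FPT algorithm.

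For \textsc{Typed Max Size Min BA SRTI} essentially the same diagonal adjustment is needed at the level of the indicator variables $b_{i,j}$: Lemma~\ref{lma:num-ba} must be read with the understanding that the witnessing agent $y$ of type $j'$ matched to a type-$i'$ partner may be of the same type as $x$, and in particular when $j'=i$ and the two agents are both matched to type-$j$ partners one needs $n_{\{i,j\}}\ge 2$ rather than merely $n_{\{i,j\}}\ge 1$ for such a blocking pair to exist. This is automatically captured by computing the precomputed constants $b_{i,j}$ from the type-signature $\mathbf v$ once one records, alongside $\mathbf v$, whether $n_{\{i,j\}}\ge 2$; equivalently one can enlarge the signature to record, for each $\{i,j\}$, whether the count is $0$, exactly $1$, or at least $2$, which multiplies the number of cases by at most $2^{\mathcal O(k^2)}$ and so does not affect FPT membership. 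The objective $\sum_{1\le i\le j\le k+1} n_{\{i,j\}}(b_{i,j}+b_{j,i})$ (with the convention $b_{i,i}+b_{i,i}=2b_{i,i}$ counting both agents of a type-$i$, type-$i$ pair) remains linear, so Theorem~\ref{thm:ILP} applies. The main obstacle is thus not any deep new idea but getting the diagonal bookkeeping exactly right — correctly distinguishing ordered from unordered pairs of same-type agents, and correctly translating ``there exists another agent of the same type with the required partner'' into a count-$\ge 2$ rather than count-$\ge 1$ condition — and verifying that in every case the resulting objective and constraints still have the bounded-coefficient form required by Theorems~\ref{thm:ILP} and~\ref{thm:IQP}.
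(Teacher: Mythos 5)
Your proposal is correct and follows essentially the same route as the paper, whose own justification is only the one-sentence remark that the diagonal blocking-pair count must be replaced by $n_{\{i,j\}}(n_{\{i,j\}}-1)/2$ in place of $n_{\{i,j\}}^2$ and that otherwise the IQP/ILP method of Theorems~\ref{thm:min-bp-fpt} and~\ref{thm:min-ba-fpt} carries over unchanged. You in fact supply several details the paper leaves implicit --- the factor $2$ on $n_{\{i,i\}}$ in the degree constraints, the handling of the resulting linear term in the quadratic objective, and the refinement of the type-signature to distinguish a count of exactly $1$ from a count of at least $2$ for same-type witnesses in the blocking-agent case --- all of which are correct and genuinely needed (your only slip is the garbled indicator subscript $\mathbb{1}_{i\prec_i i}$, which your surrounding prose makes clear should be the condition that type $i$ strictly prefers type $i$ to type $j$).
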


It is also fairly straightforward to modify the IQP and ILP in the proofs of Theorem \ref{thm:min-bp-fpt} and \ref{thm:min-ba-fpt} to solve MIN BP SRTI and MIN BA SRTI respectively. We only need to remove the constraint that enforces the matching to be of size $C_{\max}$. 

\begin{corollary}
\textsc{Typed Min BP SRTI} and \textsc{Typed Min BA SRTI} belong to $\FPT$ when parameterised by the total number of types.
\end{corollary}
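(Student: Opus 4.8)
The plan is to reuse, essentially verbatim, the Integer Quadratic Programming formulation from the proof of Theorem~\ref{thm:min-bp-fpt} and the Integer Linear Programming formulation from the proof of Theorem~\ref{thm:min-ba-fpt}, in each case (i) deleting the constraint $\sum_{1 \le i,j \le k} n_{\{i,j\}} = C_{\max}$ that forced the matching to have maximum cardinality, and (ii) folding in the non-bipartite adjustments already flagged in the paragraph preceding the statement about \textsc{Typed Max Size Min BP/BA SRTI}. Since every remaining constraint is linear and the objective stays (respectively) quadratic with bounded-size coefficients or linear, Theorems~\ref{thm:IQP} and~\ref{thm:ILP} again yield FPT algorithms with parameter $k$; minimising over the finitely many auxiliary cases (type-signatures) gives the optimum.

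For \textsc{Typed Min BP SRTI} I would first re-derive the blocking-pair count in the non-bipartite setting: the expression from the proof of Theorem~\ref{thm:min-bp-fpt} carries over for ordered type-pairs with $i \neq j$, while a pair of agents \emph{both} of type $i$, each matched to an agent of type $j$, contributes $n_{\{i,j\}}(n_{\{i,j\}}-1)/2$ blocking pairs when $i \succ_i j$ and $0$ otherwise — exactly the $\binom{n_{\{i,j\}}}{2}$-versus-$n_{\{i,j\}}^2$ correction noted for the maximum-size variant. Hence the objective is a quadratic form $\mathbf{x}^T Q \mathbf{x}$ plus a linear part in $\mathbf{x} = (n_{\{i,j\}})$, where, after multiplying the objective by $2$ to clear the factor $1/2$, all entries of $Q$ are bounded by an absolute constant (the linear part can be homogenised using a single variable fixed to $1$, so the maximum coefficient stays $\mathcal{O}(1)$). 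Keeping only the constraints that each agent lies in exactly one pair (with the dummy type modelling being unmatched, and the self-pair constraint adjusted by the usual factor of $2$), Theorem~\ref{thm:IQP} applies.

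For \textsc{Typed Min BA SRTI} I would mimic the type-signature enumeration of Theorem~\ref{thm:min-ba-fpt}, but with a slightly finer signature recording, for each unordered pair $\{i,j\}$ with $i \le j \le k+1$, whether $n_{\{i,j\}}$ is $0$, is $1$, or is at least $2$; there are at most $3^{k(k+1)/2}$ of these. The refinement is needed because, in the non-bipartite case, an agent of type $i$ matched to type $j$ can be blocked by \emph{another} agent of type $i$ also matched to type $j$, which happens iff $i \succ_i j$ and at least two type-$i$ agents are matched to type $j$, so whether a given agent blocks can depend on $n_{\{i,j\}} \ge 2$ and not merely on which $n_{\{i,j\}}$ are nonzero. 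Once a signature is fixed, Lemma~\ref{lma:num-ba} (applied with the extra care that the witnessing partner $y$ is a \emph{distinct} agent from the one under consideration) determines constants $b_{i,j} \in \{0,1\}$, the linear constraints pin down the signature and omit the $C_{\max}$ constraint, and the objective $\sum_{1 \le i < j \le k+1} n_{\{i,j\}}(b_{i,j} + b_{j,i})$ (with the $i=j$ terms handled analogously) is linear, so Theorem~\ref{thm:ILP} applies.

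The ILP/IQP surgery itself is routine; the only place that needs attention is exactly this non-bipartite bookkeeping — excluding an agent from ``blocking with itself'' and tracking the resulting $n_{\{i,j\}} \ge 2$ dependence for \textsc{Min BA}, and the $\binom{n_{\{i,j\}}}{2}$ correction plus integrality/boundedness of $Q$ for \textsc{Min BP} — and both of these are precisely the adjustments already made for the maximum-size versions, so no essentially new difficulty arises.
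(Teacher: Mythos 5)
Your proposal is correct and follows exactly the route the paper takes, which is itself only a two-sentence remark: drop the $C_{\max}$ constraint from the IQP/ILP of Theorems~\ref{thm:min-bp-fpt} and~\ref{thm:min-ba-fpt} and fold in the non-bipartite corrections already noted for the maximum-size variants. Your elaboration is in fact more careful than the paper's sketch --- in particular the refinement of the type-signature to record whether $n_{\{i,j\}}$ is $0$, $1$, or at least $2$ is a genuine subtlety of the \textsc{Min BA SRTI} case that the paper glosses over, and you handle it correctly.
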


A related problem to MIN BP SR is EXACT BP SR which given an instance $I$ of SR and an integer $Z$, decides whether $I$ admits a matching with exactly $Z$ blocking pairs. Even this problem is NP-hard \cite{ABM06} in general. If $I$ is a typed instance, to solve EXACT BP SRTI we only have to move the objective function in the IQP of Theorem \ref{thm:min-bp-fpt} to the set of constraints, and enforce it to be equal to $Z$. 

\begin{corollary}
\textsc{Typed EXACT BP SRTI} belongs to $\FPT$ when parameterised by the total number of types.
\end{corollary}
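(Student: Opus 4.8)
The plan is to follow exactly the template established in the proof of Theorem~\ref{thm:min-bp-fpt}, but replacing the objective-minimisation machinery with an equality constraint. Recall that in that proof we expressed the total number of blocking pairs of a matching $M$ in a typed instance as a quadratic form in the variables $n_{\{i,j\}}$ (the number of matched pairs consisting of one agent of type $i$ and one of type $j$), together with a small correction for pairs involving the dummy type $k+1$; the resulting expression was written as $\mathbf{x}^T Q \mathbf{x}$ plus linear terms, with every entry of $Q$ lying in $\{0,1,2\}$. The constraints there were the two linear families: $\sum_{j \in [k+1]} n_{\{i,j\}} = |N_i|$ for each $i$, together with $\sum_{1 \le i,j \le k} n_{\{i,j\}} = C_{\max}$ enforcing maximum cardinality.

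To solve \textsc{Typed EXACT BP SRTI}, I would first carry out the non-bipartite correction already noted in the excerpt: a pair of type-$i$ agents both matched to type-$j$ partners contributes $n_{\{i,j\}}(n_{\{i,j\}}-1)/2$ blocking pairs rather than $n_{\{i,j\}}^2$, so the diagonal entries of $Q$ (and a linear correction term) must be adjusted accordingly, exactly as in the corollary preceding this statement. Second, I would drop the cardinality constraint $\sum n_{\{i,j\}} = C_{\max}$, since \textsc{EXACT BP} places no requirement on the size of the matching — keeping only the degree constraints $\sum_{j \in [k+1]} n_{\{i,j\}} = |N_i|$ and non-negativity $n_{\{i,j\}} \ge 0$. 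Third, and this is the only genuinely new step, I would \emph{move the objective function into the constraints}: rather than minimising the quadratic blocking-pair count, I would add the single constraint that this quadratic expression equals $Z$, and give the \textsc{Integer Quadratic Programming} solver a trivial (say constant) objective, so that the solver simply decides feasibility. If one wants to stay within the strict letter of Theorem~\ref{thm:IQP} (which is phrased for a quadratic objective with linear constraints), one can instead keep the blocking-pair count as the objective but add the linear-over-the-monomials trick — however the cleanest route is to observe that an IQP feasibility query with a quadratic equality constraint reduces to two IQP queries (minimise and maximise the quadratic form subject to the linear constraints, and check whether $Z$ lies in the attained range, using that the attainable values form... ) — actually the simplest correct statement is just: augment the linear constraint system with the quadratic equality and solve the resulting IQP feasibility instance, which Theorem~\ref{thm:IQP} handles since $k$ is the parameter and the entries of $Q$ and of the (now slightly enlarged) constraint matrix are all bounded by an absolute constant.

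The correctness argument is then immediate: by Proposition~\ref{prop:bp-mod1} and the two counting lemmas preceding Theorem~\ref{thm:min-bp-fpt} (with the non-bipartite adjustment), the quadratic expression in the $n_{\{i,j\}}$ equals the exact number of blocking pairs of any matching realising that vector of counts, and conversely any non-negative integer vector satisfying the degree constraints is realised by some matching; so a feasible solution to the augmented IQP exists if and only if $I$ admits a matching with exactly $Z$ blocking pairs. Finally, as in all the earlier proofs, there are at most $k^{\mathcal{O}(k)}$ relevant sign patterns / type-signatures to iterate over if one wishes to match the structure of the earlier arguments, but here this is not even needed — a single IQP feasibility instance suffices. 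I expect the main (modest) obstacle to be purely presentational: ensuring that the quadratic equality constraint is admissible input to the algorithm of Theorem~\ref{thm:IQP}, which is stated for a quadratic objective rather than quadratic constraints; this is resolved by the standard observation that one can test whether the minimum and maximum of the quadratic form over the (linear) feasible region straddle $Z$, or more directly by noting that Lokshtanov's algorithm extends routinely to bounded-coefficient quadratic constraints, and in any case the bounded coefficients $k + \alpha$ with $\alpha = \mathcal{O}(1)$ keep us inside \cplxty{FPT}.
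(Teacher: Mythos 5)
Your proposal matches the paper's argument exactly: the paper's entire proof is the single observation that one takes the IQP from Theorem~\ref{thm:min-bp-fpt} (with the non-bipartite diagonal correction already noted for SRTI) and moves the quadratic blocking-pair count from the objective into an equality constraint set to $Z$. One caution: the min/max-straddling idea you float in passing is not sound (the attainable values of the quadratic form over the integer feasible region need not form an interval), but you correctly discard it, and your concern that Theorem~\ref{thm:IQP} as stated admits only linear constraints is a legitimate presentational gap that the paper itself also glosses over.
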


\section{Agents of the same type refine their preferences in the same way}\label{sec:refined-typed}
In this section, we generalise the model from Section~\ref{sec:typed} by allowing agents to refine their preferences over candidates within a particular type, so long as agents of the same type still have identical preference lists. Our key result is that refining preferences in this way can never change the size of the largest stable matching, compared with the corresponding typed instance. We also use the tools we develop to deal with this generalisation to show that \textsc{Max SRTI}, and hence also \textsc{Max SMTI} and \textsc{Max HRT}, become polynomially solvable if preferences over types are strict, both in this setting and under the basic model. 
Lastly, we extend the results in Section \ref{sec:typed-minbp} to provide FPT algorithms for \textsc{Consistently-refined-typed Max Size Min BP SMTI} and \textsc{Consistently-refined-typed Max Size Min BA SMTI}.

\subsection{Definition of consistently-refined-typed instances}\label{sec:def-refined-model}
Consider a generalisation of typed instances in which agents are no longer necessarily indifferent between two agents of the same type, however agents of the same type occur consecutively in preference lists. 
This means that for any two agents $x$ and $y$ of the same type $i$:
\begin{enumerate}
  \item $x$ and $y$ have identical preference lists when restricted to $\agentset \setminus \{x,y\}$
  ,
  \item no agent of a different type appears between $x$ and $y$ in any preference list, and
  \item if a tie in a preference list contains agents of two or more types, then that tie is in fact a union of types.
\end{enumerate}

The third criterion allows us to define in a consistent way what it means for agents of type $i$ to strictly prefer type $j$ to type $\ell$ or to be indifferent between them. We will say that agents of type $i$ prefer type $j$ to type $\ell$ if 
and only if given every pair of agents $x$ of type $j$ and $y$ of type $\ell$ all agents in $N_i$ prefer $x$ to $y$. 
On the other hand, if type $i$ is indifferent between types $j$ and $\ell$ it means that, in the preference list for each agent $x$ of type $i$, all agents in $N_j\cup N_{\ell}$ belong to a single tie.

If an instance of a stable matching problem satisfies these slightly weaker requirements, we say that the instance is \emph{consistently-refined-typed}, and refer to the standard problems with input of this form as \textsc{Consistently-Refined-Typed Max SMTI} etc.

A consistently-refined-typed instance $I$ of SRTI is given as an input by specifying the number of types $k$ and, for each type $i$, the set $\agentset_i$ of agents of type $i$ as well as the preference ordering $\succ_i$ over agents. Note that for typed instances $\succ_i$ specified preferences over types, whereas here the preferences are over agents. However, we can compute preferences over types from preferences over agents in time $\mathcal{O}(kn)$. Note that if we are only given the preference list for each agent as input (i.e., no information about types is given), it is straightforward to compute, in polynomial time, the coarsest partition of the agents into types that satisfies the definition of consistently-refined-typed instance.

We illustrate this definition with two short examples.

\begin{example}\label{ex:model2-1}
Assume we have 4 types for the agents in a stable marriage setting, and that all men are of type $1$ and types $2$, $3$, and $4$ correspond to women. Assume also that we have 3 men $m_1$, $m_2$ and $m_3$, and 7 women where $w_1$ and $w_2$ are of type $2$, $w_3$ and $w_4$ are of type $3$, and $w_5$, $w_6$ and $w_7$ are of type $4$. Let all men have the preference ordering $(w_1 ~ w_2 ~ w_3 ~ w_4) ~ w_6 ~ (w_5 ~ w_7) $, women of types $2$ and $3$ have the preference ordering $(m_1 ~ m_2 ~ m_3)$, and women of type $4$ have the preference ordering $m_2 ~ m_1$. This setting constitutes a consistently-refined-typed instance. It is easy to compute the preferences of type $1$ agents over the types of women, which is $(2 ~ 3) ~ 4$, similar to that of Example~\ref{ex:model1}. Allowing men to have the preference ordering $(w_1 ~ w_2) ~ w_3 ~ w_4 ~ (w_5 ~ w_6 ~ w_7)$, while keeping everything else unchanged, also gives us a consistently-refined-typed instance. In this new instance agents of type $1$ have the strict preference ordering $2 ~ 3 ~ 4$ over the types of women.
\end{example}

\begin{example}\label{ex:model2-2}
Assume that we are in a stable roommates setting with six agents $a, b, \ldots, f$ of type 1. assume that the agents' preference orderings are as follows:
\begin{align*}
		a:& ~ (b ~ c) ~ (d ~ e ~ f)  & b:& ~ (a ~ c) ~ (d ~ e ~ f) & c:& ~ (a ~ b) ~ (d ~ e ~ f) \\ 
    d:& ~ (a ~ b ~ c) ~ (e ~ f) & e:& ~ (a ~ b ~ c) ~ (d ~ f) & f:& ~ (a ~ b ~ c) ~ (d ~ e) 
\end{align*}
This setting constitutes a consistently-refined-typed instance and it is easy to compute the refined preferences within type 1 which is $(a ~ b ~ c) ~ (d ~ e ~ f)$. 
\end{example}

\subsection{An FPT algorithm for \textsc{Consistently-Refined-Typed Max SRTI}}\label{sec:con-refined-maxsrti}
To extend the result for \textsc{Typed Max SRTI} to \textsc{Consistently-Refined-Typed Max SRTI}, we need the following result.

\begin{lemma}\label{stable-exists}
Let $I$ be a consistently-refined-typed instance of SRTI and suppose that $M$ is a matching in $I$ such that (1) there is no pair $(i,j) \in [k]^{(2)}$, $i\neq j$, where $j \pref_i \worst_M(i)$ and $i \pref_j \worst_M(j)$, and (2) there is no pair $(i,i)$, $i \in [k]$, such that there is at least two agents of type $i$ and $i \pref_i \secondworst(i)$.
Then there is a stable matching $M'$ such that, for every $(i,j) \in [k]^{(2)}$, both $M$ and $M'$ contain the same number of pairs that consist of one agent of type $i$ and another of type $j$.  Moreover, given $M$, we can compute $M'$ in time $\mathcal{O}(kn)$.
\end{lemma}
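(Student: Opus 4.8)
The plan is to take the matching $M$ and transform it, without changing the multiset of type-pairs used, into a matching $M'$ that is actually stable in the consistently-refined-typed instance $I$. The starting observation is that, by Lemma~\ref{stability-test-srti} applied to the underlying typed instance $I^{\ast}$ (obtained from $I$ by coarsening each within-type preference to the induced preference over types), the hypotheses (1) and (2) say precisely that $M$ would be stable if all agents were indifferent within types. So any blocking pair of $M$ in $I$ must be a pair $(x,y)$ with $x$ of type $i$ and $y$ of type $j$ where $x$ \emph{could} be matched to type $j$ and $y$ to type $i$ under the type-level preferences, but the finer within-type refinement makes $x$ and $y$ strictly prefer each other to their actual partners. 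The task is therefore to reshuffle \emph{which} agents of type $i$ are paired with \emph{which} agents of type $j$, within each block of pairs of a fixed unordered type-pair $\{i,j\}$, so as to kill all such finer blocking pairs while preserving, for every $(i,j)$, the number of pairs consisting of one agent of type $i$ and one of type $j$.

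The key step is to handle each ordered "slot structure" separately. For a fixed type $i$, group the agents of type $i$ according to the type of partner they are assigned in $M$ (this grouping is determined entirely by the $n_{\{i,j\}}$ values and is preserved by assumption). Now within type $i$ we have a linear (weak) order on the agents, and we want to assign the "better" agents of type $i$ to the "better partner-type" slots. Concretely, I would sort the agents of each type by their within-type rank and reassign: if $n_{\{i,j\}}$ agents of type $i$ are to be matched with agents of type $j$ and $j \succ_i j'$, then the agents of type $i$ filling the type-$j$ slots should be ranked weakly above those filling the type-$j'$ slots, and symmetrically for type $j$. Doing this consistently for both endpoints of every type-pair is the crux, and one must check these requirements do not conflict — here the third defining property of consistently-refined-typed instances (ties are unions of types) is essential, because it guarantees that the within-type order refines the type order, so "fill best slots with best agents" is globally consistent. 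After this rearrangement, suppose $(x,y)$ with $\type(x)=i$, $\type(y)=j$ were still blocking: then $y \succ_x M'(x)$ forces $j \succeq_i \type(M'(x))$ at the type level, and if equality held we would need $x$ to strictly prefer $y$ to $M'(x)$ despite both being type $j$ — but by construction $M'(x)$ is ranked weakly above every type-$j$ agent that $x$ is "eligible" to displace, contradiction; and if $j \succ_i \type(M'(x))$ strictly, combined with the symmetric statement for $y$ we get a type-level blocking pair, contradicting (1) (or (2) if $i=j$, using $\secondworst$). The same care with the $i=j$ case, where a type-$i$ agent is matched to another type-$i$ agent, needs the $\secondworst$ hypothesis, exactly as in Lemma~\ref{stability-test-srti}.

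The main obstacle I anticipate is making the "reassignment within type blocks" precise and verifying it is simultaneously consistent at both endpoints of every type-pair, especially in the non-bipartite case where a type $i$ can be matched to itself and an agent of type $i$ appears in two roles. I would make this rigorous by defining, for each type $i$, a single total preorder on $N_i$ and greedily assigning agents in decreasing order of this preorder to the available partner-types in decreasing order of $\succ_i$; then argue that the symmetric greedy assignment from the side of type $j$ produces the same pairing up to within-tie permutation (which is harmless since within a tie agents are genuinely interchangeable), again leaning on the union-of-types property. Finally, the running time: sorting all agents by within-type rank and performing the greedy reallocation touches each agent a bounded number of times per type, giving the claimed $\mathcal{O}(kn)$ bound; computing the type-level preferences from agent-level preferences is also $\mathcal{O}(kn)$ as already noted in Section~\ref{sec:def-refined-model}.
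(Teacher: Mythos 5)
Your overall strategy --- preserve the counts $n_{\{i,j\}}$, reassign which agents of type $i$ occupy the ``type-$j$ slots'', and then do a case analysis on the possible kinds of blocking pairs --- is the same as the paper's, but there is a genuine gap at the step you yourself identify as the crux. You propose to fix, for each type $i$, ``a single total preorder on $N_i$'' (a within-type rank) and to send the best agents of $N_i$ to the best partner types under $\succ_i$. No such single preorder exists in general: the definition of a consistently-refined-typed instance only forces agents \emph{of the same type} to have identical lists, so two candidate types $j$ and $j'$ may rank the agents of $N_i$ in completely different strict orders, and the union-of-types property constrains only ties, not these relative orders (see Example~\ref{ex:model2-1}, where types $2$ and $3$ are indifferent among the men while type $4$ is not). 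Concretely, take men $a,b$ of type $1$ with common list $w_2 \succ w_3$, a woman $w_2$ of type $2$ with list $a \succ b$, and a woman $w_3$ of type $3$ with list $b \succ a$. The matching $\{(a,w_2),(b,w_3)\}$ satisfies hypotheses (1) and (2), but if your single preorder on $N_1$ is $b \succ a$ (type $3$'s view), your greedy rule outputs $\{(b,w_2),(a,w_3)\}$, which is blocked by $(a,w_2)$. Your consistency claim (``the symmetric greedy assignment from the side of type $j$ produces the same pairing up to within-tie permutation'') also fails here, since $a$ and $b$ are not tied in anyone's list yet the two observer types disagree about who should fill the type-$2$ slot.

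The paper avoids this by making the selection perspective-dependent: when deciding which agents of type $i$ go to partners of type $j$, it processes the candidate types $j$ in decreasing order of $\succ_i$ and, at each step, removes the topmost \emph{still-available} agents of $N_i$ \emph{as ranked by type $j$} (its Step 1); it then pairs up the two blocks $A_{i,j}$ and $A_{j,i}$ by sorting each in the (common to all agents of the opposite block's type) preference order and matching positionally (its Step 2). Your write-up also leaves this second, within-block pairing step essentially unspecified, and it is needed to kill blocking pairs $(x,y)$ with $\type(M'(x))=\type(y)$ and $\type(M'(y))=\type(x)$. So the proposal needs both the perspective-dependent slot-filling rule and an explicit within-block stable pairing before the concluding case analysis goes through.
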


\begin{proof}
Let $n_{\{i,j\}}(M)$ denote the number of pairs in $M$ consisting of an agent of type $i$ and an agent of type $j$. We construct a stable matching $M'$ such that $n_{\{i,j\}}(M')=n_{\{i,j\}}(M)$. 
Let $\bigcup_{i,j\in [k], i\leq j} M'_{i,j} = M'$ be a decomposition of $M'$ where $M'_{i,j}$ is the projection of $M'$ onto agent types $i$ and $j$. The polynomial-time construction of $M'$ takes place in two steps. 

\vspace{10pt}

\textbf{Step 1:} To start with, all agents are available. For each type $i$, take the candidate types in type $i$'s decreasing order of preference with ties broken arbitrarily, $\langle i_1, i_2, \ldots i_{k_i}\rangle$, where $i_s$ denotes the type that is ranked $s$'th by type $i$. Starting with $j=i_1$, if $j\neq i$, take the topmost (from the perspective of an agent of type $j$) $n_{\{i,j\}}(M)$ available agents of type $i$, and put them in $A_{i,j}$; these agents become unavailable from now on. If $j=i$ then take the topmost $2 \cdot n_{\{i,i\}}(M)$ available agents of type $i$, and put them in $A_{i,i}$. $A_{i,j}$ includes agents of type $i$ that are to be matched to agents of type $j$. Note that as the preference lists of agents of type $j$ over agents of type $i$ may include ties, it may not be possible to determine exactly who are the topmost available $n_{\{i,j\}}(M)$ agents in type $i$. To be more precise, when going down the preference list of agents of type $j$ over available agents of type $i$, we may reach a tie $\tau$ including $z>1$ available agent where we need to pick $x<z$ number of them. If this happens, arbitrarily pick $x$ agents from $\tau$. This step can be done in time $\mathcal{O}(kn)$. 

\vspace{10pt}

\textbf{Step 2:} We now show how to generate $M'_{i,j}$ given $A_{i,j}$ and $A_{j,i}$ computed in Step 1. We do so by computing a complete stable matching amongst the agents in $A_{i,j} \cup A_{j,i}$.  To do this, we list the agents in $A_{i,j}$ in non-increasing order of preference with respect to $j$, and similarly for $A_{j,i}$, then pair up each agent with the agent having the same position on the other list.  The total time required to do this for all pairs $(i,j)$ is $\mathcal{O}(n)$.

\vspace{10pt}
It follows immediately from this construction that, for each $i$ and $j$, $n_{\{i,j\}}(M')=n_{\{i,j\}}(M)$.
So it only remains to prove that $M'$ is stable. Assume for a contradiction that $M'$ admits a blocking pair $(a,b)$ where $a$ , $b$, $M'(a)$ and $M'(b)$ are of types $i$, $j$, $j'$ and $i'$ respectively. Five ``kinds'' of blocking pairs are possible, depending on how $a$ and $b$ compare each others' types against the types of their partners. We examine each of them and show that $M'$ can admit none. 

\begin{itemize}
	\item \emph{Suppose that $i\neq j$, $a$ prefers type $j$ to type $j'$, and $b$ prefers type $i$ to type $i'$.} In the assumption of the lemma we have that in the given $M$ there is no pair $(i,j) \in [k]^{(2)}$ where $j \pref_i \worst_M(i)$ and $i \pref_j \worst_M(j)$. By construction of $M'$, $\worst_M(i)$ remains unchanged under $M'$ for all types $i$. It thus directly follows that $M'$ cannot admit such a blocking pair.
	 \item \emph{Suppose that $j\tie_{i} j'$ ($j\neq j'$) or $i\tie_{j} i'$ ($i\neq i'$).}  We may assume that $j\tie_{i} j'$, $j\neq j'$; a symmetric argument holds when $i\tie_{j} i'$, $i\neq i'$.  Then, by our assumption that the instance is consistently-refined-typed, any agent of type $i$, and hence $a$, is indifferent between all agents who are of type $j$ or type $j'$. Therefore, $M(a) \tie_{a} b$ and $(a,b)$ cannot be a blocking pair. 
  	\item \emph{Suppose that $i=i'$ and $j=j'$.} The existence of such a blocking pair implies that $M'_{i,j}$ constructed in Step 2 is not  stable with respect to the preferences of agents in $A_{i,j} \cup A_{j,i}$, a contradiction.
    \item \emph{Suppose that either $i=i'$ or $j=j'$, but not both.} Without loss of generality assume that $j=j'$ and $b$ prefers type $i$ to type $i'$. Since $a$ prefers $b$ to $M'(a)$, and $M'(a) \in A_{j,i}$, it follows from the construction in Step 1 that, since $b$ is not in $A_{j,i}$ and is in $A_{j,i'}$, any agent of type $j$ (including $b$) either prefers type $i'$ to type $i$ or is indifferent between them, a contradiction. 
		\item \emph {Suppose that $i=j$ so that $a$ and $b$ are of the same type.} In the assumption of the lemma we have that given $M$ there is no pair $(i,i) \in [k]^{(2)}$ where $i \pref_i \secondworst_M(i)$. By the construction of $M'$, $\secondworst_M(i)$ remains unchanged under $M'$ for all types $i$. It thus directly follows that $M'$ cannot admit such a blocking pair.
\hfill \qedsymbol
\end{itemize}
\renewcommand{\qedsymbol}{}
\end{proof}

Let $I$ be a consistently-refined-typed instance of SRTI and let $I'$ be a typed instance of SRTI that is obtained from $I$ by ignoring the refined preferences within each type (i.e. every agent is indifferent between the candidates of the same type). It follows from the definition of stability that every matching that is stable in $I$ is also stable in $I'$. Lemma \ref{stable-exists} implies that for any stable matching $M$ in $I'$, there exists a stable matching $M'$ in $I$ of the same cardinality as $M$. Thus, in order to find a maximum cardinality matching in a consistently-refined-typed instance $I$ of SRTI, it suffices to (1) solve the typed problem (i.e. ignore the refined preferences within each type) and then (2) use the algorithm provided in the proof of Lemma \ref{stable-exists} to convert the solution to a matching of the same cardinality that is stable in the instance $I$. 
Deriving a typed instance from a consistently-refined-typed instance can be done easily in time $\mathcal{O}(kn)$. It thus follows that \textsc{Consistently-Refined-Typed Max SRTI} is in \cplxty{FPT} parameterised by the number $k$ of different types in the instance.
\begin{theorem}\label{model2-maxsrti}
\textsc{Consistently-Refined-Typed Max SRTI} can be solved in time $k^{\mathcal{O}(k^2)}\log^3 n + \mathcal{O}(kn)$.
\end{theorem}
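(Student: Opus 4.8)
The plan is to reduce the consistently-refined-typed problem to the typed problem already solved in Corollary~\ref{cor:typed-maxsrti-fpt}, and then repair the resulting matching using Lemma~\ref{stable-exists}. First I would form the typed instance $I'$ obtained from $I$ by forgetting the refinements within each type (so that every agent becomes indifferent between all candidates of a common type); by the remark following Example~\ref{ex:model2-2} this can be done in time $\mathcal{O}(kn)$, and the preferences over types required as input for the typed algorithm are exactly the orderings $\succ_i$ derived from the consistently-refined preferences, which are well-defined thanks to the third criterion in the definition of a consistently-refined-typed instance.

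Next I would argue that the maximum stable cardinality is preserved between $I$ and $I'$. In one direction, any matching $M$ that is stable in $I$ is also stable in $I'$: a blocking pair in $I'$ requires one agent to strictly prefer the other agent's type to the type of its current partner (in both coordinates), but by the way ``type $j$ is preferred to type $j'$'' is defined in $I$, a strict type preference in $I'$ forces a strict preference between the actual agents in $I$, so that same pair would already block $M$ in $I$. Hence the largest stable matching in $I$ has cardinality at most that in $I'$. In the other direction, take a maximum stable matching $M$ in $I'$; applying the ``only if'' direction of Lemma~\ref{stability-test-srti} to $I'$ shows that $M$ satisfies conditions (1) and (2) of the hypothesis of Lemma~\ref{stable-exists}, so Lemma~\ref{stable-exists} yields a matching $M'$ that is stable in $I$ with $n_{\{i,j\}}(M')=n_{\{i,j\}}(M)$ for every pair of types, hence $|M'|=|M|$. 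Therefore the two quantities coincide and $M'$ is an optimal solution for $I$.

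Finally I would account for the running time: constructing $I'$ costs $\mathcal{O}(kn)$; solving \textsc{Typed Max SRTI} on $I'$ and obtaining an optimal stable matching $M$ costs $k^{\mathcal{O}(k^2)}\log^3 n + \mathcal{O}(n)$ by Corollary~\ref{cor:typed-maxsrti-fpt}; and converting $M$ to $M'$ via the two-step construction in the proof of Lemma~\ref{stable-exists} costs $\mathcal{O}(kn)$. Summing these yields $k^{\mathcal{O}(k^2)}\log^3 n + \mathcal{O}(kn)$, as claimed. The one point that needs care — and which I regard as the crux — is verifying that a maximum stable matching of $I'$ genuinely satisfies the precise hypotheses of Lemma~\ref{stable-exists}; this is immediate from Lemma~\ref{stability-test-srti}, but it is essential, since Lemma~\ref{stable-exists} is exactly what guarantees that reintroducing the within-type refinements loses no cardinality.
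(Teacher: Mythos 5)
Your proposal is correct and follows essentially the same route as the paper: relax to the typed instance, solve it via Corollary~\ref{cor:typed-maxsrti-fpt}, and then use the construction of Lemma~\ref{stable-exists} (whose hypotheses are supplied by Lemma~\ref{stability-test-srti}) to recover a stable matching of the same cardinality in the refined instance, with the same $k^{\mathcal{O}(k^2)}\log^3 n + \mathcal{O}(kn)$ accounting. If anything, you spell out more explicitly than the paper does why stability in $I$ implies stability in $I'$ and why the optimum of $I'$ satisfies the hypotheses of Lemma~\ref{stable-exists}, which is a welcome clarification rather than a deviation.
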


We also have the following immediate corollary.

\begin{corollary}\label{model1-2-maxhrt}
\textsc{Consistently-Refined-Typed Max SMTI} and \textsc{Consistently-Refined-Typed Max HRT} are in $\FPT$ parameterised by the number $k$ of different types in the instance.
\end{corollary}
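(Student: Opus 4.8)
The plan is to derive Corollary \ref{model1-2-maxhrt} as a direct consequence of Theorem \ref{model2-maxsrti}, using the same embedding relationships between the problems that were already established earlier in the paper. First I would observe that SMTI is a bipartite special case of SRTI, and that this specialisation respects the typed (and consistently-refined-typed) structure: if an instance of SMTI is consistently-refined-typed with $k$ types, then it is also a consistently-refined-typed instance of SRTI with $k$ types, since the defining conditions (identical restricted preference lists within a type, agents of a type occurring consecutively, ties being unions of types) are preserved verbatim when we forget the bipartition. Moreover a matching is stable in the SRTI sense precisely when it is stable in the SMTI sense, because in a bipartite instance no two agents on the same side can block each other. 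Hence running the algorithm of Theorem \ref{model2-maxsrti} on such an instance solves \textsc{Consistently-Refined-Typed Max SMTI}, placing it in \cplxty{FPT} parameterised by $k$.

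Next I would handle \textsc{Consistently-Refined-Typed Max HRT} via the standard cloning argument already recalled in the preliminaries: given an HRT instance, replace each hospital $h$ by $q(h)$ agents each of capacity one, obtaining an SMTI instance whose maximum stable matchings correspond exactly to those of the original HRT instance. The key point to check is that cloning preserves bounded type-count: the $q(h)$ clones of a single hospital all share the same preference list and are ranked identically by every resident (residents are indifferent among clones of the same hospital, since the original instance presents them a single hospital), so the clones all belong to a single type; likewise residents with identical preferences remain a single type. Thus an HRT instance describable with $k$ types yields an SMTI instance describable with $k$ types, and the consistently-refined-typed property is inherited. Applying the SMTI result then gives the HRT result.

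The main obstacle — really the only point requiring care — is verifying that the consistently-refined-typed structure, and in particular the third condition that ties spanning several types are unions of types, survives the cloning transformation. One must confirm that the clones of one hospital genuinely form a consecutive, tie-respecting block in every resident's list (they do, being indistinguishable copies) and that the preference list each clone holds over residents is a legitimate consistently-refined-typed list with the same type partition of residents as in the original instance. Once these bookkeeping checks are in place, the corollary follows immediately with no new algorithmic content, and the running times are those of Theorem \ref{model2-maxsrti} up to the polynomial overhead of performing the reductions (which is dominated by the $\mathcal{O}(kn)$ and polynomial terms already present).

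\begin{proof}
This follows immediately from Theorem \ref{model2-maxsrti} together with the reductions described in Section \ref{sec:preliminaries}. Any consistently-refined-typed instance of SMTI is in particular a consistently-refined-typed instance of SRTI with the same number of types, and (since a bipartite instance admits no blocking pair consisting of two agents on the same side) the stable matchings coincide; hence \textsc{Consistently-Refined-Typed Max SMTI} reduces to \textsc{Consistently-Refined-Typed Max SRTI} and is in \cplxty{FPT} parameterised by $k$. For HRT, apply the cloning argument: replace each hospital $h$ by $q(h)$ copies of capacity one. The copies of a single hospital have identical preference lists and every resident is indifferent among them (the original instance presents residents with a single hospital), so they form one type; residents' types are unchanged. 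The resulting SMTI instance is consistently-refined-typed with the same number $k$ of types, and its maximum stable matchings correspond exactly to those of the original HRT instance. Applying the SMTI result completes the proof.
\end{proof}
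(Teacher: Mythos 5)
Your proposal is correct and follows exactly the route the paper intends for this (unproved, ``immediate'') corollary: specialise Theorem \ref{model2-maxsrti} to the bipartite case, and handle HRT by the cloning argument already recalled in the preliminaries, checking that cloning preserves the type count and the consistently-refined-typed structure. The bookkeeping you supply (clones of one hospital forming a single indistinguishable block, hence not increasing $k$) is precisely the detail the paper leaves implicit.
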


\subsection{Strict preferences over types}
Elsewhere in the paper, we assume that agents can be indifferent between agents of two or more types.  We now show that \textsc{Max SRTI} becomes easier if we restrict the set of possible instances by assuming that agents have strict preferences over types. 
This argument is based on a private communication with David Manlove.
\begin{theorem}\label{thm:refined-smti-strict}
When preferences over types are strict, \textsc{Typed Max SRTI} and \textsc{Consistently-Refined-Typed Max SRTI} are polynomial-time solvable. Furthermore, all stable matchings (if any exists) are of the same size.
\end{theorem}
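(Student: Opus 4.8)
The plan is to reduce, in two stages, to a \emph{tie-free} capacitated stable matching problem, for which the two desired conclusions are classical. For the first stage, observe that by Lemma~\ref{stable-exists} together with the discussion preceding Theorem~\ref{model2-maxsrti}, the set of sizes of stable matchings of a consistently-refined-typed instance $I$ coincides with that of the typed instance $I'$ obtained by forgetting the within-type refinements: every stable matching of $I$ is stable in $I'$, and every stable matching of $I'$ has a stable counterpart of the same size in $I$ which is computable in time $\mathcal{O}(kn)$; moreover preferences over types remain strict in $I'$. Hence it suffices to establish both assertions for \textsc{Typed Max SRTI}, and then transfer them to \textsc{Consistently-Refined-Typed Max SRTI} via the polynomial-time conversion of Lemma~\ref{stable-exists}.

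For the typed instance $I'$, recall from Lemma~\ref{stability-test-srti} that stability of a matching $M$ depends only on its \emph{type-signature} $(n_{\{i,j\}})$, which is exactly a $b$-matching in the ``type graph'': the graph on vertex set $[k]$ (with an extra dummy vertex for unmatched agents), with an edge $\{i,j\}$ whenever types $i$ and $j$ find each other acceptable, each real vertex $i$ carrying capacity $|N_i|$ and the preference order $\succ_i$ over its incident edges. Because preferences over types are strict, this is a capacitated stable matching instance \emph{without ties} --- concretely an instance of the \textsc{Stable Fixtures} problem, the many-to-many generalisation of stable roommates (for typed SMTI the type graph is bipartite and one gets the corresponding bipartite problem). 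I would then verify, reading straight off Lemma~\ref{stability-test-srti}, that $M$ is stable in $I'$ if and only if its type-signature is a stable configuration of this instance --- with under-subscription at vertex $i$ corresponding to some type-$i$ agent being unmatched, and the ``second-worst'' clause of Lemma~\ref{stability-test-srti} corresponding to the blocking condition for a loop at $i$ --- and, conversely, that any stable configuration blows up (by pairing agents arbitrarily within each type class) to a stable matching of $I'$ using exactly the same capacity at each real vertex, hence of the same size.

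Two facts about this tie-free capacitated problem then finish the argument. First, there is a polynomial-time algorithm that either returns a stable configuration or certifies that none exists (a capacitated extension of Irving's algorithm for stable roommates; see~\cite{Irv86}, and \cite{Man13} for background); since the type graph has at most $k\le n$ vertices, $\mathcal{O}(k^2)$ edges, and all capacities at most $n$, running this algorithm and then blowing up the configuration (or reporting ``no stable matching'', which is correct by the correspondence) solves \textsc{Typed Max SRTI} in polynomial time, and composition with Lemma~\ref{stable-exists} does the same for the consistently-refined-typed case. Second, a Rural-Hospitals-type invariance theorem holds for this problem: the amount of capacity used at each vertex is the same in every stable configuration. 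Pushed through the correspondence, this says every stable matching of $I'$ --- and hence, by the first stage, of $I$ --- has the same size; in particular, once any stable matching exists it is automatically of maximum cardinality, so \textsc{Max SRTI} reduces to producing any stable matching.

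The steps I expect to need the most care are the two directions of the stability correspondence, and in particular: arranging the dummy vertex so that ``matched to a dummy'' behaves exactly like ``unmatched'' (a free slot should not generate a blocking pair); and handling the case where a type finds \emph{itself} acceptable, which corresponds to a loop in the type graph and so sits slightly outside the most standard \textsc{Stable Fixtures} formulation --- this needs either a minor extension of that machinery (loops of some multiplicity, each loop-pair consuming two units of capacity, together with the corresponding invariance statement) or a short separate argument for the intra-type pairs. Strictness of the type preferences is essential throughout: without it the problem contains \textsc{Max SRT}, which is \cplxty{NP}-hard, so no reduction to a tie-free problem could exist.
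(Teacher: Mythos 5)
Your first stage is exactly the paper's: use Lemma~\ref{stable-exists} to transfer both claims between the consistently-refined instance and its typed relaxation. The second stage, however, diverges substantially, and this is where the gap lies. You contract each type to a single capacitated vertex and appeal to two facts about the resulting tie-free non-bipartite capacitated problem: (i) a polynomial-time algorithm that finds a stable configuration or reports none exists, and (ii) a rural-hospitals-type invariance of the capacity used at each vertex. Neither is established by the references you point to. Irving's algorithm \cite{Irv86} and its extension to incomplete lists \cite{fGI89} are strictly one-to-one; your type graph needs capacities, \emph{edge multiplicities} (the quantity $n_{\{i,j\}}$ can far exceed $1$, so this is a $b$-matching in a multigraph, not an instance of \textsc{Stable Fixtures} in the usual sense), and loops with the second-worst blocking rule of Lemma~\ref{stability-test-srti}. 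Each of these generalisations is a real theorem in its own right (Irving--Scott for fixtures, Cechl\'arov\'a--Fleiner and successors for multiple partners), and the loop case in particular is not a ``minor extension'': you flag it yourself but do not resolve it. As written, the proof rests on unproved (though true) statements about a problem the cited literature does not cover.

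The paper avoids all of this by never contracting types to vertices. It takes the consistently-refined instance $I$, breaks the remaining within-type ties arbitrarily but consistently, and observes that strictness of the preferences over types makes the result a genuine SRI instance $I'$ on the original $n$ agents. The classical facts then apply verbatim: all stable matchings of an SRI instance have the same size and one can be found (or nonexistence reported) in polynomial time \cite{fGI89}. A stable matching of $I'$ is stable in $I$ because breaking ties only creates blocking pairs, and Lemma~\ref{stable-exists} (applied to the common typed relaxation $I_0$ of $I$ and $I'$) shows the possible sizes of stable matchings of $I$, $I'$ and $I_0$ all coincide, giving both maximality and the equal-size claim. If you want to salvage your route, the cleanest fix is essentially to reinstate this step: replace the capacitated type graph by its agent-level expansion with ties broken consistently, which is the paper's $I'$, and then your appeals to capacitated machinery become appeals to Gusfield--Irving and are no longer needed. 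Your closing remark that strictness over types is essential (since otherwise the problem contains \textsc{Max SRT}) is correct and worth keeping.
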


\begin{proof}
Since \textsc{Typed Max SRTI} is a special case of \textsc{Consistently-Refined-Typed Max SRTI}, it suffices to demonstrate that the result holds for the latter problem.
Let $I$ be an instance of \textsc{Consistently-Refined-Typed Max SRTI}, and let $I'$ be another instance of \textsc{Consistently-Refined-Typed Max SRTI} obtained from $I$ by breaking any remaining ties in the preference lists arbitrarily and consistently (so that agents of the same type still have identical preference lists).  Note that $I'$ is in fact an instance of SRI. 
All stable matchings in an instance of SRI (if it admits any) have the same cardinality \cite{fGI89} and so we can find a maximum cardinality stable matching in $I'$, or report that none exists, in polynomial time \cite{fGI89}. We will argue that if $I'$ admits no stable matching then neither does $I$,  otherwise, a maximum cardinality stable matching $M$ in $I'$ is in fact a maximum cardinality stable matching in $I$. Therefore \textsc{Consistently-Refined-Typed Max SRTI} is polynomial-time solvable.

Consider $I_0$, the typed instance of SRTI obtained from $I$ by ignoring preferences within types (i.e. placing all agents of the same type in a single tie).  Note that (as preferences over types are strict in $I$) if we apply the same process to $I'$, we also obtain the same typed instance $I_0$.
By Lemma \ref{stable-exists} we know that, given any stable matching $M_0$ in $I_0$, there is a stable matching $M'$ in $I'$ of the same cardinality as $M_0$. Therefore, if $I'$ admits no stable matching neither does $I_0$. Furthermore, if $I_0$ admits no stable matching then it follows from the definition of stability that neither does $I$.

Now assume that $I'$ admits a stable matching. It is clear that $M$ must be stable in $I$, since (as each agent's preference list in $I'$ is obtained from that in $I$ by breaking ties) any blocking pair with respect to $M$ in the instance $I$ would also be a blocking pair in $I'$. It remains to argue that there cannot be any stable matching of cardinality greater than $|M|$ in $I$. 
Since all stable matchings in $I'$ have the same cardinality, we can conclude that (following Lemma \ref{stable-exists}) every stable matching in $I_0$ has the same cardinality as $M$.  Moreover, as any matching that is stable in $I$ must also be stable in $I_0$, it follows that all stable matchings in $I$ have the same cardinality as $M$. Thus $M$ is a maximum cardinality stable matching in $I$, as required.
\end{proof}

The above result, combined with Corollary \ref{model1-2-maxhrt}, gives us the following result.

\begin{corollary}\label{thm:hrt-strict}
When preferences over types are strict, \textsc{(Consistently-Refined-)Typed Max SMTI} and \textsc{(Consistently-Refined)-Typed Max HRT} are polynomial-time solvable. Furthermore, all stable matchings are of the same size.
\end{corollary}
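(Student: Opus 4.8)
The plan is to read off both statements from Theorem~\ref{thm:refined-smti-strict} (the strict‑preferences result for \textsc{(Consistently-Refined-)Typed Max SRTI}) by exploiting two structure‑preserving reductions. For the SMTI part, observe that SMTI is exactly the bipartite special case of SRTI, so any (consistently-refined-)typed instance of SMTI with strict preferences over types is also a (consistently-refined-)typed instance of SRTI with strict preferences over types; Theorem~\ref{thm:refined-smti-strict} then applies directly and yields both polynomial‑time solvability of \textsc{(Consistently-Refined-)Typed Max SMTI} and the claim that all its stable matchings have the same cardinality.

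For the HRT part I would use the standard cloning reduction recalled in Section~\ref{sec:definitions}: replace each hospital $h$ by $q(h)$ unit‑capacity agents $h^{(1)},\dots,h^{(q(h))}$, each inheriting $h$'s preference list, and let every resident treat $h^{(1)},\dots,h^{(q(h))}$ as a single tie wherever $h$ appeared. The points to verify are that (i) the resulting SMTI instance is still (consistently-refined-)typed with the same number $k$ of types --- the clones of $h$ all get $\type(h)$, they sit consecutively (indeed as a tie confined to one type, so the union‑of‑types condition of Section~\ref{sec:def-refined-model} is never invoked), and the identical‑list‑within‑a‑type condition is inherited from the original instance; (ii) preferences over types stay strict, since every clone of $h$ has $\type(h)$ and no new type appears; and (iii) the reduction preserves stability and cardinality in the usual way --- a stable HRT matching $M$ induces a stable SMTI matching of the same size by filling $h^{(1)},\dots,h^{(|M(h)|)}$ with $M(h)$ and leaving the remaining clones unmatched, and conversely a stable SMTI matching projects back to a stable HRT matching of the same size. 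Applying the SMTI result to the cloned instance and transporting it back through (iii) then gives polynomial‑time solvability of \textsc{(Consistently-Refined-)Typed Max HRT} and, because the set of sizes of stable matchings is the same on both sides, the fact that all stable HRT matchings are of equal size.

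The corollary is essentially a bookkeeping exercise, so I do not expect a real obstacle; the only slightly delicate steps are (i), which just requires checking the definitions of Sections~\ref{sec:def-basic-model} and~\ref{sec:def-refined-model} against the cloned instance, and (iii), which is the classical correspondence between HR and its cloned SMI form and can be cited from \cite{Man13,RS90} rather than reproved.
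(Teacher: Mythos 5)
Your proposal is correct and follows essentially the same route as the paper, which obtains this corollary by combining Theorem~\ref{thm:refined-smti-strict} (applied to SMTI as the bipartite special case of SRTI) with the standard cloning reduction from HRT to SMTI underlying Corollary~\ref{model1-2-maxhrt}. You merely spell out the (routine but worthwhile) checks that cloning preserves the (consistently-refined-)typed structure, the strictness of preferences over types, and the size-and-stability correspondence.
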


\subsection{FPT algorithms for \textsc{Consistently-refined-typed Max Size Min BP SMTI} and \textsc{Consistently-refined-typed Max Size Min BA SMTI}}\label{sec:refined-typed-minbp}

In this section we show how we can apply the ideas from the previous section to solve \textsc{Max Size Min BP/BA SMTI} for consistently refined typed instances.

\begin{corollary}
\textsc{Consistently-refined-Typed Max Size Min BP SMTI} and \textsc{consistently-refined-Typed Max Size Min BA SMTI} belong to $\FPT$ when parameterised by the number of types.
\end{corollary}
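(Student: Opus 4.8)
The plan is to reduce, for a consistently-refined-typed instance $I$, each of the two problems to its counterpart on the derived typed instance $I_0$ obtained by forgetting the refinements within each type, to solve that with the machinery of Theorems~\ref{thm:min-bp-fpt} and~\ref{thm:min-ba-fpt}, and then to turn the resulting optimal type-count vector into an actual matching of $I$ by running Steps~1 and~2 of the construction in the proof of Lemma~\ref{stable-exists}. Concretely, I would first compute the maximum matching size $C_{\max}$ (the same for $I$ and $I_0$, since it ignores preferences); then run the \textsc{Integer Quadratic Programming} instance of Theorem~\ref{thm:min-bp-fpt} on $I_0$ (respectively the family of \textsc{Integer Linear Programs} of Theorem~\ref{thm:min-ba-fpt}, one per type-signature) to obtain an optimal type-count vector $\mathbf{n}^{*}$; and finally apply the polynomial-time construction of Lemma~\ref{stable-exists} to $\mathbf{n}^{*}$ to output a matching $M^{*}$ of $I$ with $n_{\{i,j\}}(M^{*}) = n^{*}_{\{i,j\}}$ for all $i,j$, in particular of size $C_{\max}$.

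The correctness of this reduction rests on the claim that the minimum number of blocking pairs (respectively blocking agents) over all maximum matchings is the same in $I$ as in $I_0$. One inequality is immediate: $I$ and $I_0$ have the same acceptability relation, hence the same matchings, and by the defining property of consistently-refined-typed instances, whenever each of two agents strictly prefers the other's type to the type of its current partner in $I_0$, the same two agents strictly prefer each other to their partners in $I$; hence every pair blocking a matching $M'$ in $I_0$ also blocks $M'$ in $I$, so every agent blocking in $I_0$ blocks in $I$, and the $I$-minimum is at least the $I_0$-minimum. For the reverse inequality I would take an $I_0$-optimal type-count vector $\mathbf{n}^{*}$ and the matching $M^{*}$ built from it by Lemma~\ref{stable-exists}: since the number of blocking pairs, and the number of blocking agents, of a matching in a typed instance are determined by its type-count vector (the formulas underlying Theorems~\ref{thm:min-bp-fpt} and~\ref{thm:min-ba-fpt}, the latter via Lemma~\ref{lma:num-ba}), $M^{*}$ has, regarded as a matching of $I_0$, exactly the $I_0$-optimal values; so it suffices to show that $M^{*}$ has no blocking pair in $I$ beyond those it already has in $I_0$ (the blocking agents then agree as well), whence $M^{*}$ is a maximum matching of $I$ witnessing the reverse inequality.

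The main obstacle is this last claim, which I would prove by re-examining the construction in the proof of Lemma~\ref{stable-exists} without invoking the stability hypotheses used there. Suppose $(x,y)$ blocks $M^{*}$ in $I$ but not in $I_0$, with $x$ of type $i$, $y$ of type $j$, $M^{*}(x)$ of type $j'$ and $M^{*}(y)$ of type $i'$. Since $x$ strictly prefers $y$ to $M^{*}(x)$ in $I$ while type $i$ does not strictly prefer type $j$ to type $j'$, the consistently-refined property forces $j = j'$ (the case $j \ne j'$ with $j \tie_i j'$ is impossible, as then $x$ would be indifferent between $y$ and $M^{*}(x)$); symmetrically one could instead have $i = i'$. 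If $j = j'$ and $i = i'$, then $x$, $y$, $M^{*}(x)$ and $M^{*}(y)$ all lie in the single block $A_{i,j}\cup A_{j,i}$, contradicting the stability within that block guaranteed by Step~2. If $j = j'$ but $i \ne i'$, then $y$ strictly preferring $x$ (type $i$) to $M^{*}(y)$ (type $i'$) forces $i \succ_j i'$, so in Step~1 the set $A_{j,i}$ is filled before $A_{j,i'}$; since $M^{*}(x) \in A_{j,i}$, $y \in A_{j,i'}$, and $y$, being destined for $A_{j,i'}$, is still available when $A_{j,i}$ is filled, $y$ was passed over in favour of $M^{*}(x)$ among type $i$'s most preferred available agents of type $j$, so type $i$ does not strictly prefer $y$ to $M^{*}(x)$ --- a contradiction; the remaining case is symmetric. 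This establishes the reduction, and since its running time is that of the (FPT) \textsc{Integer Quadratic/Linear Programming} calls of Theorems~\ref{thm:min-bp-fpt} and~\ref{thm:min-ba-fpt} plus the polynomial-time construction of Lemma~\ref{stable-exists}, both problems lie in $\FPT$ parameterised by $k$.
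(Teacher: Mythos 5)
Your proposal is correct and follows essentially the same route as the paper: solve the problem on the typed relaxation $I_0$ via the IQP/ILP machinery of Theorems~\ref{thm:min-bp-fpt} and~\ref{thm:min-ba-fpt}, then realise the optimal type-count vector as a matching of $I$ using the two-step construction from Lemma~\ref{stable-exists}, observing that this introduces no blocking pairs beyond those already counted in $I_0$ while every blocking pair in $I_0$ persists in $I$. The only difference is presentational --- you explicitly redo the case analysis of Lemma~\ref{stable-exists} without its stability hypotheses, which the paper leaves implicit --- so this is a slightly more self-contained write-up of the same argument.
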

\begin{proof}
We use the same strategy for both problems. First, we solve the problem for the corresponding typed instance (ignoring the more refined preferences within types).  Note that the number of blocking pairs or blocking agents achieved in this simplification of the problem clearly gives a lower bound on the minimum number that can be achieved if we take into account all information; we will argue that in fact we can always obtain a matching which does not increase either quantity when we take into account the full preference lists.

To do this, we follow the method described in Lemma \ref{stable-exists}.  Given the number of pairs $n_{\{i,j\}}$ of type $i$ and $j$ for each $1 \leq i < j \leq k$, this method allows us to construct a matching $M$ where, for each $1 \leq i < j \leq k$ we have exactly $n_{\{i,j\}}$ pairs involving one agent of type $i$ and one of type $j$, and there is no blocking pair $(x,y)$ such that $x$ is currently matched to an agent of the same type as $y$.  Thus the only blocking pairs in $M$ are those of the form $(x,y)$ where $x$ is of type $i$ and $y$ of type $j$, and $j \pref_i \type(M(x))$ and $i \pref_j \type(M(y))$.  But these are precisely the blocking pairs that occur in the relaxation to a typed instance.

Thus we can indeed obtain a solution to \textsc{Max Size Min BP SMTI} or \textsc{Max Size Min BA SMTI} by applying the appropriate algorithm to find the number of pairs of each type under the relaxation to a typed instance, and then use the method of Lemma \ref{stable-exists} to extend this to a matching which does not introduce any additional blocking pairs when the full preference lists are taken into consideration.
\end{proof}

Combining this result with the techniques of Section \ref{sec:typed-minbp}, we obtain the following corollary.

\begin{corollary}
The following problems all belong to $\FPT$ when parameterised by the total number of types:
\begin{itemize}
\item \textsc{Consistently-refined-Typed Max Size Min BP SRTI},
\item \textsc{Consistently-refined-Typed Max Size Min BA SRTI},
\item \textsc{Consistently-refined-Typed Min BP SRTI},
\item \textsc{Consistently-refined-Typed Min BA SRTI}, and
\item \textsc{Consistently-refined-Typed EXACT BP SRTI}.
\end{itemize}
\end{corollary}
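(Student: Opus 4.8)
The plan is to reduce each of the five problems, on a consistently-refined-typed instance $I$, to the corresponding problem on the typed relaxation $I_0$ obtained (as in Section~\ref{sec:refined-typed}) by forgetting the refinement within each type, to solve that problem using the FPT algorithms of Section~\ref{sec:typed-minbp}, and then to lift an optimal solution back to $I$ via the two-step construction in the proof of Lemma~\ref{stable-exists}. Concretely, I would first compute $C_{\max}$, the maximum size of a matching in $I$, which equals the maximum size of a matching in $I_0$ because the two instances have the same acceptability graph. I would then run the relevant Section~\ref{sec:typed-minbp} algorithm on $I_0$: the IQP of Theorem~\ref{thm:min-bp-fpt}, with or without the size-$C_{\max}$ constraint, for the \textsc{(Max Size) Min BP} variants; the ILP of Theorem~\ref{thm:min-ba-fpt}, likewise with or without the size constraint, for the \textsc{(Max Size) Min BA} variants; and the feasibility formulation of \textsc{Typed EXACT BP SRTI}, in which the blocking-pair count is moved from the objective into a constraint equal to $Z$. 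Each of these produces, in FPT time, an optimal (or feasible) type-profile $(n_{\{i,j\}})$; feeding $(n_{\{i,j\}})$ into Steps~1 and~2 of Lemma~\ref{stable-exists} yields a matching $M'$ of $I$ realising exactly that profile.

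Correctness rests on two facts. First, collapsing preferences to types can never create a blocking pair: if $j \succ_i \type(M(x))$ and $i \succ_j \type(M(y))$ as strict preferences over types, then by the third clause in the definition of a consistently-refined-typed instance every agent of type $i$ ranks every agent of type $j$ strictly above every agent of type $\type(M(x))$, so $x$ and $y$ genuinely block $M$ in $I$. Hence, for any matching of $I$ with type-profile $(n_{\{i,j\}})$, the number of blocking pairs (respectively blocking agents) it has in $I$ is at least the value that the relevant formula of Section~\ref{sec:typed-minbp} assigns to $(n_{\{i,j\}})$; this gives the lower bound. Second, the matching $M'$ produced by the Lemma~\ref{stable-exists} construction meets this lower bound: re-running the five-case analysis from that proof — in which the three middle cases use only the construction and not the stability hypothesis on the input profile — shows that the only blocking pairs (and hence blocking agents) of $M'$ are the ``pure cross-type'' and ``pure same-type'' ones, which are exactly those already counted in the typed relaxation. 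Thus the optimum of each \textsc{Min BP/BA SRTI} problem (and of \textsc{Max Size Min BP/BA SRTI}, after adding the size-$C_{\max}$ constraint on both sides) coincides for $I$ and $I_0$ and is attained by $M'$; and for \textsc{EXACT BP SRTI}, the lift turns any type-profile realising exactly $Z$ blocking pairs in $I_0$ into a matching of $I$ with exactly $Z$ blocking pairs.

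Passing from \textsc{SMTI} to the non-bipartite \textsc{SRTI} needs only the adjustments already present in Section~\ref{sec:typed-minbp}: in the blocking-pair count, $n_{\{i,j\}}^2$ is replaced by $n_{\{i,j\}}(n_{\{i,j\}}-1)/2$ for a pair of same-type agents matched to the same type, with the analogous care for blocking agents, and the construction of Lemma~\ref{stable-exists} is already stated for \textsc{SRTI} (including the same-type blocks $A_{i,i}$), so nothing further is required there. The running time is dominated by the invocations of Theorems~\ref{thm:ILP} and~\ref{thm:IQP} on instances with $\mathcal{O}(k^2)$ variables and coefficients of bounded absolute value, together with polynomial-time computation of $C_{\max}$ and of the lift.

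The main obstacle is the second correctness fact: checking that the Lemma~\ref{stable-exists} lift introduces no blocking pair or blocking agent beyond those present in the typed relaxation, so that the matching we build in $I$ genuinely attains the typed optimum. This is a re-inspection of the case analysis in the proof of Lemma~\ref{stable-exists} with ``stable'' relaxed to ``has no more blocking pairs than its typed relaxation'', with particular attention to same-type pairs in the non-bipartite count; everything else — the ILP/IQP set-up and the complexity bookkeeping — follows routinely from Section~\ref{sec:typed-minbp}.
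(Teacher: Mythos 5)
Your proposal is correct and follows essentially the same route as the paper: the paper obtains this corollary by combining the lift-via-Lemma~\ref{stable-exists} argument from the preceding corollary (solve on the typed relaxation, then use Steps 1 and 2 to realise the optimal type-profile without introducing blocking pairs beyond the cross-type ones already counted) with the SRTI adjustments, the dropped size constraint, and the EXACT BP reformulation from Section~\ref{sec:typed-minbp}. Your re-inspection of the case analysis to see which cases rely on the stability hypothesis is exactly the check the paper leaves implicit.
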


\section{Exceptions in preference lists}
\label{sec:exceptions}

We have argued for the existence of typed instances, where $k \ll n$, based on the premise that agents' preferences are formed based on a small collection of candidates' attributes. 
In practice, it seems likely that an agent might have access to additional information about some small subset of the candidates, either through personal acquaintance or some third-party connection; we say that an agent considers such candidates to be \emph{exceptional}. This additional information may alter the agent's opinion of candidates relative to that derived from the attributes alone, and so affect where these candidates are placed in his/her preference ordering.

In this section we consider a generalisation of typed instances in which each agent may find some small collection of other agents to be exceptional and ranks them without regard to their types. Note that if only a small number of the agents in our instance consider one or more candidates to be exceptional, we can capture this information in a typed instance: each agent with exceptions in their preference list can be assigned their own type.

We say that an instance $I$ of a stable matching problem is a \emph{(c,Any)-exception-typed} instance, for a given constant $c$, if $I$ is a typed instance in which each agent finds at most $c$ number of the candidates exceptional and may rank them anywhere in his/her preference list. 
Two special cases are \emph{(c,Top)-exception-typed} and \emph{(c,Bottom)-exception-typed} instances where the exceptions are promoted to the top, or demoted to the bottom, of the preference lists, respectively. We refer to the standard problems with input of this form as \textsc{(c,Any)-Exception Typed Max SMTI} etc.

We show that if each agent finds at most one candidate exceptional, whom s/he promotes to the top of his/her list, then \textsc{Max SMTI} belongs to \cplxty{FPT}. In contrast, if we allow for two (or more) exceptions, and exceptional candidates can appear anywhere in the preference lists, then \textsc{Max SMTI} remains \cplxty{NP}-hard even when there are only a constant number of types.

\subsection{An FPT algorithm for \textsc{(1,Top)-Exception Typed Max SMTI}}\label{sec:1-top-exceptions}

We consider a generalisation of typed instances in which (i) every agent considers at most one other agent to be exceptional, and (ii) every agent would strictly prefer to be assigned to the agent they consider to be exceptional than to any other agent.
For each agent $a$ let $\ext(a)$ denote the exceptional candidate from $a$'s point of view; $\ext(a)=\varnothing$ if $a$ does not find any candidate exceptional. Then, $I$ is a \emph{(1,Top)-exception-typed} instance of SMTI if, given every two agents $x$ and $y$ of the same type:
\begin{enumerate}
\item $x$ and $y$ have identical preference lists when restricted to $\agentset \setminus \{\ext(x),\ext(y)\}$, and
\item all other agents who do not find either $x$ or $y$ exceptional are indifferent between $x$ and $y$.
\end{enumerate}

Without loss of generality we can assume that there is no pair of agents who each consider the other to be exceptional in a (1,Top)-exception-typed instance of SMTI.  If there are such pairs, they must be assigned to each other in any stable matching; so we can remove all such pairs to reduce to an instance that satisfies this assumption.

A (1,Top)-exception-typed instance of SMTI is given as input by, in addition to the specifications needed for a typed instance (see Section~\ref{sec:def-basic-model}), providing for each agent his or her exceptional candidate (if s/he has one).

Let $I$ be a (1,Top)-exception-typed instance of SMTI, and let $M$ be a matching in $I$.  As in Section \ref{sec:typed-maxsrti}, we may assume without loss of generality that every agent is matched, by creating sufficiently many dummy agents of type $k+1$ which are inserted at the end of each man's and woman's (possibly incomplete) preference list. In order to obtain an analogue of the stability criterion given in Lemma \ref{stability-test-srti} in this setting, we need some more notation.  

Recall that we write $j \tie_i \ell$ if agents of type $i$ are indifferent between types $j$ and $\ell$. It is straightforward to see that $\tie_i$ defines an equivalence relation on $[k]$ for each $i$.  Given $j \in [k]$, we write $\cls_i(j)$ for the equivalence class under $\tie_i$ which contains $j$.  For each equivalence class $J$ under $\tie_i$, we say that the agent $x$ of type $i$ has \emph{subtype} $i[J]$ if:
\begin{enumerate}
\item some agent $y$, with $\type(y) \in J$, considers $x$ exceptional, and
\item there is no agent $z$, such that $\type(z) \pref_i j$ for $j \in J$, who considers $x$ exceptional.
\end{enumerate}
Thus $\subtype(x) = i[J]$ if the most desirable agents who consider $x$ exceptional have types from $J$.  If an agent $x$ of type $i$ is not considered exceptional by any agent, we say that $x$ has subtype $i[\{k+1\}]$. We also introduce a second dummy type $0$, which is inserted at the head of each type's preference list and corresponds to exceptional candidates. We write $N_{i[J]}$ for the set of agents of subtype $i[J]$.

Observe that the sets $N_{i[J]}$ can be computed in time $\mathcal{O}(n)$: for each agent $x$, $\subtype(x)$ can be computed in time $\mathcal{O}(n)$ with suitable data structures.

We will need a variation on the function $\worst_M$, which we call $\exworst_M$.  For any non-empty set $N_{i[J]}$, $\exworst_M(i[J])$ is the type of the least desirable partner received by an agent of subtype $i[J]$ who is not matched with an agent they find exceptional; if the least desirable partners assigned to agents of subtype $i[J]$ belong to two or more different types between which agents of type $i$ are indifferent, we define $\exworst_M(i[J])$ to be the lexicographically first such type.  If every agent of subtype $i[J]$ is matched with a partner they find exceptional, we set $\exworst_M(i[J]) = 0$.  Therefore $\worst_M(i)$, as defined in Section \ref{sec:typed-maxsmti}, is the least desirable type out of $\{\exworst_M(i[J]): N_{i[J]} \neq \emptyset\}$.

We say that a matching $M$ in an instance $I$ of (1,Top)-exception-typed SMTI realises the function $\exworst$, mapping nonempty subtypes $i[J]$ to values in $\{0,1,\ldots,k+1\}$, if $\exworst_M(i[J]) \weaklypref_i \exworst(i[J])$ whenever $N_{i[J]} \neq \emptyset$.  We can now characterise stability in a (1,Top)-exception-typed instance.

\begin{lemma}\label{lma:exception-stab-test}
Let $I$ be a (1,Top)-exception-typed instance of SMTI.  Then a matching $M$ in $I$ is stable if and only if there is no pair $(i,j) \in [k]^{(2)}$ such that
\begin{enumerate}
\item  $j \pref_i \worst_M(i)$ and $i \pref_j \worst_M(j)$, or
\item $i \pref_j \exworst_M(j[\cls_j(i)])$.
\end{enumerate}
\end{lemma}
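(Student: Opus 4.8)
The proof proceeds by the standard two-direction argument, mirroring the structure of the proof of Lemma~\ref{stability-test-srti}, but accounting for the two ways a blocking pair can arise once exceptions are present: either (a) as a genuine ``type-on-type'' blocking pair of the kind that appears already in the exception-free model, or (b) a blocking pair $(x,y)$ where $y$ considers $x$ exceptional and so places $x$ at the top of $y$'s list regardless of $x$'s type. The two conditions in the statement are designed precisely to rule out these two cases, so the argument is essentially a careful bookkeeping exercise on which data $\worst_M$ and $\exworst_M$ record.

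\textbf{($\Leftarrow$) Soundness.} Suppose $M$ is \emph{not} stable, so some $(x,y)\notin M$ with $x$ of type $i$, $y$ of type $j$ forms a blocking pair: $y\pref_x M(x)$ and $x\pref_y M(y)$. I would split into cases according to whether $x$ finds $y$ exceptional and whether $y$ finds $x$ exceptional. If neither finds the other exceptional, then $y\pref_x M(x)$ says type $j$ is strictly preferred by $i$ to $\type(M(x))\weaklypref_i \worst_M(i)$, so $j\pref_i\worst_M(i)$, and symmetrically $i\pref_j\worst_M(j)$; this is case~(1). If $y$ finds $x$ exceptional (so $\subtype(x)=i[J]$ for some $J$ with $j\in J$, i.e.\ $J=\cls_j(i)$), then since the instance is (1,Top), $y$'s top choice is $x$, and $x\pref_y M(y)$ forces $M(y)$ to be a non-exceptional partner of $y$; hence $\type(M(y))\weaklypref_j \exworst_M(j[\cls_j(i)])$, while $y\pref_x M(x)$ again gives (via $x\pref_y M(y)$ and $j\in\cls_j(i)$) that $i\pref_j \exworst_M(j[\cls_j(i)])$, which is case~(2). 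The remaining case, where $x$ finds $y$ exceptional, cannot yield a blocking pair: $y$ would be at the top of $x$'s list, but by the assumption (stated just before the lemma) that no two agents mutually consider each other exceptional, $y$ does not find $x$ exceptional, and by uniqueness of the exception, $x=\ext(y)$ would be impossible; more carefully, if $x$ finds $y$ exceptional then $M(x)=y$ in any case where $(x,y)$ could block — I should check that $y\pref_x M(x)$ combined with $y$ being $x$'s top choice and the structure of stable matchings handles this. I'd likely argue directly: if $x=\ext^{-1}$ situation arises it reduces to one of the two cases above by symmetry in the roles of $x$ and $y$ (swap $i\leftrightarrow j$).

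\textbf{($\Rightarrow$) Completeness.} Conversely, assume some pair $(i,j)\in[k]^{(2)}$ witnesses a violation, and produce a blocking pair. If condition~(1) fails, pick an agent $x$ of type $i$ matched to a partner of type $\worst_M(i)$ and an agent $y$ of type $j$ matched to a partner of type $\worst_M(j)$; since $j\pref_i\worst_M(i)$ and $i\pref_j\worst_M(j)$, and $x\neq \ext(y)$, $y\neq\ext(x)$ (or if they are, we already have a blocking pair more directly), $(x,y)$ blocks. If condition~(2) fails, so $i\pref_j\exworst_M(j[\cls_j(i)])$, pick the agent $y$ of type $j$ attaining $\exworst_M(j[\cls_j(i)])$: by definition this $y$ is not matched to an exceptional partner, so $M(y)$ has type $\exworst_M(j[\cls_j(i)])\prec_j i$. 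By definition of the subtype $j[\cls_j(i)]$, there exists an agent $x$ of some type in $\cls_j(i)$ who considers $y$ exceptional; then $y$ is at the top of $x$'s preference list, so $y\pref_x M(x)$ (using that $M(x)\neq y$, which holds since $x$ is not matched to $y$ — if it were, $y$ would not contribute to $\exworst_M$ via a non-exceptional match, contradiction), and $x$'s type lies in $\cls_j(i)$ so $x\succeq_j i\succ_j \type(M(y))$, giving $x\pref_j M(y)$. Hence $(x,y)$ blocks $M$.

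\textbf{Main obstacle.} The delicate point is the interaction between the \emph{subtype} machinery and the $\exworst$ function: I must be careful that when I extract a witness agent from a subtype $j[\cls_j(i)]$, the existential clauses in the definition of subtype (``some agent of a type in $\cls_j(i)$ considers $y$ exceptional, and no strictly better type does'') combine correctly with $M(y)$ being a non-exceptional partner, and that the equivalence-class bookkeeping ($i\pref_j \exworst_M(j[\cls_j(i)])$ versus $i\pref_j\worst_M(j)$) does not accidentally double-count or miss a blocking pair of the form where one agent's match is itself exceptional. In particular I need to verify that condition~(2) with roles of $i$ and $j$ \emph{not} symmetric is genuinely enough — i.e.\ there is no further condition of the form ``$j\pref_i\exworst_M(i[\cls_i(j)])$'' that must be added — because in a (1,Top) instance a blocking pair $(x,y)$ in which the \emph{exceptional} direction matters is inherently asymmetric (only the agent who is found exceptional sits atop the other's list), so exactly one of the two agents plays the ``exceptional'' role, and the ordered pair $(i,j)$ in condition~(2) captures this. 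Pinning down this asymmetry cleanly, and confirming the mutual-exception reduction genuinely removes the only remaining case, is where the care is needed.
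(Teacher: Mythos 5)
Your overall architecture (two directions, with a case split on whether the blocking pair involves an exception) matches the paper's, and your reverse direction is essentially correct and in fact more detailed than the paper's one-paragraph sketch. The genuine problem is in the forward direction, in precisely the case the lemma exists to handle. You attribute condition (2) to the case ``$y$ finds $x$ exceptional'', but condition (2), $i \pref_j \exworst_M(j[\cls_j(i)])$, is a statement about the subtype of the type-$j$ agent $y$, and $\subtype(y)$ is determined by who considers \emph{$y$} exceptional; so the case that produces it is ``$x$ finds $y$ exceptional''. As written your derivation does not go through: from ``$y$ finds $x$ exceptional'' you learn something about $\subtype(x)$ (and even there the class is misidentified --- it would be a class under $\tie_i$ such as $\cls_i(j)$, not $\cls_j(i)$, and only if $y$ is among the most desirable agents finding $x$ exceptional), but nothing whatsoever about $\subtype(y)$ or $\exworst_M(j[\cdot])$. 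Moreover, in that case $x \pref_y M(y)$ only says $M(y) \neq \ext(y)$; it yields no type inequality $i \pref_j \type(M(y))$, because $y$ ranks $x$ as an exception rather than by type, so the chain ending in $i \pref_j \exworst_M(j[\cls_j(i)])$ collapses. Your subsequent assertion that the remaining case (``$x$ finds $y$ exceptional'') ``cannot yield a blocking pair'' is exactly backwards --- that is the case condition (2) is designed to catch, and your hedged fallback (``reduce by symmetry'') cannot rescue a case analysis whose two halves are swapped.

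Even after correcting the swap, there is a second gap you would hit. Knowing that \emph{some} agent $x$ of type $i$ considers $y$ exceptional only tells you $\subtype(y) = j[J]$ for some class $J$ that type $j$ likes at least as well as $\cls_j(i)$; it does not force $J = \cls_j(i)$, so $y$ need not contribute to $\exworst_M(j[\cls_j(i)])$ at all, and condition (2) for the ordered pair $(i,j)$ need not be violated. The paper closes this by choosing $x$, among all agents who prefer $y$ to their current partner, to be one of the (equally) most desirable from $y$'s point of view; this rules out any better-typed agent considering $y$ exceptional, forces $y \in N_{j[\cls_j(i)]}$ for $i = \type(x)$, and makes condition (2) fire for that pair. (Alternatively, one can observe that condition (2) is violated for the pair $(\ell, j)$ with $\ell \in J$.) Your proposal contains neither step, and your closing paragraph shows you had not yet resolved which agent plays the ``exceptional'' role, which is the crux of the lemma.
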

\begin{proof}
Suppose first that $M$ is not stable. In this case there is a blocking pair $(x,y)$ where $x$ is of type $i$ and $y$ is of type $j$. We show that at least one of the two aforementioned conditions must hold. If neither $x$ nor $y$ finds the other exceptional, then it must be that they prefer each other's type to the type of their assigned partner. This implies that $j \pref_i \type(M(x)) \weaklypref_i \worst_M(i)$ and $i \pref_j \type(M(y)) \weaklypref_j \worst_M(j)$ so condition (1) holds. If one of $x$ and $y$ finds the other exceptional, we may assume without loss of generality that $x$ finds $y$ exceptional. We may further assume, without loss of generality, that $x$ is the (equally) most desirable agent (from the point of view of $y$) who prefers $y$ to his/her current partner.  
This means there is no agent of type $\ell$, with $\ell \pref_j \type(x) = i$, who considers $y$ to be exceptional, 
implying that $y \in N_{j[\cls_j(i)]}$.
Hence, by the definition of a blocking pair, we see that $i = \type(x) \pref_j \type(M(y)) \weaklypref_j \exworst_M(j[\cls_j(i)])$, and condition (2) holds.

Conversely, it is straightforward to see that if we have any pair $(i,j)$ satisfying condition (1) or (2) then there will be a blocking pair and hence $M$ will not be stable. If condition (1) holds, then $\worst_M(i)$ and $\worst_M(j)$ must both be greater than $0$, and the condition implies that there is a pair $(x,y)$ of type $(i,j)$ who prefer each other to their partners.  If condition (2) holds, then there exists an agent $y \in N_{j[\cls_j(i)]}$ who is considered exceptional by some agent $x$ 
whose type is in $\cls_j(i)$, and $y$ prefers any agent whose type belongs to $\cls_j(i)$
to his/her current partner, so $(x,y)$ forms a blocking pair.
\end{proof}

We will say that a function $\exworst$ is \emph{$I$-exception-stable} for a (1,Top)-exception-typed instance $I$ of SMTI if there is no pair $(i,j) \in [k]^{(2)}$ such that either $j \pref_i \worst(i)$ and $i \pref_j \worst(j)$ or $i \pref_j \exworst(j[\cls_j(i)])$. 
Observe that Lemma \ref{lma:exception-stab-test} gives us the following straightforward corollary.

\begin{corollary}\label{cor:best-exception}
Let $I$ be a (1,Top)-exception-typed instance of SMTI.  Suppose that $M$ is a matching in $I$ which realises the $I$-exception-stable function $\exworst$, and that $M$ contains a pair $(x,y)$ where $x$ considers $y$ to be exceptional and $y$ is of subtype $i[J]$.  Then $\type(x) \in J$ and $\exworst(i[J]) \in J$.
\end{corollary}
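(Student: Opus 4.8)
The plan is to establish both claims from one short chain of inequalities in type~$i$'s preference order, without needing to argue separately that $M$ is stable. Write $i$ for $\type(y)$, so that $y\in N_{i[J]}$ and $J$ is the $\tie_i$-class of the $\pref_i$-best types whose agents consider $y$ exceptional; note $J\subseteq[k]$ and $J\neq\{k+1\}$, since $x$ witnesses that $y$ is considered exceptional by someone. The first, and only slightly delicate, step is to observe that $y$ contributes to the quantity $\exworst_M(i[J])$: although the definition of $\exworst_M$ discards agents of subtype $i[J]$ that are matched to an agent \emph{they} find exceptional, here it is $x$ who finds $y$ exceptional, and by the standing assumption that no two agents each find the other exceptional, $y$ does not find $x$ exceptional, so $M(y)=x\neq\ext(y)$. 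Since $\exworst_M(i[J])$ is the \emph{least} desirable partner type received by such an agent, and $M$ realises $\exworst$, this gives $\type(x)=\type(M(y))\weaklypref_i\exworst_M(i[J])\weaklypref_i\exworst(i[J])$; I will call the resulting inequality $\type(x)\weaklypref_i\exworst(i[J])$ the \emph{key bound}.

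Next I would pin down $\exworst(i[J])$ from the other side using $I$-exception-stability: for each $j\in J$ we have $\cls_i(j)=J$, so applying the second clause in the definition of an $I$-exception-stable function to the pair $(j,i)$ (as in the second blocking condition of Lemma~\ref{lma:exception-stab-test}, and with $N_{i[J]}\neq\varnothing$ since $y\in N_{i[J]}$) tells us that $j\pref_i\exworst(i[J])$ fails, i.e.\ $\exworst(i[J])\weaklypref_i j$ for all $j\in J$. To prove $\type(x)\in J$, I would combine this with the key bound to get $\type(x)\weaklypref_i j$ for every $j\in J$, and then invoke the definition of the subtype of $y$: since $x$ considers $y$ exceptional, $\type(x)$ cannot be a type strictly $\pref_i$-above $J$, so $\type(x)\not\pref_i j$ for $j\in J$. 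The two facts $\type(x)\weaklypref_i j$ and $\type(x)\not\pref_i j$ force $\type(x)\tie_i j$, hence $\type(x)\in J$ because $J$ is a $\tie_i$-class. For the second claim, the key bound gives $\type(x)\weaklypref_i\exworst(i[J])$ while the stability bound, now applied with $j=\type(x)\in J$, gives $\exworst(i[J])\weaklypref_i\type(x)$; so $\exworst(i[J])\tie_i\type(x)$, and therefore $\exworst(i[J])\in J$ as well.

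I expect the main obstacle to be purely bookkeeping rather than conceptual: getting the direction of every $\weaklypref_i$ right (remembering that $\worst$-type quantities are \emph{lower} bounds, in the preference order, on the partners actually assigned) and, in particular, recognising in the first step that $y$ is \emph{not} excluded from the definition of $\exworst_M(i[J])$. The various degenerate values --- $\exworst(i[J])$ equal to the dummy type $0$ or $k+1$ --- do not cause trouble: the key bound, together with the fact that $(x,y)\in M$ (so $y$ finds $x$, and hence type $i$ finds $\type(x)$, acceptable), already forces $\exworst(i[J])$ into the same $\tie_i$-class as $\type(x)$.
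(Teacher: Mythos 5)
Your proof is correct and follows essentially the same route as the paper's: both use the subtype definition to bound $\type(x)$ by $J$, realisation to get $\type(x)=\type(M(y))\weaklypref_i\exworst(i[J])$, and condition (2) of $I$-exception-stability to force all these comparisons to be ties, placing both $\type(x)$ and $\exworst(i[J])$ in $J$. Your explicit observation that $y$ does contribute to $\exworst_M(i[J])$ (because the no-mutual-exception assumption means $y$ does not find $x$ exceptional) is a detail the paper leaves implicit, and is a worthwhile addition.
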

\begin{proof}
Since $\subtype(y) = i[J]$, there exists $j \in J$ such that some agent $z$ of type $j$ considers $y$ to be exceptional.  By definition of subtype, the fact that $x$ considers $y$ to be exceptional means that $j \succeq_i \type(x)$.  Thus, as $M$ realises $\exworst$, we see that 
$$j \weaklypref_i \type(x) = \type(M(y)) \weaklypref_i \exworst(i[J]).$$
If any of the preferences in this equation is in fact strict, we would violate condition (2) of Lemma \ref{lma:exception-stab-test}, contradicting our assumption that $\exworst$ is $I$-exception stable.  Therefore we must have $j  \tie_i \type(x) \tie_i \exworst(i[J])$, implying that 
both $\type(x)$ and  $\exworst(i[J])$ are in $\cls_i(j) = J$, as required.
\end{proof}

Given any $I$-exception-stable function $\exworst$, we write $\max(\exworst)$ for the maximum cardinality of any matching in $I$ that realises $\exworst$.  We have an analogous result to Corollary \ref{lma:max-I-stable} in this setting.

\begin{lemma}\label{lma:max-I-exstable}
Let $I$ be a (1,Top)-exception-typed instance of SMTI.  Then the cardinality of the largest stable matching in $I$ is equal to 
$$\max\{\max(\exworst): \exworst \text{ is $I$-exception-stable}\}.$$
\end{lemma}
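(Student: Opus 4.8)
The plan is to mirror the proof of Corollary~\ref{lma:max-I-stable}, with the stability characterisation of Lemma~\ref{stability-test-srti} replaced by that of Lemma~\ref{lma:exception-stab-test}. Throughout, for a function $\exworst$ defined on the nonempty subtypes, write $\worst(i)$ for the \emph{induced} function, namely the least desirable type among $\{\exworst(i[J]) : N_{i[J]} \neq \emptyset\}$ (breaking ties lexicographically); this is exactly the relationship between $\worst_M$ and $\exworst_M$ recorded in Section~\ref{sec:1-top-exceptions}, so when $\exworst = \exworst_M$ we recover $\worst = \worst_M$. I would prove the two inequalities separately.

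For the ``$\leq$'' direction, let $M^*$ be a maximum cardinality stable matching (one exists, since every instance of SMTI admits a weakly stable matching). I claim $\exworst_{M^*}$ is $I$-exception-stable. Indeed, since the function induced by $\exworst_{M^*}$ is precisely $\worst_{M^*}$, the conditions defining ``$I$-exception-stable'' for $\exworst_{M^*}$ coincide verbatim with conditions (1)--(2) of Lemma~\ref{lma:exception-stab-test} for $M^*$; as $M^*$ is stable, no pair $(i,j)$ meeting those conditions exists, so $\exworst_{M^*}$ is $I$-exception-stable. Moreover $M^*$ trivially realises $\exworst_{M^*}$, by reflexivity of $\weaklypref_i$ at each nonempty subtype. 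Hence $|M^*| \leq \max(\exworst_{M^*}) \leq \max\{\max(\exworst) : \exworst \text{ is $I$-exception-stable}\}$.

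For the ``$\geq$'' direction, let $\exworst$ be any $I$-exception-stable function and let $M$ be a matching realising $\exworst$ (if no such matching exists, $\max(\exworst)$ is $0$ and there is nothing to prove). I claim $M$ is stable. Suppose not; by Lemma~\ref{lma:exception-stab-test} there is a pair $(i,j)$ satisfying condition (1) or (2). The key monotonicity observation is that, since $M$ realises $\exworst$, we have $\exworst_M(i[J]) \weaklypref_i \exworst(i[J])$ for every nonempty $N_{i[J]}$, and taking the least desirable element over $J$ on each side yields $\worst_M(i) \weaklypref_i \worst(i)$. So if condition (1) holds we obtain $j \pref_i \worst_M(i) \weaklypref_i \worst(i)$ and $i \pref_j \worst_M(j) \weaklypref_j \worst(j)$, hence $j \pref_i \worst(i)$ and $i \pref_j \worst(j)$ by transitivity; if condition (2) holds we obtain $i \pref_j \exworst_M(j[\cls_j(i)]) \weaklypref_j \exworst(j[\cls_j(i)])$, hence $i \pref_j \exworst(j[\cls_j(i)])$. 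Either way this contradicts $\exworst$ being $I$-exception-stable. Therefore $M$ is stable, so $|M|$ is at most the cardinality of the largest stable matching in $I$; taking $|M| = \max(\exworst)$ completes this direction.

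I expect the only delicate point to be the monotonicity bookkeeping in the ``$\geq$'' direction: verifying that a pointwise $\weaklypref_i$-domination over the nonempty subtypes $i[J]$ passes through the ``least desirable type'' operation to give $\worst_M(i) \weaklypref_i \worst(i)$, and then composing a strict with a weak preference in the right order, while keeping track of the two dummy types (type $0$ for exceptional partners, type $k+1$ for unmatched agents) so that the boundary cases $\worst_M(i) = k+1$ and $\exworst_M(i[J]) = 0$ cause no trouble. Everything else is a direct transcription of the argument behind Corollary~\ref{lma:max-I-stable}.
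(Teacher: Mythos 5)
Your proof is correct and follows exactly the route the paper intends: the paper states this lemma without proof as the analogue of Corollary~\ref{lma:max-I-stable}, and your two directions (reading off $I$-exception-stability of $\exworst_{M^*}$ from Lemma~\ref{lma:exception-stab-test}, and the monotonicity argument showing any matching realising an $I$-exception-stable $\exworst$ is stable) are precisely the intended verification. The monotonicity step $\worst_M(i)\weaklypref_i\worst(i)$ that you flag as delicate does go through, since the least desirable element of a pointwise-dominated family is itself dominated.
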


We now argue that we can compute $\max(\exworst)$ for any $I$-exception-stable function $\exworst$ in polynomial time.

\begin{lemma}\label{lma:dec-realising-exworst}
Let $I$ be a (1,Top)-exception-typed instance of SMTI, and fix an $I$-exception-stable function $\exworst$.
We can compute  $\max(\exworst)$, and generate a stable matching of size $\max(\exworst)$, in time $\mathcal{O}(n^{5/2}\log n)$.
\end{lemma}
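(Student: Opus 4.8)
The plan is to reduce the computation of $\max(\exworst)$ to a single maximum bipartite matching computation on the (real) agents of $I$. First I would build a bipartite graph $G$ with one vertex per man and one per woman, and put an edge between a man $m$ and a woman $w$ precisely when the pair $(m,w)$ could occur in some matching realising $\exworst$. Concretely, writing $\subtype(m) = i[J_m]$ and $\subtype(w) = j[J_w]$, I would declare $mw \in E(G)$ iff $m$ and $w$ are mutually acceptable and both of the following hold: (a) $m$ considers $w$ exceptional, or $\type(w) \weaklypref_i \exworst(i[J_m])$; and (b) $w$ considers $m$ exceptional, or $\type(m) \weaklypref_j \exworst(j[J_w])$. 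The key claim — which I expect to be the main work — is that this edge condition is genuinely \emph{local}: a matching $M$ realises $\exworst$ if and only if every matched pair of $M$ is an edge of $G$ and $M$ saturates the set $S$ of agents $x$ with $\exworst(\subtype(x)) \le k$ (precisely those whom realising $\exworst$ forbids from being left unmatched). The forward implication is immediate from the definition of ``realises''; for the converse one checks that, for every nonempty subtype $i[J]$, each agent of that subtype either is matched to a partner it considers exceptional (and hence does not contribute to $\exworst_M(i[J])$) or, via an edge of $G$, to a partner of type $\weaklypref_i \exworst(i[J])$, so that $\exworst_M(i[J]) \weaklypref_i \exworst(i[J])$.

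Given this, I would note that there is a matching realising $\exworst$ iff $G$ has a matching saturating $S$, in which case $\max(\exworst)$ equals the size of a maximum matching of $G$: a short alternating-path rotation argument shows that whenever some matching of $G$ saturates $S$, some \emph{maximum} matching of $G$ does too. So it remains to decide feasibility and, if feasible, compute a maximum matching of $G$ saturating $S$; by Lemma~\ref{lma:exception-stab-test} any such matching, since it realises the $I$-exception-stable function $\exworst$, is automatically stable (this is also the inequality underlying Lemma~\ref{lma:max-I-exstable}). Both tasks reduce to a Hopcroft--Karp-style augmenting-path computation — or, in the spirit of the network of Lemma~\ref{lma:dec-flows}, to a unit-capacity maximum-flow instance with lower bounds of $1$ on the edges meeting $S$ — on a graph with $\mathcal{O}(n)$ vertices and $\mathcal{O}(n^2)$ edges, hence running in time $\mathcal{O}(|E|\sqrt{|V|}) = \mathcal{O}(n^{5/2})$. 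Adding the $\mathcal{O}(n)$ time per agent needed to compute the subtypes (so $\mathcal{O}(n^2)$ to build $G$), together with an $\mathcal{O}(\log n)$ overhead for the rank comparisons used in defining $E(G)$ and for arithmetic on quantities of size $\mathcal{O}(n)$, yields the claimed $\mathcal{O}(n^{5/2}\log n)$ bound.

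The main obstacle, as indicated, is getting $E(G)$ exactly right and proving locality; the delicate points are the pairs in which one agent considers the other exceptional, and the treatment of the dummy types. For the former, Corollary~\ref{cor:best-exception} is needed: if $m$ considers $w$ exceptional and $w$ has subtype $j[J_w]$, then $\type(m) \in J_w$ and $\exworst(j[J_w]) \in J_w$, so constraint (b) at the non-exceptional end $w$ is met automatically, because $\type(m)$ is tied, from $w$'s point of view, with $\exworst(j[J_w])$. For the latter, an agent $x$ with $\exworst(\subtype(x)) = 0$ must be matched to its exceptional candidate, and the edge rule enforces this since no real type is weakly preferred to the head dummy type $0$; an agent with $\exworst(\subtype(x)) = k+1$ may be left unmatched, which is exactly why such agents are excluded from $S$. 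Once the reduction and the locality claim are established, the matching/flow computation and the running-time accounting are routine.
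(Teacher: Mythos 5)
Your reduction is correct, and its core --- a graph on the real agents whose edges are exactly the pairs permitted by $\exworst$, with the exceptional pairs controlled via Corollary~\ref{cor:best-exception} --- is the same construction the paper uses. Where you genuinely diverge is in extracting the cardinality: the paper fixes a target $c$, adds $n-2c$ mutually adjacent dummy vertices joined to every agent that $\exworst$ permits to leave unmatched, tests for a perfect matching, and binary-searches over $c$ (whence its $\log n$ factor); you instead make a single call to ``maximum matching saturating the set $S$ of agents forbidden from being unmatched,'' justified by the alternating-path argument that if any matching of $G$ saturates $S$ then some maximum matching does. Both work. Your variant dispenses with the binary search, at the price of two ingredients the paper avoids: the saturation lemma itself, and a subroutine for maximum matching with required vertices --- note that the naive implementation (compute a maximum matching, then repair it one $S$-vertex at a time along even alternating paths) costs $\mathcal{O}(n^3)$, so to stay within $\mathcal{O}(n^{5/2})$ you really do need the unit-capacity flow formulation you mention, with the lower bounds placed on the source/sink arcs incident to $S$-vertices rather than, as you wrote, ``on the edges meeting $S$.'' One further small point: your edge rule for a pair in which $m$ (of type $i$) considers $w$ (of subtype $j[J_w]$) exceptional is formally weaker than the paper's, since you do not require $\type(m)\in J_w$ or $\exworst(j[J_w])\in J_w$; the two are nevertheless equivalent, because $I$-exception-stability and the definition of subtype give $\exworst(j[J_w]) \weaklypref_j j' \weaklypref_j \type(m)$ for $j'\in J_w$, so your condition $\type(m)\weaklypref_j\exworst(j[J_w])$ collapses the whole chain into ties --- exactly the calculation inside Corollary~\ref{cor:best-exception}. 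With those details made explicit, your argument is a valid (and arguably cleaner) alternative to the paper's binary-search-plus-perfect-matching scheme.
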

\begin{proof}[Proof sketch.] 
Using a binary search strategy, we can determine the maximum size of a matching realising $\exworst$ by solving $\mathcal{O}(\log n)$ instances of the decision problem ``Is $\max(\exworst)$ at least $c$?", where $c \in \{1,\ldots,\lfloor n/2 \rfloor\}$. We show that to determine whether $\max(\exworst) \geq c$ it is enough to decide whether an undirected graph $G=(V,E)$ with $\mathcal{O}(n)$ vertices, constructed from $I$ in time $\mathcal{O}(n^2)$, admits a perfect matching. Moreover, if $G$ admits a perfect matching $E'$, $E'$ corresponds to a stable matching in $I$ of size at least $c$. The graph $G=(V,E)$ is constructed as follows. Suppose that there are in total $n$ agents. The vertex-set $V$ contains one vertex corresponding to each agent in $I$, together with $n-2c$ dummy vertices.  The edge-set $E$ contains an edge between the agents $x$, of type $a[C]$, and $y$, of type $b[D]$, if and only if $x$ and $y$ find each other mutually acceptable and either
\begin{enumerate}
\item neither $x$ nor $y$ considers the other to be exceptional, $b \weaklypref_a \exworst(a[C])$, and $a \weaklypref_b \exworst(b[D])$, or
\item $a \in D$, $x$ considers $y$ to be exceptional, and $\exworst(b[D]) \in D$, or
\item $b \in C$, $y$ considers $x$ to be exceptional, and $\exworst(a[C]) \in C$.
\end{enumerate}
We additionally have an edge between each pair of dummy vertices, and between each dummy vertex and every agent $x$ such that 
$\exworst(i[J]) = k+1$ where $i[J]=\subtype(x)$; that is, if $\exworst$ allows for $x$ to be unmatched.
Deciding whether $G$ admits a perfect matching can be done in time 
$\mathcal{O}(\sqrt{|V|}|E|)$, which coupled with the fact that $|V|=\mathcal{O}(n)$ gives us 
$\mathcal{O}(n^{5/2})$. Therefore, we conclude that we can compute $\max(\exworst)$, and generate a stable matching of size $\max(\exworst)$, in time $\mathcal{O}(n^{5/2}\log n)$. 
\end{proof}

\begin{corollary}
\textsc{(1,Top)-Exception Typed Max SMTI} can be solved in time $\mathcal{O}\left(k^{k^2}(k^3 + n^{5/2}\log n)\right)$.
\end{corollary}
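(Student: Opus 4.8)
The plan is to follow the same template as the proof of Corollary~\ref{cor:typed-maxsrti-fpt}: enumerate all candidate ``guesses'' for the relevant function, discard those that are not stable, solve a tractable subproblem for each surviving guess, and take the best value obtained. Here the role of $\worst$ is played by $\exworst$, and the role of the ILP/flow subroutine is played by the perfect-matching subroutine of Lemma~\ref{lma:dec-realising-exworst}. First I would run the $\mathcal{O}(n)$-time preprocessing that computes $\subtype(x)$ for every agent $x$, and hence the sets $N_{i[J]}$; let $\mathcal{S}$ be the collection of nonempty subtypes. Since $\tie_i$ partitions $[k]$ into at most $k$ classes and there is in addition the distinguished class $\{k+1\}$, each type contributes at most $k+1$ subtypes, so $|\mathcal{S}| = \mathcal{O}(k^2)$. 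A function $\exworst$ assigns to each subtype in $\mathcal{S}$ one of the $k+2$ values in $\{0,1,\ldots,k+1\}$, so there are at most $(k+2)^{\mathcal{O}(k^2)} = k^{\mathcal{O}(k^2)}$ candidate functions to consider (this is what is written, somewhat loosely, as $k^{k^2}$ in the statement).

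Next, for each candidate $\exworst$ I would test whether it is $I$-exception-stable. To do this I first recover, for every $i \in [k]$, the value $\worst(i)$, namely the least desirable type (with respect to $\succ_i$) among $\{\exworst(i[J]): N_{i[J]} \neq \emptyset\}$; this takes $\mathcal{O}(k^2)$ time over all $i$ with suitable data structures for comparing types. Then, iterating over all $(i,j) \in [k]^{(2)}$, I check directly whether condition~(1) ($j \pref_i \worst(i)$ and $i \pref_j \worst(j)$) or condition~(2) ($i \pref_j \exworst(j[\cls_j(i)])$) of the definition is violated; each check is $\mathcal{O}(1)$, so the whole test comfortably fits within the $\mathcal{O}(k^3)$ budget. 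If \emph{no} candidate turns out to admit a matching realising it (which in particular covers the case that no candidate is $I$-exception-stable) I report that $I$ has no stable matching; this is justified because any stable matching $M$ realises $\exworst_M$, which is $I$-exception-stable by Lemma~\ref{lma:exception-stab-test}, and conversely, as can be read off from Lemma~\ref{lma:exception-stab-test}, any matching realising an $I$-exception-stable function is itself stable.

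For every $I$-exception-stable candidate $\exworst$, I then invoke Lemma~\ref{lma:dec-realising-exworst} to compute $\max(\exworst)$, together with a realising stable matching, in time $\mathcal{O}(n^{5/2}\log n)$. By Lemma~\ref{lma:max-I-exstable}, the maximum of $\max(\exworst)$ over all $I$-exception-stable candidates equals the cardinality of a largest stable matching in $I$, so I output the matching attaining this maximum. Candidate functions that are $I$-exception-stable but realised by no matching simply fail the binary search in Lemma~\ref{lma:dec-realising-exworst} and contribute nothing, which is harmless since, whenever a stable matching $M^\ast$ exists, $\exworst_{M^\ast}$ is both $I$-exception-stable and realised by $M^\ast$. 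Summing the costs, the algorithm does $\mathcal{O}(n)$ preprocessing and then, for each of the $k^{\mathcal{O}(k^2)}$ candidates, $\mathcal{O}(k^3)$ work to test exception-stability and $\mathcal{O}(n^{5/2}\log n)$ work for the matching computation, giving the claimed running time $\mathcal{O}\!\left(k^{k^2}(k^3 + n^{5/2}\log n)\right)$ (the $\mathcal{O}(n)$ term being absorbed).

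All the substantive work has already been done in Lemmas~\ref{lma:exception-stab-test}, \ref{lma:max-I-exstable} and~\ref{lma:dec-realising-exworst}, so I do not expect a genuine obstacle here; the only points requiring care are the bookkeeping ones: correctly bounding $|\mathcal{S}|$ (and hence the number of candidate functions), making sure the exception-stability test is performed against the derived function $\worst$ and not merely against $\exworst$, and confirming that guesses which no matching realises neither corrupt the answer nor cause the algorithm to falsely declare that a stable matching exists.
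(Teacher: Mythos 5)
Your proposal is correct and follows essentially the same route as the paper: enumerate the $k^{\mathcal{O}(k^2)}$ candidate functions $\exworst$, test each for $I$-exception-stability in $\mathcal{O}(k^3)$ time, compute $\max(\exworst)$ for the survivors via Lemma~\ref{lma:dec-realising-exworst}, and return the best matching found, with correctness resting on Lemmas~\ref{lma:exception-stab-test} and~\ref{lma:max-I-exstable}. Your additional bookkeeping (counting nonempty subtypes, deriving $\worst$ from $\exworst$, and handling unrealisable guesses) only makes explicit details the paper leaves implicit.
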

\begin{proof} Suppose that $I$ is the input to our instance of \textsc{(1,Top)-Exception Typed Max SMTI}. We consider each possible function $\exworst: [k]\times [k] \rightarrow \{0,1,\ldots,k+1\}$ in turn; there are at most $(k+2)^{k^2}$ such functions.  We can determine in time $\mathcal{O}(k^3)$ whether $\exworst$ is $I$-exception-stable. For each $I$-exception-stable function $\exworst$, we compute $\max(\exworst)$ in time $\mathcal{O}(n^{5/2}\log n)$ by Lemma ~\ref{lma:dec-realising-exworst}. We then take the maximum value of $\max(\exworst)$ over all $I$-exception-stable functions $\exworst$ which, by Lemma \ref{lma:max-I-exstable}, is equal to the cardinality of the largest stable matching in $I$. The proof of Lemma ~\ref{lma:dec-realising-exworst} is constructive, that is, it also generates a stable matching of size $\max(\exworst)$.
\end{proof}

\subsection{\textsc{(2,Any)-Exception Typed Max SMTI} is \cplxty{NP}-hard for a constant number of types}
\label{sec:2-exception-hard}

In this section we investigate the effects of allowing each agent to consider more than one candidate to be exceptional. 
We show that, if we allow each agent to declare two candidates exceptional, and these two candidates can appear anywhere in the agent's preference list, then \textsc{Max SMTI} is NP-hard even under severe restrictions. In fact, we show that the special case \textsc{Com SMTI}, which involves deciding whether a given instance of SMTI admits a complete stable matching (i.e., a matching that matches all agents), is \cplxty{NP}-complete even if the number of types is bounded by a constant. 
Our proof is by reduction from the \cplxty{NP}-complete problem \textsc{Clique}, defined as follows: given an undirected graph $G = (V,E)$ and $r \in \mathds{N}$, does $G$ contain a clique on at least $r$ vertices?

\begin{theorem}\label{2-exception-com-smti-hard}
\textsc{(2,Any)-Exception Typed COM SMTI} is \cplxty{NP}-complete, even if only men have exceptions in their preference lists, preferences over types are strict, and there are three types each of men and women.
\end{theorem}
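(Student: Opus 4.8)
The plan is to reduce from \textsc{Clique}: given $(G,r)$ with $G=(V,E)$, I will construct a $(2,\text{Any})$-exception-typed SMTI instance $I$ using only three types of men and three types of women, with exceptional candidates allowed only on the men's side and with all preferences over types strict, so that $I$ admits a complete stable matching if and only if $G$ contains a clique on $r$ vertices. Membership of \textsc{Com SMTI} in \cplxty{NP} is immediate --- a guessed matching can be checked for completeness and for the absence of blocking pairs in polynomial time --- so the content of the theorem is entirely in the hardness direction.

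The design principle is to carry all of the graph structure in the exceptions and to use the type system only for ``bookkeeping''. Roughly, one man-type supplies a ``vertex-man'' for each $v\in V$ and another supplies an ``edge-man'' for each $e\in E$; the three woman-types, together with the remaining man-type, are allotted among the roles of absorbing a chosen set of vertices/edges as ``selected'', serving as the targets of exceptions, and padding, with several of these roles necessarily sharing a type. Because the instance is typed, agents of a common type are interchangeable and every agent is indifferent between them --- and this interchangeability is precisely what lets a selection be made freely: a woman recording ``vertex $v$ selected'' does not block with another vertex-man who would prefer her, since she is indifferent between him and the (same-type) vertex-man she already holds. Completeness of the matching then forces exactly $r$ vertex-men and exactly $\binom{r}{2}$ edge-men into ``selected'' positions, determining sets $S\subseteq V$ with $|S|=r$ and $T\subseteq E$ with $|T|=\binom{r}{2}$. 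The incidence of $G$ is imposed through the two exceptions of each edge-man: for $e=\{u,v\}$, the edge-man $b_e$ is given exactly two exceptions --- one woman tied to $u$, one tied to $v$ --- positioned so that if $b_e$ is in a selected position while the vertex-man of $u$ (respectively $v$) is not, then $b_e$ and the corresponding exceptional woman form a blocking pair. Consequently, in any complete stable matching every edge of $T$ lies inside $S$, and since $|T|=\binom{|S|}{2}$ the set $T$ is exactly the set of all pairs within $S$ --- so every pair in $S$ is an edge and $S$ is an $r$-clique. The converse is a direct verification: from an $r$-clique $K$, put the vertex-men of $K$ and the edge-men of $E(K)$ into selected positions, route everyone else and all exceptional/padding women to their default partners, and check stability.

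The real work --- and the main obstacle --- is making this fit into three types of each gender while keeping the stability analysis tight. With so few types available, the vertex, edge, selection, exception-target and padding roles must be overloaded onto shared types, and acceptability (which is a type-level property in a typed instance) together with the exceptions must be arranged so that unwanted interactions cannot occur: an edge-man must never be tempted by a woman meant for the vertex component, an exceptional woman must be matchable to a default partner exactly in the ``no''-instance routing, and the two selection thresholds $r$ and $\binom{r}{2}$ must stay decoupled despite the coarseness of the types. I expect the bulk of the proof to be the verification that the \emph{only} blocking pairs that can arise in a complete matching are those witnessing ``$T$ has an edge with an endpoint outside $S$'' (or a violation of the cardinality bounds), which amounts to a finite case analysis over the constantly many ordered pairs of types, carried out with the help of the blocking-pair characterisation for typed instances (Proposition~\ref{prop:bp-mod1}), adapted to the presence of exceptional candidates.
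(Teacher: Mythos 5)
Your high-level strategy is exactly the one the paper uses: reduce from \textsc{Clique}, let completeness together with carefully chosen type cardinalities force the selection of $r$ vertices and $\binom{r}{2}$ edges, and use the two exceptions of each edge-agent to manufacture a blocking pair whenever a selected edge has an unselected endpoint. The problem is that your write-up stops precisely where the proof begins. You explicitly defer ``the real work'' of fitting the gadget into three types per side and of verifying that no unintended blocking pairs arise, and that deferral is not a formality: the entire content of the theorem beyond ``\textsc{Com SMTI} is hard'' is that the reduction survives these severe restrictions. Moreover, the role assignment you do sketch has a structural defect. You make both vertices and edges into \emph{men} and then introduce, for each vertex $u$, a separate \emph{woman} ``tied to $u$'' as the target of exceptions. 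For $(b_e,\text{woman tied to }u)$ to block, that woman must prefer $b_e$ to her current partner exactly when $u$ is unselected; since women have no exceptions, this preference must be type-level, so her match must be routed differently according to whether the vertex-man of $u$ is selected --- but selection in your scheme is recorded by where the vertex-\emph{man} goes, and you never say how that information propagates to the woman tied to $u$. Once you repair this by letting the woman tied to $u$ \emph{be} the recorder of $u$'s selection, the vertex-men become redundant and you have rediscovered the paper's construction.

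For comparison, the paper's gadget is: vertices are women of type $4$ (preference over man-types $2 \succ 1 \succ 3$), edges are men of type $1$ (preference over woman-types $6 \succ 5 \succ 4$); type $2$ has $r$ men and type $5$ has $\binom{r}{2}$ women (the ``selected'' slots), type $3$ has $n-r$ men and type $6$ has $m-\binom{r}{2}$ women (the ``unselected'' slots), and the edge-man for $e=\{v_j,v_\ell\}$ promotes the two women $v_j,v_\ell$ into a tie placed between the type-$6$ block and the type-$5$ block of his list. Completeness forces all type-$2$/$3$ men onto type-$4$ women and all type-$5$/$6$ women onto type-$1$ men; agents of types $2,3,5,6$ can never block because each is matched to a first-choice type; an edge-man matched into type $6$ and a vertex-woman matched into type $2$ are at their top choices; so the only possible block is a type-$5$-matched edge-man with a type-$3$-matched endpoint, which is exactly the ``selected edge, unselected endpoint'' witness. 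You would need to supply an equally tight case analysis for whatever construction you settle on; as written, neither the construction nor the analysis is present.
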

\begin{proof}
The problem obviously is in \cplxty{NP} as we can easily check in quadratic time whether a given matching is complete and stable. To prove \cplxty{NP}-hardness, we reduce from \textsc{Clique}.
Let $(G=(V,E),r)$ be the input to an instance of \textsc{Clique}, and suppose that $V = \{v_1,\ldots,v_n\}$ and $E = \{e_1,\ldots,e_m\}$.  We will construct an instance $I$ of \textsc{(2-Any)-Exception Typed Max SMTI} so that $G$ contains a clique on $r$ vertices if and only if there is a complete stable matching in $I$. 

Our instance $I$ has six types in total; types 1, 2 and 3 are types of men and types 4, 5 and 6 are types of women.  Type 1 contains one man corresponding to each edge in $E$; abusing notation, we shall write $e_i$ for the man corresponding to the edge $e_i$.  Type 4 contains one woman corresponding to each vertex in $V$, and again we abuse notation to write $v_i$ for the woman corresponding to the vertex $v_i$.  Types 2 and 3 consist of $r$ and $n-r$ men respectively, and types 5 and 6 consist of $\binom{r}{2}$ and $m - \binom{r}{2}$ women respectively.\footnote{If there is a clique on $r$ vertices then there must be at least $\binom{r}{2}$ number of edges, hence $m -\binom{r}{2} \geq 0$. }  The preferences over types for each type is as follows:

\begin{center}
\begin{tabular}{l l l l l l}
Type 1: & 6 5 4 & & Type 4: & 2 1 3\\
Type 2: & 4 & &Type 5: & 1\\
Type 3: & 4 & & Type 6: & 1\\
\end{tabular}
\end{center}

We now describe the exceptions.  Only men of type 1 have any exceptions in their preference lists.  Each man $e_i$ of type 1 considers two women of type 4 to be exceptional, and he ranks these two women in a tie between types 6 and 5.
If $e_i = (v_j,v_{\ell})$ then $e_i$ considers $v_j$ and $v_{\ell}$ exceptional.
This completes the definition of $I$. Clearly $I$ can be constructed from $(G,r)$ in polynomial time.

We claim that $G$ contains a clique on $r$ vertices if and only if $I$ admits a complete stable matching.

$(\Leftarrow$):
We argue that if $(G,r)$ is a yes-instance, then $I$ admits a complete stable matching.  Let $U \subset V$ be a set of $r$ vertices in $G$ which induces a clique.  We construct matching $M$ in $I$ as follows. Each woman in $U$ (who are all of type 4) is matched to a man of type 2. Let $F$ denote the set of $\binom{r}{2}$ edges with both endpoints in $U$. Each man in $F$ (who is of type 1) is matched to a woman of type 5.  The remaining women of type 4 are matched to men of type 3, and the remaining men of type 1 are matched to women of type 6.  $M$ clearly matches all agents. Note that the matched pairs are all of the following four pairs of types: (1,5), (1,6), (2,4), and (3,4). It remains to show that $M$ is stable.

Note that no agent of type 2, 3, 5 or 6 can be involved in a blocking pair: all these agents are matched in $M$, and are indifferent between all candidates they find acceptable. Hence any blocking pair is between a man of type 1 and a woman of type 4. Moreover, no blocking pair can involve a man $m$ of type 1 who is matched with a woman of type 6, or a woman $w$ of type 4 who is matched with a man of type 2:  such agents ($m$ or $w$) are already matched with their first (equal) choice of partner.  Therefore, if there is a blocking pair, it is of the form $(e,v)$ where $e$ is a man of type 1, $v$ is a woman of type 4, $M(e)$ is of type 5 and $M(v)$ is of type 3.  By construction of $M$, as $M(e)$ is of type 5, we know that $e \in F$. Thus, the two women that $e$ finds exceptional both belong to $U$ and so are matched to men of type 2. It follows that $e$ does not find $v$ exceptional.  Thus $e$ does not prefer $v$ (of type 4) to his current partner of type 5, and so $(v,e)$ cannot form a blocking pair.  We can therefore conclude that $M$ is indeed stable.

$(\Rightarrow$):
Conversely, we show that if there is a complete stable matching $M$ in $I$, then $(G,r)$ is a yes-instance. Since $M$ matches every agent, 
all men of types 2 and 3 must be matched to women of type 4 (i.e. women corresponding to vertices), and all women of types 5 and 6 must be matched to men of type 1 (i.e. men corresponding to edge). The numbers of agents in each type mean that in a complete matching no man corresponding to an edge can be matched to a woman corresponding to a vertex.
Let $W$ be the set of $r$ women who are matched to men of type 2. We will argue that $W$ must induce a clique in $G$.  

Suppose, for a contradiction, that $W$ does not induce a clique.  This means that there are at most $\binom{r}{2} - 1$ edges of $G$ with both endpoints in $W$. In particular, there is woman of type 5 who is matched to a man of type 1, corresponding to some edge $e$ whose endpoints are not both in $W$.  Suppose, without loss of generality, that $v$ is an endpoint of $e$ that is not in $U$.  Since $v \notin W$, we know that $M(v)$ is of type 3. This means that $v$ strictly prefers $e$ (of type 1) to her current partner.  Moreover, since $v$ is an endpoint of $e$, she is considered exceptional by $e$ and so $e$ prefers $v$ to any woman of type 5, including $M(e)$. Thus, $(v,e)$ is a blocking pair, contradicting the assumption that $M$ is a stable matching.  We therefore conclude that $W$ must induce a clique and so $(G,r)$ is a yes-instance.
\end{proof}

The next result then follows immediately.

\begin{corollary}\label{2-exception-max-smti-hard}
\textsc{(2,Any)-Exception Typed Max SMTI} is \cplxty{NP}-hard, even if only men have exceptions in their preference lists, preferences over types are strict, and there are three types each of men and women.
\end{corollary}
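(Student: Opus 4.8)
The plan is to obtain this statement as an immediate consequence of Theorem~\ref{2-exception-com-smti-hard}, together with one elementary observation about the instances produced by the reduction in its proof. Recall that \textsc{COM SMTI} asks whether a given instance of SMTI admits a \emph{complete} stable matching, i.e.\ a stable matching that leaves no agent unmatched. The first thing I would check is that the instance $I$ built in the proof of Theorem~\ref{2-exception-com-smti-hard} has balanced sides: summing the type sizes, there are $m + r + (n-r) = m+n$ men and $n + \binom{r}{2} + \bigl(m - \binom{r}{2}\bigr) = m+n$ women, where $n = |V|$ and $m = |E|$ (this is essentially the count already used implicitly in that proof). Consequently a matching in $I$ is complete if and only if it has cardinality $m+n$, and this is the largest conceivable size of any matching in $I$.

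It then follows that a maximum cardinality stable matching in $I$ has size $m+n$ exactly when $I$ admits a complete stable matching. Hence any algorithm solving \textsc{(2,Any)-Exception Typed Max SMTI} --- equivalently, any procedure deciding, for a $(2,\mathrm{Any})$-exception-typed instance of SMTI and a threshold $t$, whether a stable matching of size at least $t$ exists --- can be invoked on $I$ with $t = m+n$ to decide \textsc{(2,Any)-Exception Typed COM SMTI} on precisely the family of instances arising from the reduction in Theorem~\ref{2-exception-com-smti-hard}. Since that reduction runs in polynomial time and its output always satisfies the stated restrictions (only men have exceptions, preferences over types are strict, three types of men and three types of women), and since \textsc{COM SMTI} is \cplxty{NP}-complete on this family by Theorem~\ref{2-exception-com-smti-hard}, we conclude that \textsc{(2,Any)-Exception Typed Max SMTI} is \cplxty{NP}-hard even under these restrictions.

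There is essentially no obstacle here: all of the combinatorial content already lives in Theorem~\ref{2-exception-com-smti-hard}, and the only thing that needs checking is the trivial bookkeeping that the constructed instance has equally many men and women, so that ``complete'' and ``maximum cardinality'' coincide. (Equivalently, one can view the reduction purely at the level of decision problems: \textsc{COM SMTI} is literally the special case of \textsc{Max SMTI} in which the cardinality threshold equals the number of agents on each side, and the balanced-sides observation is exactly what guarantees this special case is non-degenerate for the instances in question.)
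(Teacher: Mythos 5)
Your proposal is correct and matches the paper's (implicit) argument: the paper derives the corollary immediately from Theorem~\ref{2-exception-com-smti-hard} on the grounds that deciding the existence of a complete stable matching reduces trivially to computing a maximum cardinality stable matching. Your explicit check that the constructed instance has $m+n$ agents on each side, so that ``complete'' coincides with ``cardinality $m+n$,'' is exactly the small bookkeeping step the paper leaves unstated.
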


\section{Hospitals$/$Residents problem with Couples}\label{sec:max-hrc}

The job market for medical residents underwent a change since mid 1970's, due to married couples seeking posts in nearby hospitals. Subsequently, the central matching systems had to be adapted to take into account \emph{couple's preferences}, as otherwise couples would seek to arrange their own matches outside the centralized clearinghouse. The \emph{Hospitals$/$Residents problem with Couples (HRC)} models such settings, where couples can send in preferences over pairs of hospitals. The size of a given matching in an instance of HRC is equal to the number of residents matched in the matching.

\begin{definition}\label{def:stable-hrc}
\textsc{Stable HRC} is the problem of deciding whether an instance of HRC admits a stable matching.
\end{definition}

\begin{definition}\label{def:max-hrc}
\textsc{Max HRC} is the problem of identifying a maximum cardinality stable matching in an HRC instance, or reporting that none exists.
\end{definition}

\textsc{Stable HRC} is NP-complete, and as a corollary so is \textsc{Max HRC}. The result holds even if each hospital has capacity 1 and there are no single residents \cite{Ron90}. The parameterized complexity of \textsc{Stable HRC} has been studied in \cite{MS11,BIS11}, where \cite{MS11} also studies \textsc{Max HRC}. \textsc{Stable HRC} is W[1]-hard when the problem is parametrized by the number of couples \cite{MS11}. The result holds even if each hospital has capacity 1. However, \textsc{Stable HRC} belongs to FPT when the problem is parametrized by the number of couples in the instance and the hospitals' list are derived from a strictly-ordered master list of residents \cite{BIS11}.

In the remainder of this section, we first provide a stability definition for instances of HRC, then extend our \textsc{Typed} and \textsc{Refined-typed} models to such instances, and conclude with presenting an FPT algorithm for \textsc{Typed Max HRC}.

\subsection{Stability in HRC}
In an instance of HRC, the set of residents $R$ includes an even size subset $R'$ consisting of those residents who belong to couples, where every resident in $R'$ belongs to exactly one couple. 
Let $C$ denote the set of ordered pairs $(r_i,r_j)$ where $r_i$ and $r_j$ form a couple. 
Single residents and hospitals find a subsets of candidates acceptable and rank them in strict order of preference. Every couple $(r_i,r_j) \in C$ has a joint strict preference ordering over an acceptable subset of ordered hospital pairs. 

Different stability definitions for an instance of HRC have been provided in the literature, most of them distinct from one another (see, e.g., \cite{BIS11,KPR13,MS11,MM10,Can04,DM97,KKN09,Rot84}). In this paper we adopt the definition of \cite{MM10} and provide it as it has been given in \cite{Man13}. We say that a hospital $h_j$ is undersubscribed if $|M(h_j)| < q(h_j)$ . Given an instance of HRC, a matching $M$ is stable if it admits no blocking pair, where a blocking pair satisfies one of the following properties:
\begin{enumerate}
	\item it involves a single resident $r_i$ and a hospital $h_j$ where (a) $r_i$ prefers $h_j$ to $M(r_i)$ and (b) $h_j$ is undersubscribed or prefers $r_i$ to its worst assigned resident in $M$;
    \item it involves a couple $(r_i,r_j) \in C$  and a hospital $h_k$ such that either 
    	\begin{enumerate}[(a)]
        	\item $(r_j,r_j)$ prefers $(h_k,M(r_j))$ to $(M(r_j),M(r_k))$, and $h_k$ is undersubscribed or prefers $r_i$ to a resident in $M(h_k) \setminus\{r_j\}$; or 
            \item $(r_j,r_j)$ prefers $(M(r_i), h_k)$ to $(M(r_j),M(r_k))$, and $h_k$ is undersubscribed or prefers $r_j$ to a resident in $M(h_k) \setminus\{r_i\}$;
        \end{enumerate}
    \item it involves a couple $(r_i,r_j) \in C$ and a pair of (not necessarily distinct) hospitals $h_k,h_{\ell}$ such that $h_k \neq M(r_i)$, $h_{\ell} \neq M(r_j)$, $(r_j,r_j)$ prefers $(h_k,h_{\ell})$ to $(M(r_j),M(r_k))$, and either
    \begin{enumerate}[(a)]
    	\item $h_k \neq h_{\ell}$, and $h_k$ (respectively $h_{\ell}$) is either undersubscribed or prefers $r_i$ (respectively $r_j$) to at least one of its assigned residents; or
        \item $h_k = h_{\ell}$ and $q(h_k) - |M(h_k)| \geq 2$; or
        \item $h_k = h_{\ell}$ and $q(h_k) - |M(h_k)| \geq 1$ and $h_k$ prefers at least one of $r_i, r_j$ to one of its assigned residents; or
        \item $h_k = h_{\ell}$, $q(h_k) = |M(h_k)|$, $h_k$ prefers $r_i$ to some resident $r_s \in M(h_k)$, and prefers $r_j$ to some resident in $M(h_k)$ distinct from $r_s$.
    \end{enumerate}
\end{enumerate}

\subsection{\textsc{Typed} Instances}
In an instance of HRC, each type $i$ that is associated with a single resident or a hospital has a preference ordering over types of the candidates, similarly to that defined earlier for instances of HRT. We assume, without loss of generality, that each pair of residents that form a couple are ordered such that the type of the first one is no larger than the type of the second one. That is, for each $(x,y) \in C$ where $x$ is of type $i$ and $y$ is of type $j$, it is the case that $i\leq j$.
Each pair of types $(i,j)$ (with $i\leq j$) associated with a couple  
has a joint preference ordering over ordered pairs of ``hospital'' types. 

The same conditions outlined in Section \ref{sec:typed} apply to single residents and hospitals. That is, (1) every two single residents or hospitals of the same type have identical preference lists, and (2) all single residents are indifferent between hospitals of the same type, and all hospitals are indifferent between residents of the same type.

These notions extend naturally to the preference lists of couples: 
\begin{enumerate}
\item if $(x,y)$ and $(x',y')$ are couples of the same type $(i,j)$ then $(x,y)$ and $(x',y')$ have identical preference lists, and
\item all couples are indifferent between $(h_s,h_t)$ and $(h_k,h_{\ell})$ if $h_s$ and $h_k$ are hospitals of the same type, say $i$, and $h_t$ and $h_{\ell}$ are hospitals of the same type, say $j$.
\end{enumerate}

\subsection{An FPT algorithm for \textsc{Typed MAX HRC}}
We adapt the method used to prove 
Theorem \ref{cor:typed-maxsrti-fpt} in order to show that \textsc{Typed Max HRC} is in $\FPT$ when parameterised by the number of different types in the instance. The idea, once again, is to (1) consider a number of possibilities for the matching, where this number of possibilities is bounded by a function of $k$, (2) determine for each such possibility whether a matching which meets the conditions will be stable, and then (3) express the maximisation problem associated with each set of stable conditions as an instance of \textsc{Integer Linear Programming}. 

As in previous sections, we assume without loss of generality that there is at least one agent of each type; however we cannot be sure that there will always be a \emph{single} resident of a given type, or a couple of a specific pair of types. Also, as in previous sections, we assume that every agent is matched by creating sufficiently many dummy agents of type $k+1$ which are inserted at the end of each single resident's and hospital's (possibly incomplete) preference list; we insert $(k+1,k+1)$ at the end of each couple's preference list.

We start by providing conditions that are necessary and sufficient for a matching to be stable in a given typed instance of HRC. Given a matching $M$ (in which every agent is matched, perhaps to a dummy agent), we define three functions $\worst_M$, $\secondworst_M$ and $\assigned_M$. The function $\worst_M$ is defined as follows:
\begin{itemize}
\item for any resident type $i$ of which there is at least one single agent, $\worst_M(i)$ is the type of the least desirable hospital (with respect to the preference list for $i$) to which any \emph{single} resident of type $i$ is assigned in $M$;
\item for any pair of resident types $i$ and $j$ (with $i\leq j$) such that at least one couple has type $(i,j)$,
$\worst_M(i,j)$ is the least desirable pair of hospital types (with respect to the joint preference list for $(i,j)$) to which any couple of types $i$ and $j$ are assigned in $M$;
\item for any hospital type $p$, $\worst_M(p)$ is the type of the least desirable resident (with respect to the preference list for $p$) assigned to a hospital of type $p$ in $M$.
\end{itemize}
The function $\secondworst_M$ is only defined for hospital types where the total capacity of hospitals of the type is at least two: $\secondworst_M(p)$ is the type of the second least desirable resident assigned to any hospital of type $p$ in $M$.  The boolean function $\assigned_M$ is defined for each combination of a pair of resident types and a pair of hospital types: $\assigned_M((i,j),(\ell,p)) = 1$ if and only if there is at least one couple involving residents of type $i$ and $j$ who are assigned to hospitals of types $\ell$ and $p$ respectively. 

For the remainder of this section, we assume that types $1$ to $k'$ are types of residents, and types $k'+1$ to $k$ are types of hospitals. Let $(k':k]$ denote the set $\{k'+1,\ldots k\}$. Let $|N_i|$ (with $i\in[k']$) denote the number of single residents of type $i$, $|N_p|$ (with $p \in (k':k]$) denote the total capacity of hospitals of type $p$, and $|N_{i,j}|$ (with $i,j \in [k']$ where $i\leq j$) the number of couples where the first resident is of type $i$ and the second one is of type $j$.

It is straightforward to prove the following lemma using a similar approach as in the proof of \ref{stability-test-srti} and the definition of a stable matching for an instance of HRC.

\begin{lemma}\label{hrc-stability-test}
Let $I$ be a typed instance of HRC. Then a matching $M$ is stable if and only if none of the following holds:
\begin{enumerate}
	\item There is a pair of types $(i,p)$, $i \in [k'], p \in (k':k]$ such that $|N_i|>0$, $p \pref_i \worst_M(i)$ and $i \pref_p \worst_M(p)$ .
	\item There is a couple-type $(i,j)$ with $|N_{i,j}| > 0$ and a hospital type $\ell$ such that either
	\begin{enumerate}
		\item $i \succ_{\ell} \worst_M(\ell)$ and there is a pair of hospital types $(p,q)$ such that $\assigned_M((i,j),(p,q)) = 1$ and $(\ell, q) \pref_{i,j} (p,q)$, or 
		\item $j \succ_{\ell} \worst_M(\ell)$ and there is a pair of hospital types $(p,q)$ such that $\assigned_M((i,j),(p,q)) = 1$ and $(p,\ell) \pref_{i,j} (p, q)$.
	\end{enumerate}
	\item There is a couple-type $(i,j)$ with $|N_{i,j}| > 0$ and two hospital types $\ell$ and $p$ (where $\ell \neq p$) such that (i) $(\ell,p) \pref_{i,j} \worst_M(i,j)$, (ii) $i \succ_{\ell} \worst_M(\ell)$, and (iii) $j \succ_{p} \worst_M(p)$;
\item There is a couple-type $(i,j)$ with $|N_{i,j}| > 0$ and a hospital type $\ell$ such that (i) $(\ell,\ell) \pref_{i,j}  \worst_M(i,j)$, (ii) $i,j \succ_{\ell} \worst_M(\ell)$, and either
	\begin{enumerate}
		\item $i \succ_{\ell} \secondworst(\ell)$, or
		\item $j \succ_{\ell} \secondworst(\ell)$.
	\end{enumerate}
\end{enumerate}
\end{lemma}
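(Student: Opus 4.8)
The plan is to prove both implications by unpacking the HRC blocking-pair definition, exactly in the style of the proof of Lemma~\ref{stability-test-srti}, exploiting that agents sharing a type are mutually indistinguishable: whether a given (resident, hospital) or (couple, hospital(s)) tuple blocks $M$ depends only on the types of the agents involved and the types of the partners they currently hold, and the ``worst acceptable thresholds'' relevant to stability are captured precisely by $\worst_M$, $\secondworst_M$ and $\assigned_M$. The four numbered conditions are in bijection with the three kinds of blocking pair in the HRC stability definition: condition~1 with a single resident together with a hospital; condition~2 with a couple together with a single hospital only one member of which wishes to move; and conditions~3 and~4 with a couple together with a pair of hospitals, according to whether those two hospitals have distinct types or the same type.

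\textbf{Forward direction.} Suppose $M$ is not stable and fix a blocking pair of the HRC definition. If it is a single resident $r$ of type $i$ with a hospital $h$ of type $p$, then $p \pref_i \type(M(r)) \weaklypref_i \worst_M(i)$ gives $p \pref_i \worst_M(i)$, while $h$ being undersubscribed forces $\worst_M(p)=k+1 \prec_p i$ and $h$ preferring $r$ to its worst assigned resident gives $i \pref_p \worst_M(p)$ directly; either way condition~1 holds. If the blocking pair involves a couple $(r_i,r_j)$ of type $(i,j)$ with a single target hospital into which, say, only $r_i$ moves, then the types of the couple's current hospitals witness $\assigned_M((i,j),(p,q))=1$ for the appropriate $(p,q)$, the couple's improvement becomes $(\ell,q)\pref_{i,j}(p,q)$, and the condition on the target hospital $h$ of type $\ell$ — undersubscribed, or preferring $r_i$ to an assigned resident other than $r_j$ — becomes $i \pref_\ell \worst_M(\ell)$; this is condition~2. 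The case of a couple with two target hospitals is handled analogously, splitting on whether the two hospitals have the same type, with the sub-cases of the HRC definition (two free slots; one free slot plus one displaced resident; two displaced residents) translating into statements about $\worst_M$ and $\secondworst_M$ of the relevant hospital type(s); this yields condition~3 or condition~4.

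\textbf{Reverse direction.} Suppose one of the four conditions holds; we exhibit a blocking pair. For condition~1, since $|N_i|>0$ we pick a single resident $x$ of type $i$ whose assigned hospital has type $\worst_M(i)$, together with a hospital $h$ of type $p$ whose worst assigned resident has type $\worst_M(p)$, or an undersubscribed hospital of type $p$ when $\worst_M(p)=k+1$; then $p \pref_i \worst_M(i)=\type(M(x))$ and $i \pref_p \worst_M(p)$ (with $h$ undersubscribed in the dummy case) show that $(x,h)$ blocks. For the couple conditions we choose a couple of type $(i,j)$ realising the relevant value of $\assigned_M$, together with hospital(s) of the relevant type(s) attaining $\worst_M$ (and, for condition~4, $\secondworst_M$) among their assigned residents, or undersubscribed hospitals when the threshold value is $k+1$; in each case one checks directly that the chosen tuple satisfies the matching sub-case of the HRC blocking definition.

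\textbf{Main obstacle.} The delicate part is the couple-with-one-hospital case (condition~4): here one must line up the single threshold hypotheses $i,j \pref_\ell \worst_M(\ell)$ together with $i \pref_\ell \secondworst_M(\ell)$ (or its $j$-analogue) against the four sub-cases of kind-3 blocking, carefully tracking hospital capacities, the requirement that the chosen target hospitals differ from the couple's current hospitals, and — crucially — the requirement in the last sub-case that the two residents to whom a single hospital prefers $r_i$ and $r_j$ be \emph{distinct}. It is exactly this distinctness requirement that forces the use of $\secondworst_M$ rather than $\worst_M$ alone, since the value $\worst_M(\ell)$ may be attained by only one assigned resident. Once this bookkeeping is in place, the remaining verifications are routine and follow the template of Lemma~\ref{stability-test-srti}.
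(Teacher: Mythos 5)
Your proposal is correct and follows exactly the route the paper intends: the paper gives no written proof of this lemma, remarking only that it is ``straightforward to prove using a similar approach as in the proof of Lemma~\ref{stability-test-srti}'', and your case analysis pairing the four conditions with the three kinds of HRC blocking pair (splitting the couple-with-two-hospitals kind according to whether the two hospital types coincide) is precisely that approach. Your observation that the distinctness requirement in the last sub-case of kind-3 blocking is what forces the use of $\secondworst_M$ rather than $\worst_M$ alone is the right key point, mirroring the role of $\secondworst_M$ in the SRTI argument.
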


\begin{theorem}
\textsc{Typed Max HRC} is in $\FPT$ when parameterised by the number $k$ of different types in the instance.
\end{theorem}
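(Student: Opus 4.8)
The plan is to reuse, more or less verbatim, the three‑step template that was used for \textsc{Typed Max SRTI} in Section~\ref{sec:typed-maxsrti}: (1) enumerate a collection of ``candidate descriptions'' of a matching whose size is bounded by a function of $k$, (2) use Lemma~\ref{hrc-stability-test} to discard those candidates that cannot correspond to a stable matching, and (3) for each surviving candidate solve an instance of \textsc{Integer Linear Programming} to compute the largest matching consistent with that description. A candidate here is a tuple $T=(\worst,\secondworst,\assigned)$, where $\worst$ assigns to each single‑resident type a hospital type (or the dummy type) and to each hospital type a resident type (or dummy), $\secondworst$ assigns to each hospital type of total capacity at least two a resident type (or dummy), and $\assigned$ is a Boolean function on pairs (couple‑type, ordered pair of hospital types); the value $\worst_M(i,j)$ for a couple type is \emph{derived} from $\assigned$ as the least desirable hospital‑pair type on which $\assigned=1$. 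The number of such tuples is $(k+O(1))^{O(k)}\cdot 2^{O(k^4)}=2^{O(k^4)}$, a function of $k$ alone, and for each of them one can check in time polynomial in $k$ whether substituting $T$ into the four conditions of Lemma~\ref{hrc-stability-test} leaves all of them unsatisfied; call such a $T$ \emph{$I$-stable}.

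Next I would prove the HRC analogue of Corollary~\ref{lma:max-I-stable}: the size of a maximum stable matching equals the maximum, over $I$-stable tuples $T$, of the size of the largest matching that \emph{realises} $T$ — meaning that $\worst_M(\cdot)$ and $\secondworst_M(\cdot)$ are no worse, with respect to the relevant preference order, than the corresponding values of $T$, and $\assigned_M\le\assigned$ pointwise. One direction is immediate, since a stable matching $M$ realises its own tuple and that tuple is $I$-stable by Lemma~\ref{hrc-stability-test}. For the converse one checks that if $M$ realises an $I$-stable $T$ then $M$ is itself stable: were any of the four blocking conditions of Lemma~\ref{hrc-stability-test} to hold for $M$, it would compose with the hypotheses — using $j\succ_a\worst_M(a)\weaklypref_a\worst(a)\Rightarrow j\succ_a\worst(a)$ (and likewise for $\secondworst$), and using $\assigned_M\le\assigned$ to transfer the existentially‑quantified parts of conditions 2--4 — to give a violation of $I$-stability for $T$, a contradiction.

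It then remains to compute, for a fixed $I$-stable $T$, the size of the largest matching realising $T$; this I would do via an ILP with variables $x_{i,p}$ (the number of single residents of type $i$ assigned to hospitals of type $p$) and $y_{(i,j),(\ell,p)}$ (the number of couples of type $(i,j)$ assigned to hospital types $(\ell,p)$), where $p,\ell$ range over the real hospital types and the dummy type; this gives $O(k^4)$ variables and $O(k^4)$ constraints. The constraints state: every single resident and every couple is assigned (equalities $\sum_p x_{i,p}=|N_i|$ and $\sum_{(\ell,p)}y_{(i,j),(\ell,p)}=|N_{i,j}|$); the residents sent to each hospital type do not exceed its total capacity, a couple placed at $(\ell,\ell)$ contributing $2$; no single resident, hospital or couple receives a partner worse than its $T$-value (forcing the offending variables to $0$); at most one resident worse than $\secondworst(p)$ reaches hospital type $p$; and $y_{(i,j),(\ell,p)}=0$ whenever $\assigned((i,j),(\ell,p))=0$. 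The objective maximises the number of matched residents, $\sum x_{i,p}$ over real $p$ plus the matching contribution of the $y$-variables. By Theorem~\ref{thm:ILP} each ILP is solved in FPT time, and since there are $2^{O(k^4)}$ tuples the whole procedure is FPT; an optimal matching itself is recovered greedily from the optimal $x,y$ counts.

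The step I expect to be the main obstacle is twofold. First, verifying the equivalence above in the presence of couples: Lemma~\ref{hrc-stability-test} has three couple‑related conditions (items 2--4), each with an internal case split, so one must check carefully that realising an $I$-stable $T$ rules out every one of them, and that the ILP constraints faithfully encode ``realises $T$'' — in particular that the derived quantity $\worst_M(i,j)$ is correctly captured by the $y$-to-$0$ constraints coming from $\assigned$. Second, the realizability step: the ILP reasons with counts per hospital \emph{type}, so one needs a short greedy (or flow) argument that every count vector respecting the per‑type capacities can be realised as an actual HRC matching — this requires a little care because hospitals within a type may have different capacities, and because a couple ``assigned to hospital types $(\ell,\ell)$'' could a priori occupy a single hospital or two distinct hospitals of type $\ell$, cases that conditions 3 and 4 of the stability definition treat differently; one either argues it is without loss of generality to place such a couple in one hospital, or carries this finer distinction through the ILP.
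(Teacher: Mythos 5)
Your proposal follows essentially the same route as the paper: enumerate the trio $(\worst,\secondworst,\assigned)$, filter by Lemma~\ref{hrc-stability-test}, and solve one ILP per surviving candidate with variables counting single residents per (resident type, hospital type) and couples per (couple type, hospital-type pair). The only substantive deviations are cosmetic (you derive the couples' $\worst$ value from $\assigned$ rather than guessing it separately), and the realizability subtleties you flag at the end (heterogeneous hospital capacities within a type, and couples assigned to $(\ell,\ell)$ occupying one versus two hospitals) are genuine points that the paper's own proof also leaves implicit.
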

\begin{proof}
Lemma ~\ref{hrc-stability-test} gives a necessary and sufficient condition for a matching realising the functions $\worst$, $\secondworst$ and $\assigned$ to be stable. We consider each of the feasible possibilities for the functions $\worst$, $\secondworst$ and $\assigned$ in turn and determine, using Lemma ~\ref{hrc-stability-test}, whether a matching realising them will be stable; this can be done for any set of candidate functions in time $\mathcal{O}(k^3)$.

For any trio of feasible functions $\worst$, $\secondworst$ and $\assigned$ that give rise to a stable matching (if any), we need to determine the maximum number of residents that can be matched in any matching that realises the trio, and we do this (as in the rest of the paper) by solving an instance of \textsc{Integer Linear Programming}. 

We define variables $n_{i,p}$ and $n_{(i,j),(p,q)}$ as follows. 
For each pair of types $(i,p)$, $i \in [k'], p \in (k':k]$, we have a variable $n_{i,p}$ which denotes the number of \emph{single} residents of type $i$ matched to hospitals of type $p$.  Additionally, for each combination of a pair of resident types $(i,j)$ (with $i \leq j$) and a pair of hospital types $(p,q)$ (where $p$ and $q$ may or may not be distinct), we have a variable $n_{(i,j),(p,q)}$ which denotes the number of couples consisting of residents of types $i$ and $j$ respectively such that the resident of type $i$ is assigned to a hospital of type $p$ and the resident of type $j$ is assigned to a hospital of type $q$. Our objective is to maximise the size of the matching hence to maximise:
\begin{align*}
~ ~ \sum_{i \leq k', k' < p \leq k} n_{i,p} + \sum_{i,j \leq k', k'< p,q \leq k} 2\cdot n_{(i,j),(p,q)} 
\end{align*}
Our first set of constraints ensures that every agent is involved in exactly one pair. The first constraint handles single agents, the second one the hospitals, and the last one couples.
\begin{align*}
~ &~ ~ \sum_{k'< p \leq k} n_{i,p} = |N_i|, & ~ \forall i \in [k'] \\
~ &~ ~ \sum_{i \leq k'} n_{i,p} + \sum_{i,j \leq k', k'< q \leq k}(n_{(i,j),(p,q)} + n_{(i,j),(q,p)})= |N_p|, & ~ \forall p \in (k':k] \\
~ &~ ~ \sum_{k'< p,q \leq k} n_{(i,j),(p,q)} = |N_{i,j}|, & ~ \forall i, j: i\leq j \leq k'\\
\end{align*}

The next set of constraints ensures that function $\worst$ complies with its definition.
\begin{align*}
~ & \sum_{p\weaklypref_i \worst(i)} n_{i,p} = |N_i|, & ~ \forall i \in [k']: |N_i| >0 \\[6pt]
~  & n_{i,\worst(i)} > 0, & ~ \forall i \in [k']: |N_i| >0 \\[12pt]
~ & \sum_{i \weaklypref_p \worst(p)} n_{i,p} + \sum_{\substack{i,j \weaklypref_p \worst(p) \\ k'< q \leq k}} (n_{(i,j),(p,q)} + n_{(i,j),(q,p)}) = |N_p|, & ~ \forall p \in (k':k] \\[6pt]
~  & n_{\worst(p),p} > 0, & ~ \forall p \in (k':k] \\[12pt]
~ & \sum_{(p,q) \weaklypref_{(i,j)} \worst(i,j)} n_{(i,j),(p,q)} = |N_{i,j}|, & ~ \forall i, j: i\leq j \leq k': |N_{i,j}| >0\\[6pt]
~  & n_{(i,j),\worst(i,j)} > 0, & ~ \forall i,j: i\leq j \leq k': |N_{i,j}| >0 \\
\end{align*}

The following set of constraints ensures that function $\secondworst$ for the hospitals complies with its definition.
\begin{align*}
~ & \sum_{\secondworst(p) \pref_p i \pref_p \worst(p)} n_{i,p} = 0, & ~ \forall p \in (k':k]: |N_p| >1 \\
~  & n_{\worst(p),p} + n_{\secondworst(p),p} > 1, & ~ \forall p \in (k':k]: |N_p| >1 \\
~ & n_{\worst(p),p} > 1 & ~ \forall p \in (k':k]: |N_p| >1 \text{ and } \secondworst(p) = \worst(p)
\end{align*}

And finally, the last set of constraints ensure that boolean variables $\assigned$ are set correctly.
\begin{align*}
~ & n_{(i,j),(p,q)} > 0, & ~ \forall i,j \in [k'] \forall p,q \in (k',k] \text{ such that } \assigned((i,j)(p,q)) = 1\\
\end{align*}

The above integer linear program has $\mathcal{O}(k^4)$ variables and $\mathcal{O}(k^2)$ constraints. The upper bound on the absolute value a variable can take is $n$. Therefore, by Theorem \ref{thm:ILP}, this maximisation problem for any trio of candidate functions $\worst$, $\secondworst$ and $\assigned$ can be solved in time $2^{\mathcal{O}(k^2)}\log^3 n$.
\end{proof}

\section{Summary and Future Work}\label{sec:future-work}

We studied settings in which agents are partitioned into $k$ different types, 
and the type of an agent determines his or her preferences, as well as how s/he is compared against other agents. We considered a basic setting and two generalisations.
In the basic model, referred to as typed, agents have preferences over types and agents of the same types have identical preferences. In the first generalisation, referred to as consistently-refined-typed, agents' preferences may be refined by more detailed preferences within a single type, subject to the requirement that agents of the same type still have identical preference lists.  
In the second generalisation, referred to as exception-typed, each agent may regard some small collection of other agents to be exceptional and rank them without regard to their types.

If $k$ is considered to be part of the input, then our models do not place any restrictions on preference lists; we can create a type for each agent and hence let $k=n$. 
Thus we can deduce that \textsc{Max SMTI}, \textsc{Max HRT}, \textsc{Max SRT}, \textsc{Max Size Min BP/BA SMI}, \textsc{Max Size Min BP/BA SRI}, and \textsc{Min BP SR}
are all \cplxty{NP}-complete, in either of our three settings, when $k$ is part of the input.

We are interested in, and have argued for the existence of, scenarios where $k$ is not part of the input and is much smaller than $n$.
Under both typed and consistently-refined-typed settings, we showed that
\textsc{Max SMTI}, \textsc{Max HRT}, \textsc{Max SRTI}, \textsc{Max Size Min BP/BA SMTI}, \textsc{Max Size Min BP/BA SRTI}, and \textsc{Min BP SRTI}
belong to the parameterised complexity class FPT when parameterised by the number of different types of agents, and so admit efficient algorithms when this number of types is small. We were further able to prove that \textsc{Max SMTI}, \textsc{Max HRT} and \textsc{Max SRTI} are polynomial-time solvable when agents have strict preferences over types. 
Additionally, we were able to show that, under the typed setting, \textsc{Max HRC} is in FPT parameterised by the number of different types of agents.

Under the exception-typed setting, we showed that if each agent finds one candidate exceptional and promotes that candidate to the top of his/her preference list, then \textsc{Max SMTI} belongs to FPT parameterised by $k$. In contrast, if we allow for each agent to find two or more candidates exceptional who can appear anywhere in the preference lists, then \textsc{Max SMTI} remains \cplxty{NP}-hard, even when the number of types is bounded by a constant.

It would be interesting to investigate what further generalisations of our model yield \cplxty{FPT} algorithms for \cplxty{NP}-hard stable matching problems.  In particular, the complexity of \textsc{(1,Bottom)-Exception-Typed Max SMTI}, \textsc{(1,Any)-Exception-Typed Max SMTI}, and \textsc{(2,Top)-Exception-Typed Max SMTI} remain open, as well as the complexity of \textsc{Max SRTI}, \textsc{Max Size Min BP/BA SMTI}, \textsc{Max Size Min BP/BA SRTI}, and \textsc{Min BP SRTI} under the exception-typed setting. Moreover, we could consider further restrictions with two or more exceptions, for example if an exceptional candidate can only be moved to the top or bottom of its type.  

Another intriguing question would be to understand how the complexity of \textsc{MAX SMTI} and \textsc{Max Size Min BP/BA SMTI} changes when agents on only one side of the market are associated with types. 

\paragraph{Acknowledgements.}

The first author is supported by a Personal Research Fellowship from the Royal Society of Edinburgh (funded by the Scottish Government).  Both authors are extremely grateful to David Manlove for his insightful comments on a preliminary version of this manuscript.

\bibliographystyle{plain}
\bibliography{bibfile}

\end{document}